\newcommand*{\medcap}{\mathbin{\scalebox{1.5}{\ensuremath{\cap}}}}%
\theoremstyle{definition}
\newtheorem{theorem}{Theorem}
\newtheorem{lemma}{Lemma}
\newtheorem{corollary}{Corollary}
\newcommand{\ket}[1]{\left | #1 \right\rangle}
\newcommand{\bra}[1]{\left \langle #1 \right |}
\newcommand{\abs}[1]{\left | #1 \right|}
\renewcommand{\epsilon}{\varepsilon}
\renewcommand{\O}[1]{ O\left(#1\right)}
\newcommand{\norm}[1]{\left\|#1\right\|}
\newcommand{\comm}[1]{\left[#1\right]}
\newcounter{para}
\newcommand*\bigcdot{\mathpalette\bigcdot@{.5}}
\newcommand*\bigcdot@[2]{\mathbin{\vcenter{\hbox{\scalebox{#2}{$\m@th#1\bullet$}}}}}
\newcommand{\ad}{\text{ad}}
\newcolumntype{L}{>{$}l<{$}} % math-mode version of "l" column type
\newcolumntype{C}{>{$}c<{$}} % math-mode version of "c" column type
\newcolumntype{R}{>{$}r<{$}} % math-mode version of "r" column type
\newcommand*{\addFileDependency}[1]{% argument=file name and extension
  \typeout{(#1)}
  \@addtofilelist{#1}
  \IfFileExists{#1}{}{\typeout{No file #1.}}
}
\newcommand{\poly}{\text{poly}}
\newcommand{\Texp}[1]{\mathcal T\exp\left\{#1\right\}}
\newmdenv[topline=false,rightline=false,bottomline=false,linewidth=2pt,linecolor=white!60!black,]{leftborder}
\newcommand{\dt}{{\delta t}}
\newcommand{\kick}{{\text{kick}}}
\newcommand{\Zeno}{{\text{Zeno}}}
\def\bbl@set@language#1{%
  \edef\languagename{%
    \ifnum\escapechar=\expandafter`\string#1\@empty
    \else\string#1\@empty\fi}%
  \@ifundefined{babel@language@alias@\languagename}{}{%
    \edef\languagename{\@nameuse{babel@language@alias@\languagename}}%
  }%
  \select@language{\languagename}%
  \expandafter\ifx\csname date\languagename\endcsname\relax\else
    \if@filesw
      \protected@write\@auxout{}{\string\select@language{\languagename}}%
      \bbl@for\bbl@tempa\BabelContentsFiles{%
        \addtocontents{\bbl@tempa}{\xstring\select@language{\languagename}}}%
      \bbl@usehooks{write}{}%
    \fi
  \fi}
\newcommand{\DeclareLanguageAlias}[2]{%
  \global\@namedef{babel@language@alias@#1}{#2}%
}
\newcommand{\normcomm}[1]{\norm{\comm{#1}}}
\newcommand{\overbar}[1]{\mkern 3mu\overline{\mkern-3mu#1\mkern-1.5mu}\mkern 1.5mu}
\newcommand{\eff}{\text{eff}}
\begin{document}

\title{Faster Digital Quantum Simulation by Symmetry Protection}
\date{\today}

\author{Minh~C.~Tran}
\affiliation{Joint Center for Quantum Information and Computer Science, NIST/University of Maryland, College Park, Maryland 20742, USA}
\affiliation{Joint Quantum Institute, NIST/University of Maryland, College Park, Maryland 20742, USA}
\author{Yuan~Su}
\affiliation{Joint Center for Quantum Information and Computer Science, NIST/University of Maryland, College Park, Maryland 20742, USA}
\affiliation{Department of Computer Science, University of Maryland, College Park, Maryland 20742, USA}
\affiliation{Institute for Advanced Computer Studies, University of Maryland, College Park, Maryland 20742, USA}
\affiliation{Institute for Quantum Information and Matter, California Institute of Technology, Pasadena, California 91125, USA}
\author{Daniel Carney}
\affiliation{Joint Center for Quantum Information and Computer Science, NIST/University of Maryland, College Park, Maryland 20742, USA}
\affiliation{Joint Quantum Institute, NIST/University of Maryland, College Park, Maryland 20742, USA}
\affiliation{Fermi National Accelerator Laboratory, Batavia, Illinois 60510, USA}\
\author{Jacob~M.~Taylor}
\affiliation{Joint Center for Quantum Information and Computer Science, NIST/University of Maryland, College Park, Maryland 20742, USA}
\affiliation{Joint Quantum Institute, NIST/University of Maryland, College Park, Maryland 20742, USA}
\begin{abstract}
Simulating the dynamics of quantum systems is an important application of quantum computers and has seen a variety of implementations on current hardware.
We show that by introducing quantum gates implementing unitary transformations generated by the symmetries of the system, one can induce destructive interference between the errors from different steps of the simulation, effectively giving faster quantum simulation by symmetry protection.
We derive rigorous bounds on the error of a symmetry-protected simulation algorithm and identify conditions for optimal symmetry protection. 
In particular, when the symmetry transformations are chosen as powers of a unitary, the error of the algorithm is approximately projected to the so-called quantum Zeno subspaces. 
We prove a bound on this approximation error, exponentially improving a recent result of Burgarth, Facchi, Gramegna, and Pascazio.
We apply the symmetry protection technique to the simulations of the XXZ Heisenberg interactions with local disorder and the Schwinger model in quantum field theory. 
For both systems, the technique can reduce the simulation error by several orders of magnitude over the unprotected simulation.
Finally, we provide numerical evidence suggesting that the technique can also protect simulation against other types of coherent, temporally correlated errors, such as the $1/f$ noise commonly found in solid-state experiments.
\end{abstract}

\preprint{FERMILAB-PUB-20-240-QIS-T}

\maketitle

\section{Introduction}

Simulating the dynamics of quantum systems is a key application of quantum computers.
However, digitalizing the continuous time evolutions to enable execution on gate-based and other programmable quantum computers comes with simulation errors that cause the dynamics of the  systems to deviate from ideal evolutions.
In particular, the errors may violate the symmetries in the target Hamiltonian for simulation, resulting in unphysical states at the end of the simulations.
This digitalization error particularly affects Trotterization---the most common algorithm for near-term quantum simulations~\cite{blattQuantumSimulationsTrapped2012,monroeProgrammableQuantumSimulations2019,kjaergaardSuperconductingQubitsCurrent2020}---and persists even in more sophisticated, advanced quantum simulation algorithms~\cite{suzukiGeneralTheoryFractal1991,berrySimulatingHamiltonianDynamics2015c,lowHamiltonianSimulationQubitization2019b}. 

In this paper, we propose an approach, using the symmetries of a target Hamiltonian, to protect its simulated dynamics against simulation errors.
Given a simulation algorithm that decomposes the dynamics of the system into many small time steps (e.g., Trotterization), we interweave the simulation with unitary transformations generated by the symmetries of the system (\cref{fig:recipe}).
While these additional unitary transformations increase the gate complexity of the simulation, the error of the simulation can sometimes be reduced by several orders of magnitude, ultimately resulting in a faster quantum simulation.
In addition, depending on the symmetries, the unitary transformations may be implemented using only single-qubit gates, which are considered inexpensive for implementations on near-term quantum computers.

\begin{figure}[b]
\centering
\includegraphics[width=0.45\textwidth]{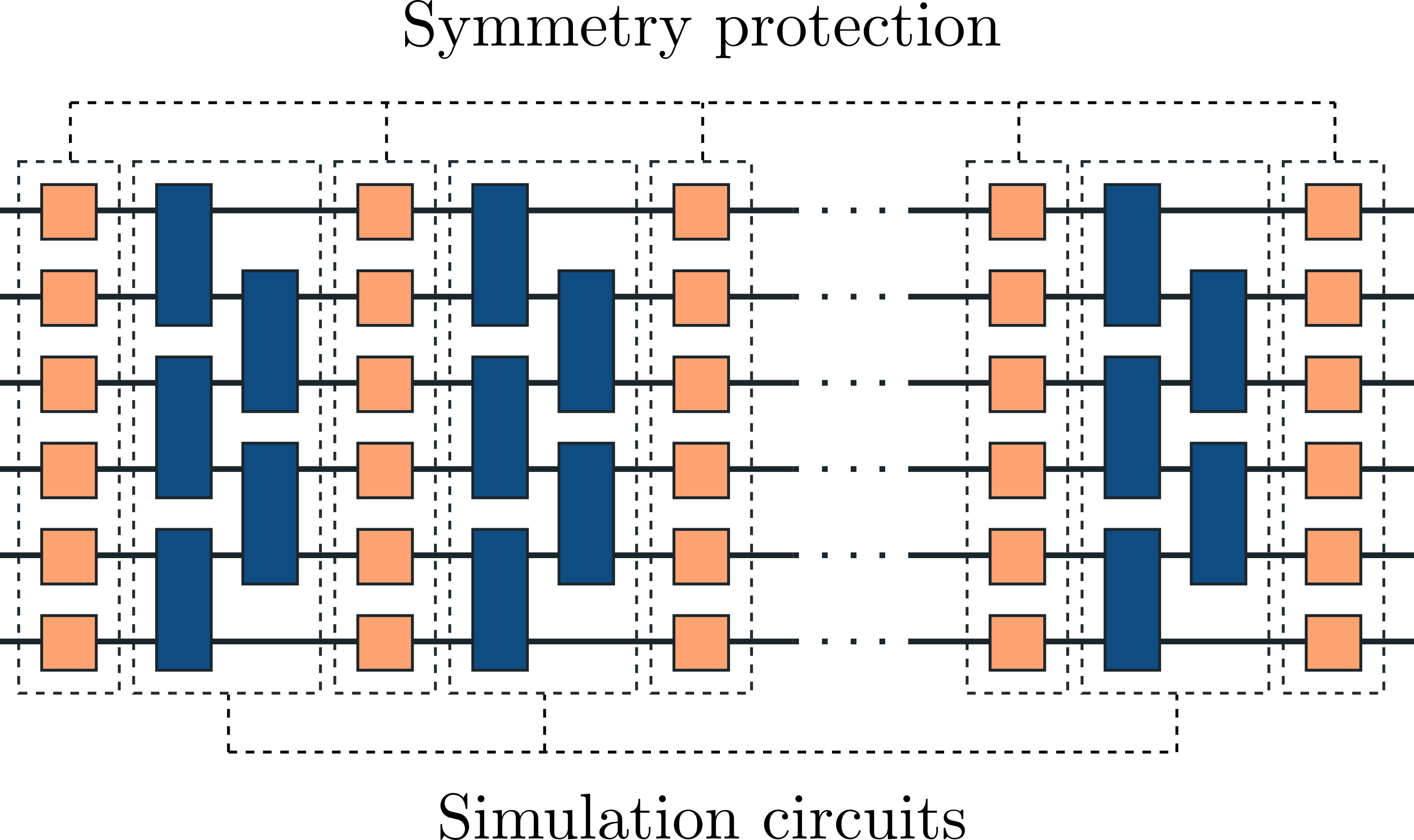}
\caption{For algorithms that simulate the dynamics of quantum systems by decomposing the evolutions into many time steps, we interweave the corresponding simulation circuits (blue) with unitary transformations generated by the symmetries of the systems (orange).
These transformations protect the simulations against errors that violate the symmetries, resulting in faster and more accurate simulations.}
\label{fig:recipe}
\end{figure}

The symmetry protection technique considered in this paper is general and potentially applies to any algorithms that simulate the time evolution of Hamiltonians with symmetries by splitting the evolution into many time segments, including Trotterization and the higher-order product formulas~\cite{suzukiGeneralTheoryFractal1991} and more advanced algorithms such as those based on linear combinations of unitaries~\cite{childsHamiltonianSimulationUsinga,lowWellconditionedMultiproductHamiltonian2019,berrySimulatingHamiltonianDynamics2015c,lowHamiltonianSimulationQubitization2019b}, Lieb-Robinson bounds~\cite{haahQuantumAlgorithmSimulating2018,tranLocalityDigitalQuantum2019}, and randomized compilations~\cite{campbellRandomCompilerFast2019,childsFasterQuantumSimulation2019,campbellRandomCompilerFast2019}.
We also provide evidence that the technique can also protect the simulation against other types of temporally correlated errors, such as the $1/f$ noise commonly found in solid-state devices~\cite{kuhlmannChargeNoiseSpin2013}.

In addition, we draw a connection between the symmetry protection technique and the quantum Zeno effect~\cite{zanardi1999symmetrizing,violaDynamicalDecouplingOpen1999,Facchi04,Khodjasteh08,violaUniversalControlDecoupled1999,ngCombiningDynamicalDecoupling2011,burgarthGeneralizedProductFormulas2019}. 
In particular, the symmetry transformations, when chosen as powers of a unitary, approximately project the error of simulation into the so-called quantum Zeno subspaces, defined by the eigensubspaces of the unitary.
We prove a bound on the accuracy of this approximation, exponentially improving a recent result of Ref.~\cite{burgarthGeneralizedProductFormulas2019}.

The structure of the paper is as follows.
In \cref{sec:general}, we introduce the general technique and provides intuition for the source of error reduction.
In \cref{sec:theory}, we derive a bound on the error of Trotterization under symmetry protection. 
In \cref{sec:app}, we then benchmark the technique in simulating the dynamics of systems with the Heisenberg interactions, including the XXZ Heisenberg model with local disorder that displays a transition between thermalized and many-body localized phases, and in simulating the Schwinger model in the context of lattice field theories.
In particular, we show that interweaving the simulation with random gauge transformations can significantly reduce the probability of a state leaking to outside the physical subspace due to the simulation error, extending the results of Ref.~\cite{stannigelConstrainedDynamicsZeno2014} to digital quantum simulation.
We then demonstrate in \cref{sec:exp} how the technique may protect the simulation against other types of coherent, temporally correlated errors, such as the low-frequency noise typically found in experiments.
Finally, we discuss several open questions in \cref{sec:outlook}.
 
\section{General framework}\label{sec:general}
We consider the task of simulating the time dynamics of a system under a time-independent Hamiltonian $H$.
Let $U_t \equiv \exp(-iHt)$ denote the evolution unitary generated by $H$ for time $t$.
The symmetry protection technique applies to algorithms that simulate $U_t$ by first dividing the evolution into many time steps (also known as \emph{Trotter steps}), and approximate the evolution within each time step by a series of quantum gates. 
Examples of such algorithms include most modern quantum simulation algorithms from the Suzuki-Trotter product formulas~\cite{suzukiGeneralTheoryFractal1991} to algorithms based on linear combinations of unitaries~\cite{childsHamiltonianSimulationUsinga,lowWellconditionedMultiproductHamiltonian2019,berrySimulatingHamiltonianDynamics2015c,lowHamiltonianSimulationQubitization2019b}.
In this paper, we focus our theoretical analysis on the first-order Trotterization algorithm for simplicity~(\cref{sec:theory}) and benchmark the performance of symmetry protection on other algorithms numerically (\cref{sec:qft}).
To be more precise, let $r$ denote the number of steps and $\delta t = t/r$ denote the length of each time step. 
These algorithms then simulate $U_{\delta t}$ by a series of elementary quantum gates $S_{\delta t}$, i.e.
\begin{align} 
	U_{t} = U_{\delta t}^{r} \approx S_{\delta t}^r. \label{eq:rawsim}
\end{align}
The approximation of $U_{\delta t}$ by $S_{\delta t}$ introduces an error that is small for small $\dt$.
However, errors typically accumulate after many Trotter steps, resulting in a total additive error $\norm{U_{t}-S_{\delta t}^r}$ that, in the worse case, scales linearly with the number of Trotter steps $r$ at fixed $\dt$. 
Equivalently, for a fixed total time $t$, to reduce the total error, we would have to decrease the Trotter step size $\dt$, effectively increasing the number of Trotter steps $r$, and thus require more elementary quantum gates to run the simulation.

We refer to the simulation in \cref{eq:rawsim} as the \emph{raw} simulation. 
By exploiting symmetries of the system, we will see that we can substantially reduce the total error $\epsilon$ of the simulation without significantly increasing the gate count, ultimately resulting in faster quantum simulation for the same total error budget. For that, we assume that the Hamiltonian is invariant under a group of unitary transformations, which we denote by $\mathcal S$. Explicitly, we assume that
\begin{align} 
	 \comm{C,H} = 0\quad \forall\ C\in\mathcal S.
\end{align}
The group $\mathcal S$ represents a symmetry of the system.
Instead of simply approximating $U_{\delta t}$ by the circuit $S_{\delta t}$, we ``rotate'' each implementation of $S_{\delta t}$ by a \emph{symmetry transformation} $C_{k}\in \mathcal S$ ($i = k,\dots,r$) so that the approximation in \cref{eq:rawsim} now reads
\begin{align} 
	 U_{t} \approx \prod_{k=1}^r C_k^\dag S_{\delta t} C_k.\label{eq:spsim}
\end{align}
We refer to \cref{eq:spsim} as a \emph{symmetry-protected} (SP) simulation.
The right-hand side in \cref{eq:spsim} represents a circuit that, at first, looks more expensive than \cref{eq:rawsim} due to the additional implementation of the transformations $C_k$. 
However, for the same $r$, the total error in \cref{eq:spsim} could be much smaller than the \cref{eq:rawsim}.
Effectively, to meet the same error tolerance, \cref{eq:spsim} may require a much smaller number of steps $r$, and hence fewer implementations of $S_{\delta t}$, than the raw approximation in \cref{eq:rawsim}.
Moreover, because many symmetries---the gauge symmetries in lattice field theories for example---are spatially local, each $C_k$ only involves a small number of nearest-neighboring qubits and can be implemented easily in most architectures of quantum computers.
Other symmetries, such as the one responsible for the conservation of the total magnetization in the Heisenberg model, are global but may be implemented as a product of only single-qubit gates, which are usually much ``cheaper'' to perform in experiments than their multi-qubit counterparts.

In the remainder of this section, we provide some intuition, using lowest-order arguments, for the error reduction in simulations under symmetry protection.
We later derive rigorous error bounds in \cref{sec:theory}.

\subsection{Lowest-order arguments} \label{sec:Zeno-handwavy}

To build an intuition for the symmetry protection, we consider the {effective Hamiltonian} of the simulation. 
The aim of digital quantum simulation is to simulate the time evolution $e^{-iHt}$ of a Hamiltonian $H$.
Assuming that the simulation errors are coherent, we may end up with the time evolution of a different Hamiltonian, say $H_\eff$, that may be close but not the same as the targeted Hamiltonian $H$:
\begin{align} 
	e^{-iHt} \xrightarrow[]{\text{errors}} e^{-iH_\eff t} = e^{-i(H+V)t}, 
\end{align}
where 
\begin{align} 
	V \equiv  H_\eff - H
\end{align}quantifies the difference between the effective and the desired Hamiltonians~\cite{childsTheoryTrotterError2020}.
We note that the effective Hamiltonian $H_\eff$ typically depends on the time step $\dt$ [See \cref{lem:Htilde}]. 

With $S_\dt = \exp(-iH_\eff \dt)$ in \cref{eq:spsim}, we can rewrite the simulation as
\begin{align} 
	\prod_{k}^r C_k^\dag S_\dt C_k 
	&= \prod_{k=1}^{r}e^{-i C_k^\dag H_\eff C_k \dt} \nonumber\\
	&= \prod_{k=1}^{r}e^{-i (H+C_k^\dag V C_k) \dt}, \label{eq:handwavybeforeBCH}
\end{align}
where we have used the unitarity of $C_k$ to move the unitaries to the exponents and exploited the commutativity $\comm{C_k,H} = 0$ from our assumption to simplify the expression.
Assuming that the error $\norm V$ is small, we can use the Baker-Campbell-Hausdorff (BCH) formula to combine the exponents in \cref{eq:handwavybeforeBCH} (to the leading order):
\begin{align} 
	 \prod_{k=1}^{r}e^{-i (H+C_k^\dag V C_k) \dt}
	 \approx e^{-i \left(H+\frac{1}{r}\sum_{k=1}^r C_k^\dag V C_k\right) t}=e^{-i\overbar{H}_\eff t}.
\end{align}
Compared to the desired evolution $e^{-iHt}$, we can identify the error of the entire simulation (ignoring the error from the BCH approximation for now) as
\begin{align} 
	\overbar V  \equiv \frac{1}{r}\sum_{k=1}^r C_k^\dag V C_k. \label{eq:totalexperror}
\end{align}
Roughly speaking, the error of the entire simulation, given by \cref{eq:totalexperror}, can be interpreted as the average of the error in each step of the simulation.
To illustrate the effect of the symmetry protection, we could imagine $V$ as a vector in the space of operators and $C_k^\dag V C_k$ is a version of the vector rotated around an axis specified by $C_k$.
The total error is then analogous to a walker that, in each step, walks a distance $\norm V$ in the space of operators towards the direction corresponding to~$C_k$~(\cref{fig:walks}). 

\begin{figure}[t]
\centering
\includegraphics[width=0.42\textwidth]{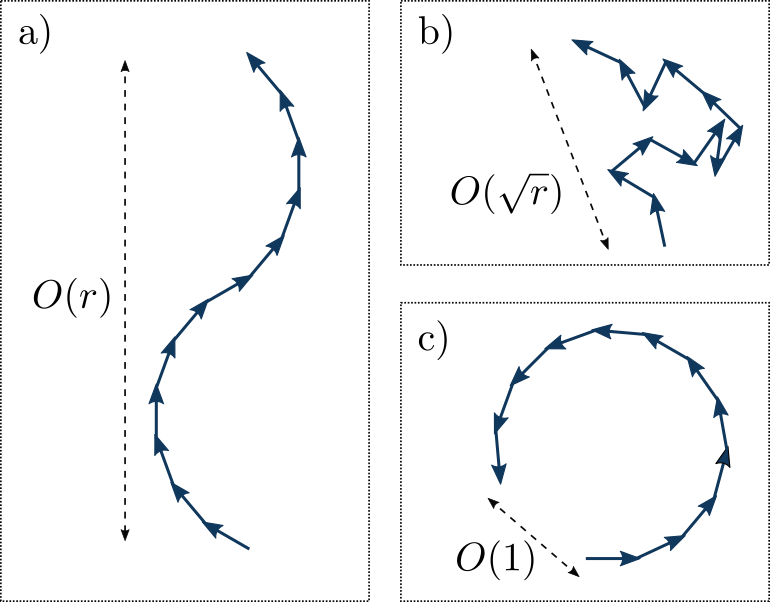}
\caption{The total error of the simulation is analogous to the average distance a walker walks in $r$ steps of the simulation.
In each time step, the walker walks a small distance along a vector representing the error operator in the space of operators.
a) Without any symmetry protection, the walker keeps walking towards almost the same direction, resulting in a total distance that scales linearly with the number of steps $r$, corresponding to the total error scaling as $\O{1}$. 
b) The symmetry transformations make the walker walk in a possibly different direction in every time step. When the direction is uniformly random (see \cref{sec:homo-rand-Heisenberg} and \cref{fig:diff-algo} for an example), the total distance only scales as $\O{\sqrt{r}}$, resulting in the total error scaling as $\O{1/\sqrt{r}}$. 
c) Sometimes, it is possible to design an optimal set of symmetry transformations that makes the walker return to the origin [See \cref{eq:optimalC} for an example], resulting in an $\O{1/r}$ error for the entire simulation.
}
\label{fig:walks}
\end{figure}

Without the symmetry protection (i.e. $C_k = \mathbb I$ for all $k$), the walker keeps walking in the same direction and its total distance after $r$ steps scales as $\O{r}$, resulting in the averaged error $\norm{\overbar V}$ of the same order as~$\norm{V}$.
On the other hand, under the symmetry protection, the walker walks in a possibly different direction in each step, resulting in a smaller total distance (and thus a smaller averaged error.)  

In particular, if the walker in each step walks towards a uniformly random direction in the space of operators (which is sometimes the result of choosing $C_k$ at random), its total distance should only scale as $\O{\sqrt{r}\norm{V}}$ after $r$ steps.
The averaged error $\norm{\overbar V}$ would then scale as $\O{\norm V/\sqrt{r}}$, decreasing with the number of steps of the simulation.
Additionally, if we could design a set of optimal symmetry transformations that makes the walker return to the origin after a fixed number of steps, we would end up with a total distance that does not increase with $r$ and an averaged error $\norm{\overbar V}$ that decreases with $r$ as $\O{\norm V/r}$.
We derive rigorous bounds to support this intuition in \cref{sec:theory}.

The aim of the symmetry protection technique is to choose the symmetry transformations $C_k$ that minimize the error in \cref{eq:totalexperror}.
While each $C_k$ may be chosen independently of the others, we will sometimes focus our attention on a special construction that requires $C_k = C_0^k$ for some $C_0 \in \mathcal S$. 
This choice for the transformations result in a simpler simulation circuit, i.e.
\begin{align} 
 	U_t \approx C_0^{\dag r} (S_\dt C_0)^r, 
 \end{align} 
which corresponds to applying the same symmetry transformation $C_0$ alternatively with the implementations of the simulating circuit $S_\dt$, followed by a final application of $C_0^{\dag r}$ to negate the effect of $C_0$ on the correct evolution.
We could either draw $C_0$ randomly from the symmetry group $\mathcal S$ or infer an optimal choice of $C_0$ from the structure of the error $V$ [See \cref{eq:optimalC} for an example]. 
We analyze the error bounds for the simulation under the protection from this special construction in \cref{sec:theory} and present similar analysis for the general scenario in \cref{sec:pf1}.

It is worth noting that the symmetry transformation $C_0$ introduced above is also analogous to the fast pulses (or ``kicks'') commonly used in quantum control to confine the dynamics of quantum systems~\cite{zanardi1999symmetrizing,violaDynamicalDecouplingOpen1999,Facchi04,Khodjasteh08,violaUniversalControlDecoupled1999,ngCombiningDynamicalDecoupling2011,burgarthGeneralizedProductFormulas2019}.
In fact, we also show in \cref{sec:zeno,sec:ZenoxSP} that a restricted version of the symmetry protection technique is exactly equivalent to frequently applying fast pulses to the systems, resulting in the error being approximately projected onto the so-called quantum Zeno subspaces.
We prove a bound on the error of this approximation, exponentially improving a recent result of Ref.~\cite{burgarthGeneralizedProductFormulas2019}.
This quantum Zeno framework provides an alternative explanation for how quantum simulation can be improved by symmetry protection.

\section{Faster Trotterization by symmetry protection}\label{sec:theory}

In this section, we analyze the effect of the symmetry protection on the total error of the first-order Trotterization algorithm.
Suppose the Hamiltonian $H = \sum_{\mu=1}^L {H_\mu}$ is a sum of $L$ Hamiltonian terms $H_\mu$ such that each $e^{-iH_\mu \dt}$ can be readily simulated on quantum computers.
For readability, we define the following quantities
\begin{align} 
	&\alpha \equiv \sum_{\mu = 1}^{L} \sum_{\nu = \mu+1}^L \norm{\comm{H_\nu,H_\mu}},\\
	&\beta \equiv \sum_{\mu = 1}^{L} \sum_{\nu = \mu+1}^L\sum_{\nu' = \nu}^L \norm{\comm{H_{\nu'},\comm{H_\nu,H_\mu}}},	 
\end{align} 
that depend only on the commutators between the terms of the Hamiltonian.
We will also use the standard Bachmann-Landau big-$O$ and big-$\Theta$ notations in analyzing the asymptotic scalings of the errors with respect to $n,t$, and $r$.
For reference, $\alpha = \O{n}$ and $\beta = \O{n}$ in a system of $n$ nearest-neighbor interacting particles~\cite{childsTheoryTrotterError2020}.

Given a set of symmetry transformations $\mathcal C = \{C_k:k=1,\dots,r\}$, we define 
\begin{align} 
	\overbar{A} \equiv \frac{1}{r} \sum_{k=1}^{r} C_k^\dag A C_k 
\end{align}
as the version of an operator $A$ averaged over the rotations induced by $C_k$.

The first-order Trotterization algorithm approximates $\exp(-i H \dt)$ by
\begin{align} 
	 S_\dt = \prod_{\mu = 1}^L e^{-iH_\mu\delta t},\label{eq:PF1}
\end{align}
where $\prod_{\mu = 1}^L U_\mu \equiv U_L\dots U_2U_1$ is an ordered product.
We define $H_\eff$ as the generator of $S_\dt$, i.e. $S_\dt = \exp(-iH_\eff \dt)$.
We prove the following lemma, providing the existence and the structure of the generator $H_\eff$.
\begin{lemma}\label{lem:Htilde}
For all $\dt$ such that $\beta \dt \leq \alpha$, $2\alpha \dt \leq \norm H$, and $8\dt \norm H\leq 1$, there exists a generator $H_\eff$ for $S_\dt$ and
\begin{align} 
	H_\eff = H - \frac{i}{2} v_0 \dt + \mathcal V(\dt), 
\end{align}
where 
\begin{align}
v_0 \equiv  \sum_{\mu = 1}^{L} \sum_{\nu = \mu+1}^L \comm{H_\nu,H_\mu}, \label{eq:mathcalE_0}
\end{align}
 $\mathcal V(\dt)$ is an operator
satisfying $\norm{\mathcal V(\dt)}\leq \chi \dt^2$ and
\begin{align} 
	\chi\equiv \beta + 32\alpha\norm H.
\end{align}
\end{lemma}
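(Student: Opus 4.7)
The plan is to show that $S_\dt$ lies in a disk around the identity on which the principal matrix logarithm is well-defined, then expand $\log S_\dt$ to second order in $\dt$. The $O(1)$ term will be $-iH$, an algebraic identity will identify the $O(\dt)$ correction as $-\tfrac{1}{2}v_0$, and the remainder will be bounded by combining the tail of the Mercator series for $\log(I+X)$ with the nested-commutator structure of the ordered product.

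First I would establish existence of $H_\eff$. The standard first-order Trotter bound $\norm{S_\dt - e^{-iH\dt}} \leq \alpha\dt^2/2$, together with $\norm{e^{-iH\dt}-I}\leq \norm H\dt$ and the hypothesis $2\alpha\dt \leq \norm H$, gives $\norm{S_\dt - I}\leq \tfrac{3}{2}\norm H \dt$, which is at most $3/16$ under $8\dt\norm H\leq 1$. Hence $\log S_\dt = \sum_{k\geq 1}(-1)^{k+1}(S_\dt - I)^k/k$ converges absolutely, and we set $H_\eff \equiv i(\log S_\dt)/\dt$. A direct bookkeeping of the ordered product $\prod_\mu (I - iH_\mu\dt - \tfrac{1}{2}H_\mu^2\dt^2 + \cdots)$ gives
\[
S_\dt = I - iH\dt - \tfrac{\dt^2}{2}\Bigl(\sum_\mu H_\mu^2 + 2\sum_{\mu<\nu}H_\nu H_\mu\Bigr) + R_3(\dt),
\]
where $R_3(\dt)$ collects all contributions of order $\dt^3$ or higher. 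The combinatorial identity $\sum_\mu H_\mu^2 + 2\sum_{\mu<\nu}H_\nu H_\mu = H^2 + v_0$ rewrites the bracket. Substituting $X = S_\dt - I$ into the logarithm series and keeping terms through $\dt^2$, the $-X^2/2$ piece produces $H^2\dt^2/2$, which cancels the $H^2$ contribution from $X$ exactly, leaving $\log S_\dt = -iH\dt - \tfrac{\dt^2}{2}v_0 + \mathcal R(\dt)$. Dividing by $-i\dt$ yields the stated decomposition with $\mathcal V(\dt) = i\mathcal R(\dt)/\dt$.

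The remaining step is to bound $\norm{\mathcal V(\dt)}\leq \chi\dt^2$. I would decompose $\mathcal R(\dt)$ into three pieces. First, the $\dt^3$ Taylor tail $R_3(\dt)$ of $S_\dt$, which by an iterated BCH-style reordering of $\prod_\mu e^{-iH_\mu\dt}$ emerges as nested commutators of three $H_\mu$'s and is bounded by a constant times $\beta\dt^3$. Second, the cross-terms in $-X^2/2$ of the form $(-iH\dt)Y + Y(-iH\dt)$, where $Y$ contains the $v_0$ and $H^2$ pieces of order $\dt^2$; those involving $v_0$ inherit an $\alpha$ from $\norm{v_0}\leq \alpha$ and thus contribute at order $\alpha\norm H\dt^3$. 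Third, the geometric tail $\sum_{k\geq 3}X^k/k$, bounded by $2\norm X^3\leq O(\norm H^3\dt^3)$ since $\norm X\leq 2\norm H\dt\leq 1/4$. The hypotheses $8\dt\norm H\leq 1$ and $\beta\dt\leq \alpha$ are then used to re-express any surviving $\norm H^{\geq 2}\dt^{\geq 1}$ factor in the last two pieces as $\alpha\norm H$, folding everything into $\beta\dt^3 + O(\alpha\norm H\dt^3)$. Dividing by $\dt$ and tracking constants yields $\norm{\mathcal V(\dt)}\leq (\beta + 32\alpha\norm H)\dt^2$.

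The main obstacle will be the constant-tracking in this last step, and in particular obtaining the factor of $32$: it arises from combining numerical pre-factors across the three contributions and from repeated invocations of the hypotheses to absorb $\norm H$-heavy terms into $\alpha\norm H$. Each individual estimate is a routine triangle-inequality or geometric-series bound, but the organization is delicate---especially extracting the $\beta\dt^3$ piece from $R_3(\dt)$ via an iterated BCH expansion, rather than via the naive $O(\norm H^3\dt^3)$ bound that would not fit inside $\chi$.
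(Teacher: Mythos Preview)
Your decomposition into three pieces loses the commutator structure that the bound $\chi=\beta+32\alpha\norm{H}$ requires, and the hypotheses cannot repair this. Pieces 2 and 3 each contain terms of size $\norm{H}^3\dt^3$: the tail $\sum_{k\geq 3}X^k/k$ has leading part $\tfrac{1}{3}(-iH\dt)^3$, and the non-$v_0$ cross-terms in $-X^2/2$ contribute $-\tfrac{i}{2}H^3\dt^3$. You claim the hypotheses let you ``re-express any surviving $\norm{H}^{\geq 2}\dt^{\geq 1}$ factor \ldots\ as $\alpha\norm{H}$,'' but no hypothesis bounds $\norm{H}$ above by anything involving $\alpha$; indeed $2\alpha\dt\leq\norm{H}$ points the \emph{wrong way}. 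The sanity check is the commuting case $[H_\mu,H_\nu]=0$: then $\alpha=\beta=\chi=0$ and $\mathcal V(\dt)=0$ exactly, yet your separate bounds on Pieces~2 and~3 are still $O(\norm{H}^3\dt^3)>0$. The $\norm{H}^3\dt^3$ contributions from $R_3$, from $-X^2/2$, and from $X^3/3$ genuinely cancel in $\log S_\dt$ (this is BCH), but once you bound them individually that cancellation is gone, and you are left with a quantity you cannot fit under $\chi\dt^3$.

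The paper avoids this by never writing $X=S_\dt-I$. Instead it first passes to the interaction-picture representation $S_\dt=\mathcal T\exp\bigl\{-i\int_0^{\dt}(H+F(\tau))\,d\tau\bigr\}$ from \cite{childsTheoryTrotterError2020}, where $F(\tau)=-iv_0\tau+\mathcal F(\tau)$ already has commutator size $\norm{F(\tau)}\leq\alpha\tau+\beta\tau^2$. It then applies the Magnus expansion. The first Magnus term $\Omega_1$ gives $-i\dt(H-\tfrac{i}{2}v_0\dt)$ plus a remainder bounded by $\beta\dt^3$, and every higher $\Omega_{j\geq 2}$ is built from nested commutators of $\mathcal A(\tau)=-i(H+F(\tau))$; since $[H,H]=0$, each such commutator carries at least one factor of $F$, yielding $\norm{\Omega_j}\lesssim (2\dt\norm{H})^{j-1}\cdot\alpha\dt^2$ and hence the $\alpha\norm{H}$ piece of $\chi$ after summing the geometric series. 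To salvage your route you would have to keep the full BCH/Dynkin commutator form of $\log S_\dt$ through \emph{all} orders and bound the $j$-fold commutators for every $j\geq 3$ while extracting one inner $[H_\nu,H_\mu]$ from each---essentially reproducing the Magnus argument in discrete form---rather than invoking the Mercator tail bound $\norm{X}^3$.
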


We provide the proof of \cref{lem:Htilde} in \cref{sec:PF1exp_error_proof}.
The essence of \cref{lem:Htilde} is that the error of the simulation, defined as $V \equiv H_\eff - H $, is given by 
\begin{align} 
	V = -\frac{i}{2} v_0 \dt + \O{\chi \dt^2},
\end{align}
and it follows that $\norm{V}\leq \frac{1}{2}\alpha \dt + \chi \dt^2$.

We now consider the effect of protecting the simulation with a set of symmetry transformations $\{C_k:k=1,\dots,r\}$.
Under this symmetry protection, each circuit $S_\dt$ is replaced by
\begin{align} 
	S_\dt \rightarrow C_k^\dag S_\dt C_k 
	= e^{-i C_k^\dag H_\eff C_k \dt}
	&= e^{-i (H + C_k^\dag V C_k) \dt}, 
\end{align}
where we have used $\comm{C_k,H} = 0$ to simplify the expression.
The full simulation becomes
\begin{align} 
	 \prod_{k=1}^r C_k^\dag S_\dt C_k = \prod_{k=1}^re^{-i (H+C_k^\dag V C_k) \dt}.
\end{align}

In the following analysis, we further assume that the symmetry transformations $C_k$ have the form
 $C_k = C_0^k$, where $C_0$ is a symmetry transformation drawn from the symmetry group $\mathcal S$ (We extend these results to general symmetry transformations in Appendix \ref{sec:pf1}.)
Let $\{e^{-i \phi_\mu}:1\leq \mu \leq m\}$ denote the distinct eigenvalues of $C_0$ and
\begin{align} 
 	\overbar H_\eff = H + \frac{1}{r}\sum_{k=1}^r C_k^\dag V C_k = H + \overbar V.
 \end{align}

\begin{lemma} \label{lem:combine}
If $m \geq 2$, we have
\begin{align}
	&\norm{
	\prod_{k=1}^r C_k^\dag e^{-i H_\eff  \dt} C_k
	-e^{-i\overbar{H}_\eff t}}\nonumber\\
	&\leq\frac{2\xi\sqrt{m}(\norm{H}+\norm{V})\norm{V}t^2\log r}{r},\label{eq:Zenoerror}
\end{align}
where
\begin{align} 
	\xi\equiv\max_{\mu\neq\nu}\abs{\sin\left(\frac{\phi_\mu-\phi_\nu}{2}\right)}^{-1} \label{eq:xi}
\end{align}
is the \emph{inverse spectral gap} that depends on the eigenvalues of $C_0$.
\end{lemma}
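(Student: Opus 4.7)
The plan is to exploit $[C_k,H]=0$ to rewrite each factor in the product as $C_k^\dag e^{-iH_\eff \dt}C_k = e^{-i(H+W_k)\dt}$ with $W_k\equiv C_k^\dag V C_k$, reducing the lemma to comparing $\prod_{k=1}^r e^{-i(H+W_k)\dt}$ with $e^{-it(H+\overbar V)}$. I would first spectrally decompose $C_0 = \sum_{\mu=1}^m e^{-i\phi_\mu}\Pi_\mu$ and expand $V = \sum_{\mu,\nu}V_{\mu\nu}$ with $V_{\mu\nu}\equiv \Pi_\mu V\Pi_\nu$. Because $C_k = C_0^k$, each conjugation merely rotates the off-diagonal blocks by a phase, $W_k = \sum_{\mu,\nu}e^{ik(\phi_\mu-\phi_\nu)}V_{\mu\nu}$. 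Averaging over $k=1,\ldots,r$ turns every off-diagonal contribution into a geometric sum of modulus at most $\xi/r$, so $\overbar V = \sum_\mu V_{\mu\mu}$ up to an $O(\xi/r)$ remainder, and every $\Delta_k\equiv W_k-\overbar V$ is purely off-diagonal.

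The central estimate I would then establish is the uniform partial-sum bound
\begin{align*}
    \norm{P_j}\le 2\xi\sqrt{m}\,\norm V,\qquad P_j \equiv \sum_{k=1}^j\Delta_k,
\end{align*}
with the endpoint identities $P_0 = P_r = 0$ holding automatically. Writing $P_j = \sum_{\mu\neq\nu}g_{\mu\nu}(j)V_{\mu\nu}$, each weight $g_{\mu\nu}(j)$ is pointwise bounded by $2\xi$, but a direct triangle inequality would lose a factor of order $m^{3/2}$. Instead, I would compute $P_j^\dag P_j$ and use orthogonality of the $\Pi_\mu$ to annihilate every cross term in the $\mu$-index, obtaining
\begin{align*}
    \norm{P_j}^2 = \bigg\|\sum_\mu D_\mu^\dag V^\dag \Pi_\mu V D_\mu\bigg\|\le m\cdot(2\xi)^2\norm V^2,
\end{align*}
where $D_\mu\equiv\sum_\nu g_{\mu\nu}(j)^*\Pi_\nu$ satisfies $\norm{D_\mu}\le 2\xi$. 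I view establishing this $\sqrt m$ (rather than $m^{3/2}$) scaling as the main technical obstacle of the proof.

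Finally, I would pass to the interaction picture with respect to $\overbar H_\eff$ by factoring $\prod_{k=1}^r e^{-i(H+W_k)\dt} = e^{-it\overbar H_\eff}\,I(t)$, where $I(t)$ is the time-ordered exponential generated by $\Delta_I(s)\equiv e^{is\overbar H_\eff}\Delta(s)e^{-is\overbar H_\eff}$ with $\Delta(s) = \Delta_k$ on $((k-1)\dt,k\dt]$; the lemma is then equivalent to bounding $\norm{I(t)-\bbI}$. The linear Dyson term $-i\int_0^t\Delta_I(s)ds$ I would first approximate within each subinterval by its left-endpoint value, at a cost of $O(\dt\,\norm{\overbar H_\eff}\norm{\Delta_k})$ per step, and then attack via Abel summation,
\begin{align*}
    \sum_{k=1}^r \mathcal R_k(\Delta_k) = \sum_{k=1}^{r-1}(\mathcal R_k - \mathcal R_{k+1})(P_k),
\end{align*}
where $\mathcal R_k$ denotes conjugation by $e^{i(k-1)\dt\overbar H_\eff}$ and the boundary terms vanish thanks to $P_0=P_r=0$. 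Each increment $\mathcal R_k-\mathcal R_{k+1}$ carries an extra commutator with $e^{-i\dt\overbar H_\eff}\approx \bbI - i\dt\,\overbar H_\eff$ and hence a factor $\dt(\norm H+\norm V)$, so together with the partial-sum bound the linear contribution is of order $t\dt\,\xi\sqrt m(\norm H+\norm V)\norm V$, giving the advertised $t^2/r$ scaling. Higher-order Dyson terms contribute at most $(2\norm V t)^n/n!$ and, by \cref{lem:Htilde}, are subleading since $\norm V\,t = O(t^2/r)$. The residual $\log r$ factor in \cref{eq:Zenoerror} I would expect to emerge from a finer accounting of the subinterval and higher-order Dyson remainders, or equivalently from iterating the summation-by-parts step along a dyadic decomposition of the product.
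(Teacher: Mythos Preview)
Your $\sqrt m$ partial-sum bound $\norm{P_j}\le 2\xi\sqrt m\,\norm V$ is correct and is essentially the content of the paper's ergodic corollary. The interaction-picture/Abel-summation treatment of the \emph{linear} Dyson term is also sound and indeed produces a contribution of order $\xi\sqrt m(\norm H+\norm V)\norm V\,t^2/r$, with no $\log r$. The gap is in the higher-order Dyson terms. The crude bound $(2\norm V t)^n/n!$ is correct, but invoking \cref{lem:Htilde} does not make the $n\ge 2$ tail subleading: with $\norm V=O(\alpha\,\dt)$ one has $\norm V^2 t^2=O(\alpha^2 t^4/r^2)$, whereas the target bound is $O\!\big(\xi\sqrt m\,\norm H\,\alpha\,t^3(\log r)/r^2\big)$; the ratio is $\alpha t/\big(\xi\sqrt m\,\norm H\log r\big)$, which is not small for long evolutions. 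So the Dyson route, as written, does not close for general $V$ (nor even uniformly in $t$ in the Trotter case), and the $\log r$ does not come from it.

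The paper's argument is different and avoids any series expansion in $V$. It proves the lemma by a recursive dyadic splitting (\cref{lem:TrotterZeno2}): assuming the bound for the left and right halves of the product, it merges them via the exact two-term estimate $\norm{e^{-iA}e^{-iB}-e^{-i(A+B)}}\le\tfrac12\norm{[A,B]}$ with $A=\dt\,G_{[s/2+1,s]}$ and $B=\dt\,G_{[1,s/2]}$, where $G_k=H+W_k$. The key commutator estimate is
\[
\norm{[G_{[a,b]},G_{[c,d]}]}\le 2\big(b-a+d-c+2\big)\xi\sqrt m\,\norm G\,\norm V,
\]
obtained by subtracting the Zeno projection $G_\Zeno=\sum_\mu P_\mu G P_\mu$ from each block and applying (your) partial-sum bound; this is where the extra factor of $\norm V$ rather than $\norm G$ enters, and it makes the merge cost at level $s$ only $O(s\,\dt^2\xi\sqrt m\,\norm G\,\norm V)$ instead of $O(s^2\dt^2\norm G^2)$. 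The recursion $E(s)\le 2E(s/2)+O(s\,\dt^2\xi\sqrt m\,\norm G\,\norm V)$ then gives $E(r)=O\!\big(r\log r\cdot\dt^2\xi\sqrt m\,\norm G\,\norm V\big)$, and the $\log r$ is precisely the recursion depth. Your final parenthetical about ``iterating along a dyadic decomposition of the product'' is therefore the right idea; the point is that it \emph{replaces} the Dyson expansion entirely rather than refining its remainder.
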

The proof of \cref{lem:combine} follows from \cref{lem:TrotterZeno2} in \cref{sec:ZenoxSP}.
We note that the bound in \cref{lem:combine} depends on $m$, the number of unique eigenvalues of $C_0$, which could be a constant, e.g. when $C_0$ is generated by local symmetries, or depend on the system size, e.g. when $C_0$ corresponds to generic rotations generated by global symmetries.
We also note that the inverse spectral gap $\xi$ could be large if $C_0$ is nearly degenerate and one should take this effect into account when choosing the unitary $C_0$.

 \Cref{lem:combine} says that, up to the error given in \cref{eq:Zenoerror}, the simulation under the symmetry protection is effectively described by $\overbar{H}_\eff$.
In particular, the total error of the Hamiltonian under the symmetry protection is
\begin{align} 
	\overbar{V}&=  \overbar{H}_\eff - H = \frac{1}{r} \sum_{k=1}^r C_k^\dag V C_k\\
	& =   \frac{-i}{2}\underbrace{\frac{1}{r}\sum_{k=1}^r C_k^\dag v_0 C_k}_{= \overbar {v_0}} \dt  + 
	\underbrace{\frac{1}{r} \sum_{k=1}^r C_k^\dag \mathcal V C_k}_{= \overbar {\mathcal V}},
\end{align}
where we have replaced the expression of $V$ from \cref{lem:Htilde}.
Note that $\norm{\overbar{\mathcal V}} \leq \norm{\mathcal V}$ by the triangle inequality.
Using the identity
\begin{align} 
	\norm{e^{-i\overbar{H}_\eff t} - e^{-iH t}} \leq \norm{\overbar{H}_\eff - H} t = \norm{\overbar{V}} t, \label{eq:diffHidentity}
\end{align}
we arrive at the following bound on the total error of the simulation.

\begin{theorem}[Quantum simulation by symmetry protection]\label{lem:pf1finalbound}
Assuming that $\beta \dt \leq \alpha$, $2\alpha \dt \leq \norm H$, and $8\dt \norm H\leq 1$, the total error of simulation under the symmetry protection from $\{C_k = C_0^k: C_0 \in\mathcal S, k=1,\ldots,r\}$ can be bounded as
\begin{align} 
	\epsilon & \equiv  \norm{\prod_{k=1}^r C_k^\dag S_\dt C_k - e^{-iHt}}
	\nonumber\\&\leq \norm{\overbar{v_0}}\frac{t^2}{2r} + \chi \frac{t^3}{r^2} + \kappa\frac{t^3\log r}{r^2}, \label{eq:PF1bound}
\end{align}
where 
\begin{align} 
    \chi\equiv \beta + 32\alpha\norm H,\quad
	&\kappa \equiv 48 \xi \sqrt{m} \alpha \norm H,  
\end{align}
$m$ is the number of distinct eigenvalues of $C_0$, and $\xi$ is the inverse spectral gap defined in \cref{eq:xi}.
\end{theorem}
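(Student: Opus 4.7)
The plan is to bound $\epsilon$ via a single triangle inequality centered on the averaged evolution $e^{-i\overbar{H}_\eff t}$, namely
\begin{align}
\epsilon \leq \norm{\prod_{k=1}^r C_k^\dag S_\dt C_k - e^{-i\overbar{H}_\eff t}} + \norm{e^{-i\overbar{H}_\eff t} - e^{-iHt}}.
\end{align}
The first term is exactly what \cref{lem:combine} controls, and the second is controlled by the contraction $\norm{e^{-iAt}-e^{-iBt}} \leq \norm{A-B}t$ already recorded as \cref{eq:diffHidentity}. Matching the two resulting pieces against the three summands on the right-hand side of \cref{eq:PF1bound} is then elementary.

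For the residual-drift term I would use $\overbar{H}_\eff - H = \overbar V$, expand $V$ via \cref{lem:Htilde} as $V = -\tfrac{i}{2}v_0\dt + \mathcal V(\dt)$, and average linearly to get $\overbar V = -\tfrac{i}{2}\overbar{v_0}\dt + \overbar{\mathcal V}$. Unitarity of each $C_k$ together with the triangle inequality gives $\norm{\overbar{\mathcal V}} \leq \norm{\mathcal V(\dt)} \leq \chi\dt^2$, so plugging $\dt = t/r$ into $\norm{\overbar V} t$ produces the first two summands $\norm{\overbar{v_0}} t^2/(2r) + \chi t^3/r^2$ exactly. For the combining term I would substitute \cref{lem:combine}; this yields a bound of $2\xi\sqrt m(\norm H + \norm V)\norm V \cdot t^2\log r / r$, and all that remains is to show that the prefactor $2(\norm H + \norm V)\norm V$ is at most $48\alpha\norm H\dt$, which exactly produces $\kappa t^3\log r/r^2$ after using $\dt = t/r$.

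That last step is where the three hypotheses $\beta\dt \leq \alpha$, $2\alpha\dt \leq \norm H$, and $8\dt\norm H \leq 1$ all get used, and is the only place numerical care is needed. From \cref{lem:Htilde} one has $\norm V \leq \tfrac{1}{2}\alpha\dt + \chi\dt^2$; the three inequalities combine to show $\chi\dt^2 \leq 5\alpha\dt$, hence $\norm V \leq \tfrac{11}{2}\alpha\dt$, and in particular $\norm V$ is dominated by $\norm H$. Splitting $2(\norm H + \norm V)\norm V = 2\norm H \norm V + 2\norm V^2$ and applying these bounds term by term yields the desired $\alpha\norm H\dt$ upper bound with a constant comfortably below $48$. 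I expect the main obstacle to be purely bookkeeping: \cref{lem:Htilde,lem:combine} encapsulate all of the conceptual content, so the only risk is being slightly too loose when estimating $\chi\dt^2$, which would inflate the prefactor past $48$ and force a revision of the constants appearing in $\kappa$.
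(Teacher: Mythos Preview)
Your proposal is correct and follows essentially the same route as the paper: the same triangle inequality around $e^{-i\overbar H_\eff t}$, the same invocation of \cref{lem:combine} and \cref{eq:diffHidentity}, the same expansion $\overbar V = -\tfrac{i}{2}\overbar{v_0}\dt + \overbar{\mathcal V}$ via \cref{lem:Htilde}, and the same use of the three hypotheses to deduce $\chi\dt \leq 5\alpha$. The only cosmetic difference is in how the final constant is extracted: the paper bounds the product $(\norm H + \norm V)\norm V$ directly as $\leq 24\alpha\norm H\dt$ (via $\norm V < 6\alpha\dt$ and $6\alpha\dt \leq 3\norm H$), whereas you split it as $\norm H\norm V + \norm V^2$ and bound each piece---either way the prefactor lands below $48$.
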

The proof of \cref{lem:pf1finalbound} follows immediately from \cref{lem:combine} and \cref{eq:diffHidentity} [See \cref{sec:thm1proof} for the detailed calculations].
The key feature of \cref{lem:pf1finalbound} is that, to the lowest-order in $\frac{t}{r}$, the error scales with $\norm{\overbar{v_0}}$ instead of $\norm{v_0}$.
Since
\begin{align} 
	 \norm{\overbar{v_0}} = \norm{\frac{1}{r}\sum_{k=1}^r  C_k^\dag v_0 C_k}
\end{align}
is generally smaller than $\norm{v_0}$ when $\comm{C_k,v_0}\neq 0$, we expect a smaller simulation error under the symmetry protection.

For demonstration, we consider the simulation of a Hamiltonian $H$ that is a sum of nearest-neighbor interactions on $n$ particles.
It is straightforward to verify that for this Hamiltonian, $\norm H = \O{n}$, $\norm{v_0}\leq \alpha = \O{n}$, $\beta = \O{n}$, and $\chi = \O{n^2}$.
We will also assume that the number of distinct eigenvalues of the $C_0$ is $m = \O{1}$ (corresponding to local symmetries or highly degenerate transformations)
which results in $\kappa = \O{n^2}$.
We will estimate the required number of steps $r$---a good proxy for the gate count~\footnote{In most quantum computer architectures, implementing one Trotter step costs the same amount of gates regardless of the time step $\delta t$.
Here, we also assume that the cost of implementing each symmetry transformation is negligible compared to the simulation circuit. Therefore, given a fixed $n$, the Trotter number $r$ is proportional to the gate count of the simulation.}---for simulations with and without the symmetry protection.

The first scenario corresponds to an unprotected simulation, where $\overbar {v_0} = v_0 $. 
The total error then scales as
\begin{align} 
	\epsilon = \O{\frac{nt^2}{r}} + \O{\frac{n^2t^3\log r}{r^2}}. 
\end{align}
To meet a fixed error tolerance $\epsilon$, we would have to choose the number of steps $r = \Theta(nt^2/{\epsilon})$.

On the other hand, with symmetry protection, we later show that it is sometimes possible to make $\overbar{v_0}$ vanish completely, making the higher order terms the dominant contribution to the total error [See \cref{eq:optimalC} for an example]. 
For nearest-neighbor interactions, the total error is now
\begin{align} 
	 \epsilon = \O{\frac{n^{2}t^3\log r}{r^2}},
\end{align}
which decreases quadratically with $r$. As a result, we only need 
\begin{align} 
	 r = \tilde \Theta\left(\frac{nt^{3/2}}{\sqrt{\epsilon}}\right),
\end{align}
where $\tilde \Theta(\cdot)$ is $\Theta(\cdot)$ up to a logarithmic correction.
Note that this choice of $r$ also satisfies the conditions in \cref{lem:pf1finalbound} when $t/\epsilon>1$.
Compared to the unprotected simulation, the symmetry protection results in a factor of $\sqrt{t/\epsilon}$ improvement in the required number of steps.
At $\epsilon = 0.01$, the improvement in the scaling with $\epsilon$ alone would result in about a factor of ten reduction in the gate count of the simulation.

Finally, we consider a scenario where $\norm{\overbar{v_0}}\propto \norm{v_0}/r^{\gamma}$ for some $\gamma \in (0,1)$.
We provide an example of such a scaling in \cref{sec:homo-rand-Heisenberg}, where drawing the unitary transformations $C_k$ randomly from the symmetry group results in a scaling with $\gamma = 0.5$.
This scaling of $\norm{\overbar{v_0}}$ results in the total error
\begin{align} 
	\epsilon = \O{\frac{nt^2}{r^{1+\gamma}}} + \O{\frac{n^2t^3\log r}{r^2}}. 
\end{align}
Hence, we require 
\[
r = \max\left\{\Theta\left(\left(\frac{nt^2}{\epsilon}\right)^{\frac1{1+\gamma}}\right),\tilde \Theta\left({\frac{nt^{3/2}}{\sqrt{\epsilon}}}\right)\right\},
\] which is again better than the unprotected simulation by a factor of $\min\{(nt^2/\epsilon)^{\gamma/(1+\gamma)},\sqrt{t/\epsilon}\}$.

We recall that in deriving \cref{lem:pf1finalbound}, we have assumed that the symmetry transformations have the form $C_k = C_0^k$ for some $C_0$. We derive in \cref{sec:pf1} a different bound for the general case where each $C_k$ may be chosen independently. 
This general bound, while appearing more complicated, holds the same key feature to the bound in \cref{lem:pf1finalbound}: the total error, to the lowest-order, scales with an averaged version of $v_0$ (under the symmetry transformations) instead of scaling with $\norm{v_0}$.

\section{Applications}\label{sec:app}
In this section, we apply the symmetry protection technique to the simulation of the Heisenberg model (\cref{sec:Heisenberg}) and lattice field theories (\cref{sec:qft}).
In both cases, we show that the symmetry protection results in a significant error reduction and thereby gives faster quantum simulation.

In particular, we use the simulation of the homogeneous Heisenberg model in \cref{sec:homo-rand-Heisenberg} to demonstrate the improvement on the total error scaling as a function of the number of steps $r$ when the simulation is protected by a random set of unitary transformations and by an optimally chosen set.
In \cref{sec:mbl}, we estimate the required number of Trotter steps as a proxy for the gate count in simulating an instance of the Heisenberg model, commonly found in the studies of the many-body localization phenomenon.
Finally, in \cref{sec:qft}, we consider the probability of the state leaking to unphysical subspaces in the digital simulation of the Schwinger model and show that the symmetry protection from the local gauge symmetries can suppress this leakage by a few orders of magnitude.

\subsection{Heisenberg interactions}\label{sec:Heisenberg}
In this section, we use the symmetries in the Heisenberg model to protect its simulation using the first-order Trotterization. 
A Heisenberg model of $n$ spins can be described by the Hamiltonian
\begin{align} 
	H &= \sum_{i=1}^{n-1} \sum_{j=i+1}^n \left(J_{ij}^{(x)} X_i X_j + J_{ij}^{(y)} Y_i Y_j + J_{ij}^{(z)} Z_i Z_j\right) \nonumber\\
	&\qquad\qquad\qquad\qquad\qquad\qquad\qquad+ \sum_{i=1}^{n}h_i Z_i, \label{eq:Heigeneral}
\end{align}
where $X_i, Y_i, Z_i$ are the Pauli matrices acting on site $i$, $J_{ij}^{(x,y,z)}$ represent the interaction strengths between the spins, and $h_i$ correspond to the strengths at site $i$ of an external magnetic field pointing in the $z$ direction.
The Heisenberg model provides a good description for the behavior of magnetic materials in the presence of external magnetic fields.
Depending on several factors, including the signs of the interactions and the dimensions of the system, the Heisenberg model may undergo a quantum phase transition as we increase the strength of the external magnetic field. 
Several important instances of the Heisenberg model includes the homogeneous Heisenberg model ($J^{(x)} = J^{(y)} = J^{(z)}$), the XXZ model ($J^{(x)} = J^{(y)}$) with local disorder, and the Ising model ($J^{(y)} = J^{(z)} = 0$).  
In the following subsections, we will consider two pedagogical instances of \cref{eq:Heigeneral} with SU(2) and U(1) symmetries respectively and demonstrate how the symmetry protection helps reduce the error in simulating the dynamics of these systems even as they move across critical points.

\subsubsection{Homogeneous, random Heisenberg interactions}
\label{sec:homo-rand-Heisenberg}
We first consider a pedagogical toy model where interactions in \cref{eq:Heigeneral} are homogeneous, i.e. $J_{ij}^{(x)} = J_{ij}^{(y)} = J_{ij}^{(z)} = J_{ij}$ for all $1\leq i<j\leq n$, but each $J_{ij}$ is chosen independently at random between $[-1,1]$.
In addition, we assume that $h_i = 0\ \forall i$, i.e. there is no external magnetic field. 
In this case, \cref{eq:Heigeneral} simplifies to
\begin{align} 
	H &=  \underbrace{\sum_{i=1}^{n-1} \sum_{j=i+1}^n J_{ij} X_iX_j}_{\equiv H_X} + \underbrace{\sum_{i=1}^{n-1} \sum_{j=i+1}^n J_{ij} Y_iY_j}_{\equiv H_Y}\nonumber\\
	&+ \underbrace{\sum_{i=1}^{n-1} \sum_{j=i+1}^n J_{ij} Z_iZ_j}_{\equiv H_Z}. \label{eq:Heirand}
\end{align}
The combination of homogeneous interactions and no external magnetic field make \cref{eq:Heirand} invariant under $\mathcal S = \{W^{\otimes n}:W\in \text{SU(2)}\}$, which contains unitaries that---in the Bloch sphere---simultaneously rotate each spin by the same angle.

\begin{figure}[t]
\centering
\includegraphics[width=0.45\textwidth]{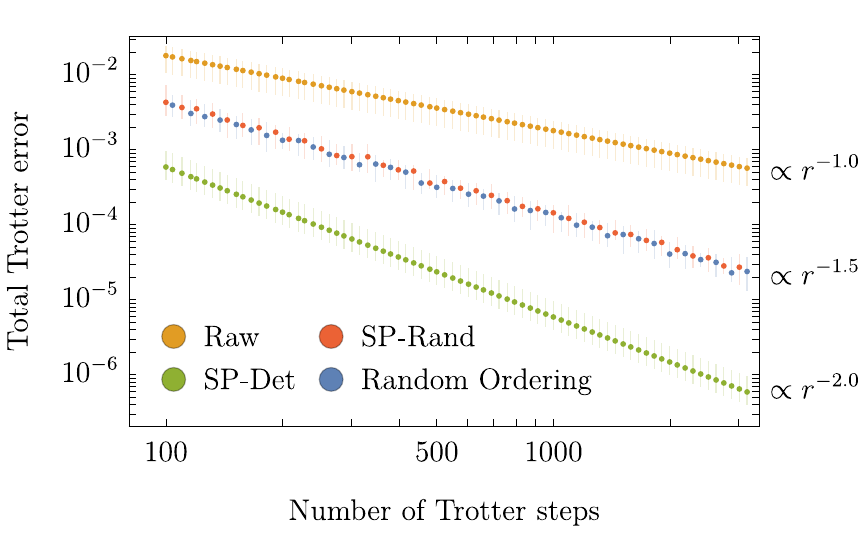}
\caption{The total error in simulating the Hamiltonian \cref{eq:Heirand} at $n = 4$ for a fixed evolution time $t = 1$ as a function of the Trotter number $r$ using four different schemes: the raw first-order Trotterization (``Raw''), the first-order Trotterization protected by a random set symmetry transformation (``SP-Rand''), the first-order Trotterization protected by the optimal set in \cref{eq:optimalC} (``SP-Det''), and the random-ordering scheme in Ref. \cite{childsFasterQuantumSimulation2019} (``Random Ordering'').
We indicate the scalings obtained from power-law fits to the right of the plot.
We repeat the simulation 100 times, each with a different set of randomly generated interactions $J_{ij}$.
The dots correspond to the median of the errors at each value of $r$ and the bars represent the corresponding 25\%-75\% percentiles regions. 
}
\label{fig:diff-algo}
\end{figure}

To simulate the evolution $U_t$ under \cref{eq:Heirand}, we could use the first-order Trotterization to approximate
\begin{align} 
	 U_{t} = \left(e^{-iH\dt}\right)^r \approx \left(e^{-iH_X\dt}e^{-iH_Y\dt}e^{-iH_Z\dt}\right)^r \label{eq:HeiPF1}
\end{align}
by a product of evolutions of individual terms of the Hamiltonian.
The number of Trotter steps $r$ and the time step $\dt = t/r$ determine the error of the simulation.
We refer to this approach as the \emph{raw} Trotterization.
To protect this simulation, we insert unitaries drawn from the symmetry group $\mathcal  S$ in between the Trotter steps, resulting in the simulation
\begin{align} 
 	 U_{t} = \left(e^{-iH\dt}\right)^r \approx \prod_{k=1}^r C_k^\dag  e^{-iH_X\dt}e^{-iH_Y\dt}e^{-iH_Z\dt} C_k,
 \end{align} 
where $\{C_1,\dots,C_r\} \equiv \mathcal C$ is a subset of the symmetry group $\mathcal S$.
Recall that the total error of this symmetry-protected simulation is given by \cref{lem:pf1finalbound}, with the lowest-order error being
\begin{align} 
	\frac{t^2}{2r}\norm{\overbar{v_0}} = \frac{t^2}{2r}\norm{\sum_{k=1}^r C_k^\dag v_0 C_k}, \label{eq:first-order-term-only}
\end{align}
where $v_0  =  \comm{H_Y,H_X} + \comm{H_Z,H_X} + \comm{H_Z,H_Y}$ comes from the leading contribution to the error in one Trotter step.
Different choices of the set $\mathcal C$ lead to different total error of the simulation.

For minimal calculational overhead, we could choose each $C_k$ independently and uniformly at random from $S$ (i.e. $C_k = W_k^{\otimes n}$ where $W_k$ is a Haar random unitary on the single-qubit Bloch sphere.)
The sum in \cref{eq:first-order-term-only} is then the sum of $v_0$, each rotated under a random unitary.
This is analogous to the total error being a random walker that, in each time step, ``walks'' a distance $\norm{v_0}$ in a random direction (See \cref{fig:walks}).
From this analogy, we then expect $\norm{\overbar{v_0}}\propto\norm{v_0}/\sqrt{r}$ (to the lowest-order). 
Therefore, we expect the total error of this scheme to decrease with $r$ as $\O{r^{-3/2}}$ (at fixed total time $t$).

While randomly choosing the unitary transformation set $\mathcal C$ requires little to no knowledge about the error operator $v_0$, one can expect that this choice of $\mathcal C$ is not optimal.
Indeed, by further exploiting the structure of $v_0$,
we can construct a set of transformations $\mathcal C$ that makes \cref{eq:first-order-term-only} vanishes entirely.
One such choice is $C_k = C_0^k$ for $k = 1,\dots,r$, where
\begin{align} 
	C_0 = 
		{U_{H}^{\otimes n}}, \label{eq:optimalC}
\end{align}
and $U_{H}$ is the single-qubit Hadamard matrix.
Alternatively, we could also write
\begin{align} 
	C_k = \begin{cases}
		\mathbb{I} 				&\text{if }k \equiv 0 \mod 2,\\
		{U_{H}^{\otimes n}} 				&\text{if }k \equiv 1 \mod 2,\\
	\end{cases}
\end{align}
for $k = 1,\dots,r$.
Since the Hadamard matrix switches $X \leftrightarrow Z$ and $Y\leftrightarrow -Y$, it is straightforward to verify that \cref{eq:first-order-term-only} vanishes for all even values of $r$.
Therefore, the total error of the simulation is given by the next lowest order in \cref{lem:pf1finalbound}, which scales with $r$ as $\O{1/r^2}$.

In \cref{fig:diff-algo}, we plot the total error of the simulation at $n = 4, t = 1$ as a function of the Trotter number $r$ for the three aforementioned scenarios: the first-order Trotterization without symmetry protection (``Raw''), with symmetry protection from a randomly chosen $\mathcal C$ (``SP-Rand''), and with symmetry protection from the optimal set $\mathcal C$ (``SP-Det'').
The scalings of the errors as functions of $r$ agree remarkably well with our above prediction.
In addition, we also compute the total error using the randomized simulation scheme in Ref.~\cite{childsFasterQuantumSimulation2019}, which decreases the Trotter error by randomizing the ordering of the Hamiltonian terms in between Trotter steps.
Our numerics shows that this scheme performs similarly to the simulation protected by random symmetry transformations, which are both outperformed by the optimal symmetry protection scheme.

\begin{figure*}[t]
\centering
\includegraphics[width=0.45\textwidth]{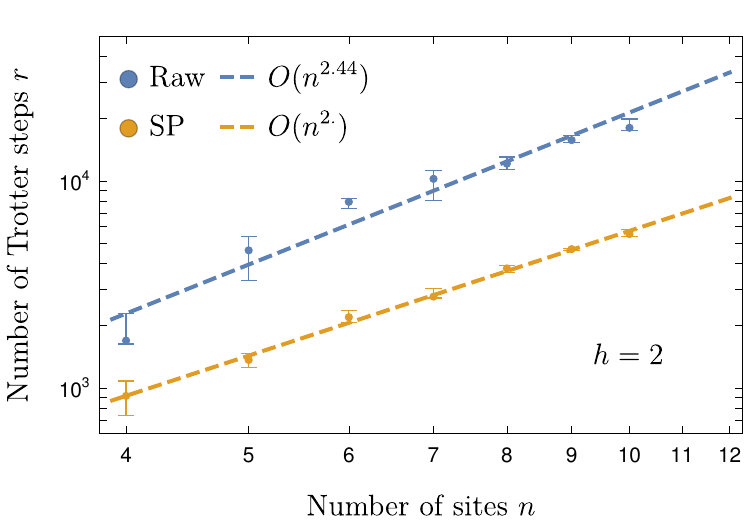}
~
\centering
\includegraphics[width=0.45\textwidth]{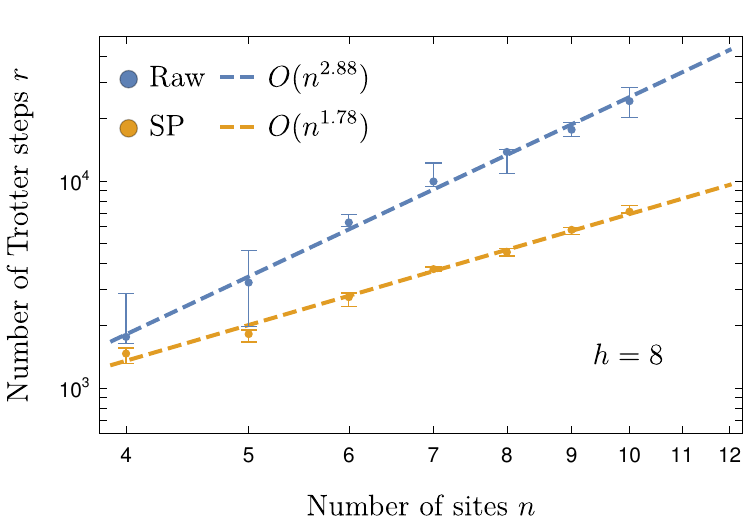}
\caption{The number of Trotter steps required for the simulation of $n$ qubits evolved under \cref{eq:HeiMBL} for time $t = n$ to meet a fixed error tolerance $\epsilon = 0.01$.
We compare this Trotter number of a simulation without any symmetry protection (``Raw'', blue) and a simulation with random symmetry protection (``SP'', orange) at $h = 2$ (left panel) and $h = 8$ (right panel), which correspond to the system being in the ETH and the MBL respectively.
The dashed lines are the linear fits of the data in the log-log scale.
The simulation is repeated 100 times with different instance of the disorder $h_i$.
The dots represent the median of the Trotter number at each $n$ and the error bars correspond to the 25\%-75\% percentile region.
The numerics show that symmetry protecting the simulation reduces the number of Trotter steps, and hence the gate count, by about 2 to 4 times in both the ETH and the MBL phases.
}\label{fig:ETHvsMBL}
\end{figure*}

\subsubsection{Many-body localization}\label{sec:mbl}
The homogeneous Heisenberg interactions without external fields considered in the previous section provides a good testbed for benchmarking the symmetry protection technique. 
In this section, we consider a more physically relevant instance of the Heisenberg model:
\begin{align} 
	H = \sum_{i=1}^{n} \vec \sigma_i \cdot \vec \sigma_{i+1} + \sum_{i=1}^{n}h_i Z_i, \label{eq:HeiMBL}
\end{align}
where we again assume homogeneity for the coupling strengths, but $J_{ij} = 1$ only when $i,j$ are nearest neighbors and $J_{ij} = 0 $ otherwise.
We also adopt the periodic boundary condition and identify the $(n+1)$th qubit as the first qubit.
In addition, we add an external magnetic field with the field strength $h_i$, each chosen randomly between $[-h,h]$.
This model describes homogeneous Heisenberg interactions with a tunable local disorder strength $h$.
At low disorder $h$, the system evolved under \cref{eq:HeiMBL} thermalizes in the long-time limit, in agreement with the Eigenstate Thermalization Hypothesis (ETH). 
However, as $h$ increases, the system transitions to a many-body localized (MBL) phase where it no longer thermalizes
(See \cite{nandkishoreManyBodyLocalizationThermalization2015} for a review of the many-body localization phenomenon.)

To simulate the dynamics of $H$, we again divide the terms of $H$ into groups of mutually commuting terms:
\begin{align} 
	H =  \underbrace{\sum_{i=1}^{n}  X_iX_{i+1}}_{\equiv H_X} + \underbrace{\sum_{i=1}^{n}  Y_iY_{i+1}}_{\equiv H_Y}
	+ \underbrace{\sum_{i=1}^{n} Z_iZ_{i+1} + \sum_{i=1}^n h_i Z_i}_{\equiv H_Z}, \label{eq:HeiMBLdecomposed} 
\end{align}
and use the first-order Trotterization similarly to \cref{eq:HeiPF1}.
To symmetry-protect this simulation, we note that the field term breaks the SU(2) symmetry of the Heisenberg interactions, leaving the system invariant under a U(1) symmetry only.
The symmetry group $\mathcal S = \left\{[\exp(-i\phi Z)]^{\otimes n}:\phi \in [0,2\pi)\right\}$ is generated by the total spin components along the $z$ axis $S_z \equiv \sum_{i=1}^n Z_i$.

While selecting the unitary transformations $C_k$ from this U(1) symmetry is no longer sufficient to completely eliminate the lowest-order error---as we have done in the previous section---we can still expect significantly reduction of the total error due to the symmetry protection and thus a lower gate count for the simulation.
In \cref{fig:ETHvsMBL}, we plot the number of Trotter steps $r$ in simulating the dynamics of \cref{eq:HeiMBL} for time $t = n$ at different values of the disorder $h$ that correspond to the ETH and the MBL phases.
The required numbers of steps are computed at each $n$ by binary searching for the minimum $r$ such that the total error of the simulation does not exceed $\epsilon = 0.01$.
\Cref{fig:ETHvsMBL} shows that protecting the simulation with the U(1) symmetry results in several times reduction in the number of Trotter steps for all values of $n$.
In addition, the Trotter number under symmetry protection also appears to scale better with the system size than in the raw simulation, suggesting an even greater advantage from the symmetry protection for simulating larger systems.

Out of curiosity, we study how the symmetry protection performs as the Hamiltonian moves across the ETH-MBL phase transition.
In \cref{fig:eth-to-mbl}, we plot the required number of steps $r$ in simulating the Hamiltonian of $n = 8$ qubits for time $t = n$ and error tolerance $\epsilon = 0.01$ as we tune the Hamiltonian from the ETH to the MBL phase~\cite{luitzManybodyLocalizationEdge2015}. 
The improvement due to the symmetry protection appears to be unaffected by the phase transition, suggesting that the symmetry protection technique can be useful for future numerical and experimental studies of the transition.

\begin{figure}[t]
\centering
\includegraphics[width=0.45\textwidth]{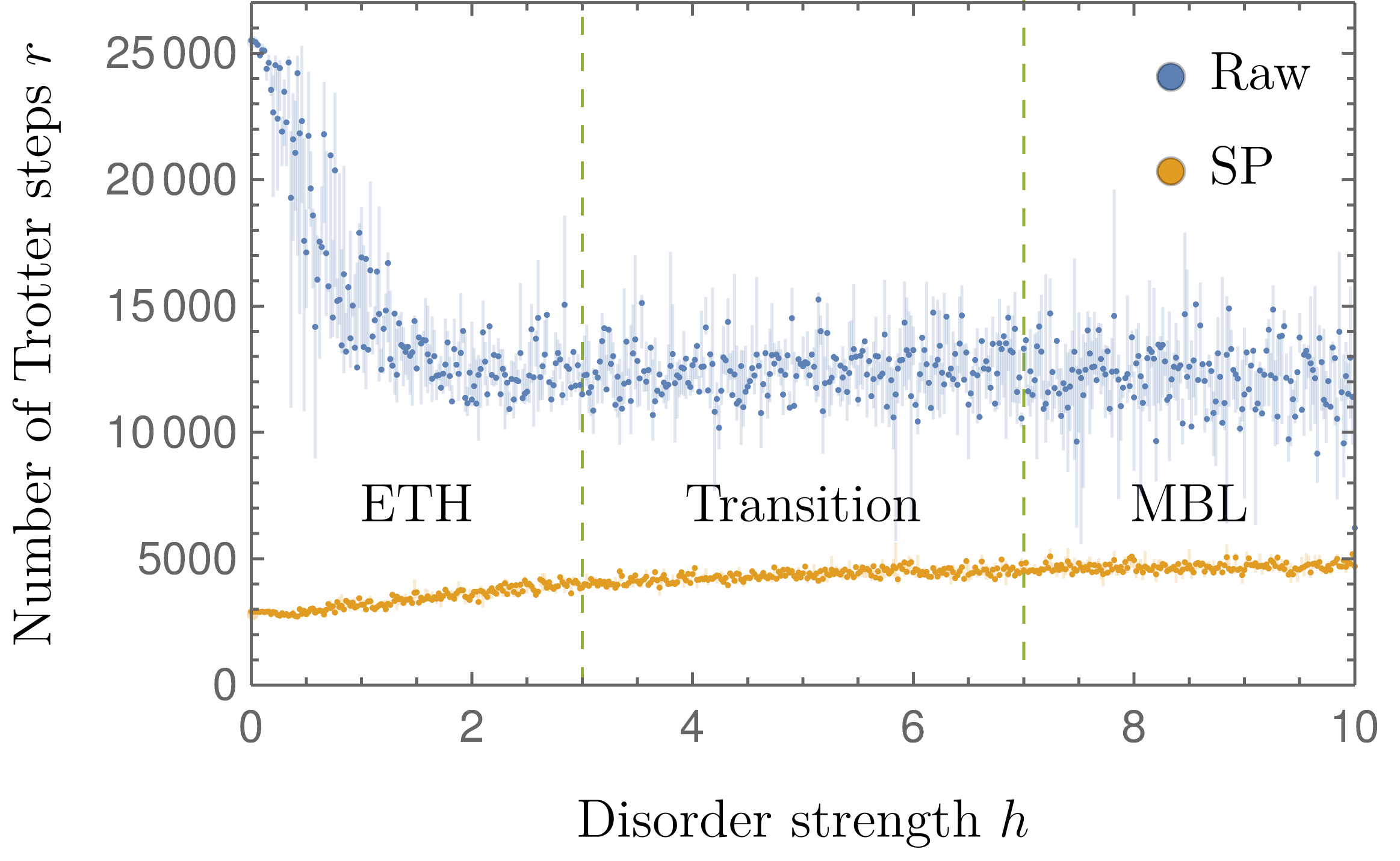}
\caption{The required number of Trotter steps in simulating the Hamiltonian \cref{eq:HeiMBL} of $n = 8$ qubits for time $t = n$ as a function of the disorder strength in an unprotected simulation (``Raw'', blue) and in a symmetry-protected simulation (``SP'', orange).
Each dot represents the median Trotter number over 100 different instances of the random fields. 
The bars correspond to the 25\%-75\% percentile region. 
}
\label{fig:eth-to-mbl}
\end{figure}

\subsection{Simulation of lattice gauge field theories}\label{sec:qft}

Quantum field theories provide another key target for quantum simulation \cite{jordanQuantumComputationScattering2019}. In particular, the quantum simulation of real-time Hamiltonian dynamics, for example scattering processes \cite{jordanQuantumAlgorithmsQuantum2012}, has attracted much attention. An important class of field theories are models with local gauge symmetry, including quantum electrodynamics, chromodynamics, and the Standard Model of particle physics in addition to many condensed matter systems. Substantial effort has gone into the study of analog \cite{haukeQuantumSimulationLattice2013,zoharColdAtomQuantumSimulator2013,davoudiAnalogQuantumSimulations2020} and digital \cite{martinezRealtimeDynamicsLattice2016b,klcoQuantumClassicalComputationSchwinger2018,klcoSUNonAbelianGauge2020,chakrabortyDigitalQuantumSimulation2020,shawQuantumAlgorithmsSimulating2020a} quantum simulation of these models.

In a gauge theory, the system is invariant under a symmetry group which acts separately at each point in space and time (see eg. \cite{kogutHamiltonianFormulationWilson1975} for a review, as well as the lattice Hamiltonian formulation, of these models). This symmetry is fundamentally a redundancy of our description of the physics which we have introduced to give a local description. The Hilbert space $\mathcal{H}$ we use to describe the system contains a subspace $\mathcal{H}_{\rm phys}$ of the physical states, those annihilated by the gauge constraints. For example, in electrodynamics, we have the charge and gauge field degrees of freedom, and the physical states are those annihilated by the Gauss law constraint $\mathcal{G} = \nabla \cdot \mathbf{E} - \rho$, where $\mathbf{E}$ is the electric field operator and $\rho$ is the charge density operator. There are many states in the full Hilbert space $\mathcal{H}$ which do not live in the kernel of $\mathcal{G}$, and these states are not allowed in nature. Although one can in principle work with a description strictly within the physical Hilbert space, it is in general computationally difficult to do the reduction. More importantly, this description would necessarily have a highly spatially non-local set of interactions, a major drawback in practice.

Thus in the simulation of a gauge theory we are faced with a fundamental source of possible errors: what if our dynamics takes us away from the physical Hilbert space? Although the exact Hamiltonian commutes with the gauge constraints, and thus leaves the physical space invariant, an approximate (for example, Trotterized) version of the Hamiltonian may  induce leakage into the unphysical space~\cite{stannigelConstrainedDynamicsZeno2014,shawQuantumAlgorithmsSimulating2020a}. In this section, we apply the symmetry protection technique and use the gauge symmetry itself to protect the simulation against this undesirable leakage \footnote{During the preparation of this manuscript, we learned of related work \cite{lammSuppressingCoherentGauge2020} which provides numerical evidence for the suppression, using gauge transformations, of the experimental drift error in simulating lattice field theories.}.

Explicitly, we consider the one-dimensional Schwinger model~\cite{COLEMAN1976239,PhysRevD.56.55,PhysRevD.66.013002,PhysRevLett.113.091601,shawQuantumAlgorithmsSimulating2020a,chakrabortyDigitalQuantumSimulation2020} consisting of $n$ sites and $n-1$ nearest-neighbor links between the sites. 
We use the formalism outlined in Ref.~\cite{shawQuantumAlgorithmsSimulating2020a}. The Hamiltonian $H = H_0 + H_1$ consists of two terms: 
\begin{align} 
	&H_0 = \sum_{i = 1}^{n-1} F_i^2 -\frac{\mu}{2} \sum_{i=1}^n (-1)^i  Z_i,\label{eq:Schwinger_H0}\\
	&H_1 = x \sum_{i=1}^{n-1}\bigg[
	\frac14(U_i+U_i^\dag)(X_iX_{i+1}+Y_iY_{i+1})\nonumber\\
	&\qquad\qquad+\frac i4(U_i-U_i^\dag)(X_iY_{i+1}-Y_iX_{i+1}) 
	\bigg] , 
\end{align}
where 
\begin{align} 
	&F_i = \sum_{j=-\Lambda}^{\Lambda-1} j \ket{j}_i\bra{j}_i,\\
	&U_i = \sum_{j=-\Lambda}^{\Lambda-2} \ket{j+1}_i\bra{j}_i + \ket{-\Lambda}_i\bra{\Lambda-1}_i,
\end{align}
and $\mu,x$ are positive constants.
Here, $H_0$ describes the on-site and on-link terms, $H_1$ describes the site-link interaction, and 
$F_i$ is the electromagnetic field operator for the link that connects the $i$th and $(i+1)$th particles. 
We note that while the second term in \cref{eq:Schwinger_H0} sometimes appears in the literature without the minus sign (see for example Ref.~\cite{chakrabortyDigitalQuantumSimulation2020}), this discrepancy is the result of different conventions for mapping between fermions and spins and does not have any physical consequences.
In a simulation, we have to put a cutoff $\Lambda$ specifying the maximum excitation number for the bosonic degree of freedom on a given link.

The Hamiltonian is subjected to local symmetries generated by the gauge operators:
\begin{align} 
	\mathcal G_i = F_i - F_{i-1} - Q_i,\label{eq:gaugeop} 
\end{align}
where $Q_i =  \frac12 \left[-{Z_i} + {(-1)^i}\right]$ counts the electric charge at site $i$.
In particular, only states $\ket{\psi}$ that satisfy $\mathcal G_i = 0$ for all $i$ are considered physical.

The physical states form a subspace $\mathcal{H}_{\text{phys}}$ which can be constructed from the kernels of the gauge operators:
\begin{align} 
	\mathcal{H}_{\text{phys}} \equiv \medcap_i \text{Ker}(\mathcal G_i), 
\end{align}
where $\text{Ker}(\mathcal G_i) = \{\ket{\phi}:\mathcal G_i\ket{\phi} = 0\}$ is the kernel of $\mathcal G_i$.

Due to various errors, an initially physical state may leak to unphysical subspace during the simulation.
Formally, we define the leakage of a state $\ket{\psi(t)}$ at time $t$ as
\begin{align} 
	1 - \abs{\bra{\psi(t)}\Pi_0\ket{\psi(t)}}, 
\end{align}
where $\Pi_0$ is the projector onto the physical subspace $\mathcal{H}_{\text{phys}}$.

To simulate $e^{-iH\dt}$ for a small time $\dt$, we first decompose it into $e^{-iH_0\dt}e^{-iH_1\dt}$ using the first order Trotterization. 
Since both $H_0,H_1$ commute with $\mathcal G_i$, this decomposition respects the gauge symmetries and does not result in leakage from the physical subspace.
However, to simulate the evolution under $H_1$, we need to further decompose it into elementary quantum gates.
For that, we follow the steps in Ref.~\cite{shawQuantumAlgorithmsSimulating2020a} and write
\begin{align} 
	U_i + U_i^\dag = A_i + \tilde A_i, 
\end{align}
where $A_i = \mathbb I\otimes \dots\otimes \mathbb I\otimes X$ and $\tilde A_i = U_i^\dag A_i U_i$.
Similarly,
\begin{align} 
	 i (U_i-U_i^\dag) = B_i+ \tilde B_i,
\end{align}
where $B_i = \mathbb I\otimes \dots\otimes \mathbb I\otimes  Y$ and $\tilde B_i = U_i^\dag B_i U_i$.
This representation allows us to decompose the evolution
\begin{align} 
	e^{-iH_0\dt}e^{-iH_1\dt} &\approx S_\dt \equiv e^{-iH_0\dt} \nonumber\\  
	\cdot &\prod_{i} e^{-\frac{1}{4}ix\dt A_i X_iX_{i+1}} 
	e^{-\frac{1}{4}ix\dt \tilde A_i X_iX_{i+1}} \nonumber\\
	\cdot & e^{-\frac{1}{4}ix\dt A_i Y_iY_{i+1}} 
	e^{-\frac{1}{4}ix\dt \tilde A_i Y_iY_{i+1}}\nonumber\\
	\cdot&
	e^{-\frac{1}{4}ix\dt  B_i X_iY_{i+1}}
	e^{-\frac{1}{4}ix\dt \tilde B_i X_iY_{i+1}}\nonumber\\
	\cdot& e^{+\frac{1}{4}ix\dt  B_i Y_iX_{i+1}}
	e^{+\frac{1}{4}ix\dt \tilde B_i Y_iX_{i+1}},\label{eq:Schwingerdecomposed}
\end{align}
into a product of three-qubit gates that can be readily implemented on quantum computers~\cite{shawQuantumAlgorithmsSimulating2020a}.
Note that the cost of simulating $e^{-\frac{1}{4}ixt \tilde A_i X_iX_{i+1}}$ is that of approximating $e^{-\frac{1}{4}ixt A_i X_iX_{i+1}}$, plus the cost of implementing $U_i$, $U_i^\dag$:
\begin{align} 
	 e^{-\frac{1}{4}ix\dt \tilde A_i X_iX_{i+1}} = U_i^\dag e^{-\frac{1}{4}ix\dt A_i X_iX_{i+1}} U_i.
\end{align}
The entire raw first-order Trotterization simulation of $e^{-iHt}$ becomes
\begin{align} 
	e^{-iHt} \approx S_\dt^r. 
\end{align}
Similarly to the Heisenberg model, we could protect this simulation by interweaving the Trotter steps with symmetry transformations of the system:
\begin{align} 
	 e^{-iHt} \approx \prod_{k=1}^r C_k^\dag S_\dt C_k, 
\end{align}
where $C_k$ are generated by the gauge operators in \cref{eq:gaugeop}.
Specifically, we choose 
\begin{align} 
	C_k = \prod_{i=1}^n  \exp(-i \phi_{k,i} \mathcal G_i) 
\end{align}
for some angles $\phi_{k,i}$.

\begin{figure}[t]
\centering
\includegraphics[width=0.45\textwidth]{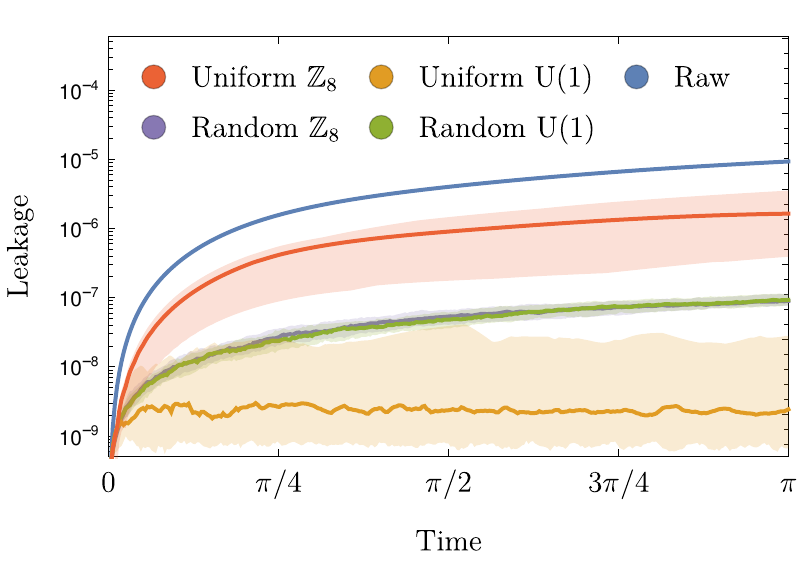}
\caption{The probability for the final state to leak outside the physical subspace due to Trotter errors in simulating the Schwinger model. We consider simulations without symmetry protection (blue) and with symmetry protection under different schemes: uniform sets of transformations drawn from $\mathbb Z_8$ (red) and U(1) (orange) and random sets of transformations drawn from $\mathbb Z_8$ (purple) and U(1) (green). The purple and green areas overlap each other almost completely. The dots correspond to the median and the shaded areas correspond to the 25\%-75\% percentile of 100 repetitions.}
\label{fig:schwinger-leakage}
\end{figure}

Since we truncate the spectrum of each bosonic link to $[-\Lambda+1,\Lambda]$, the transformations $C_{k}$ in general commute with the Hamiltonian of the system only if we choose $\phi_{k,i} = m_{k,i} \pi/\Lambda$, where $m_{k,i}$ are integers.
These transformations effectively form a $\mathbb Z_{2\Lambda}$ symmetry of the truncated Hamiltonian~\cite{kuhnQuantumSimulationSchwinger2014,ercolessiPhaseTransitionsGauge2018a}.
However, the U(1) symmetry can be recovered by assuming a vanishing background field and choosing a large enough cutoff level $\Lambda$ such that, in the physical subspace, the bosonic links never ``see'' the cutoff. 
More rigorously, if $\Lambda > n/2+1$, the transformations $C_k$ commute with $\Pi_0 H \Pi_0$, where $\Pi_0$ is the projection onto the physical subspace $\mathcal H_{\text{phys}}$, for all angles $\phi_{k,i}\in [0,2\pi)$.

In \cref{fig:schwinger-leakage}, we plot the leakage outside the physical subspace due to the Trotter error during simulations with and without symmetry protection. 
Specifically, we simulate the evolution of the ground state of the Schwinger model with 4 sites and 3 links at $x = 0.6$, $\mu = 0.1$, $\dt = 0.01$, and $\Lambda = 4$.
This choice of $\Lambda$ ensures that the Hamiltonian has a $\mathbb Z_8$ symmetry in general and a U(1) symmetry when restricted to the physical subspace.
We consider two choices of the angles $\phi_{k,i}$: $\phi_{k,i} = k\phi_{1,i}$ (``Uniform''), for some randomly chosen $\phi_{1,i}$, and $\phi_{k,i}$ chosen independently at random for each $k$ (``Random''). 
We repeat the simulation 100 times, each with a different choice of the angles.

\Cref{fig:schwinger-leakage} shows that the symmetry protection can reduce the leakage to the unphysical subspace by several orders of magnitude compared to a raw simulation.
While the leakage builds up in a raw simulation, the uniform choice of the transformations from the $U(1)$ symmetry results in bounded leakage during the entire simulation. 
This feature resembles the optimal symmetry protection discussed in \cref{sec:homo-rand-Heisenberg} for the Heisenberg models, where the symmetry protection suppresses the simulation error nearly completely.
Different choices of the symmetry transformations also affect performance of the scheme differently. 
While the random choices of transformations from $\mathbb Z_8$ and U(1) have the same effect on the leakage, the uniform choice of transformations from $\mathbb Z_8$ performs significantly worse than the U(1) counterpart.
This discrepancy is likely because we have only eight choices for the $\mathbb Z_8$ symmetry transformations, whereas with the U(1) symmetry the number of choices is theoretically infinite. Effectively, the symmetry group $\mathbb Z_8$ has less freedom and, therefore, is less effective in averaging out the simulation error than U(1).

While our analysis in \cref{sec:theory} focuses on the application to the first-order Trotterization algorithm, it is clear from the analysis that the symmetry protection will suppress any simulation errors that violate the symmetries of the system, including errors from more advanced algorithms. 
To support this claim, we provide in \cref{fig:other-algorithms} numerical evidence of the symmetry protection suppressing the leakage to unphysical subspace in simulating the Schwinger model using the second-order Suzuki-Trotter formula, the fourth-order Suzuki-Trotter formula~\cite{suzukiGeneralTheoryFractal1991}, and a multi-product formula implemented via a linear combination of unitaries~\cite{childsHamiltonianSimulationUsinga}.

\begin{figure}[t]
\centering
\subfloat[Second-order Suzuki-Trotter \label{fig:PF2}]{\includegraphics[width=0.37\textwidth]{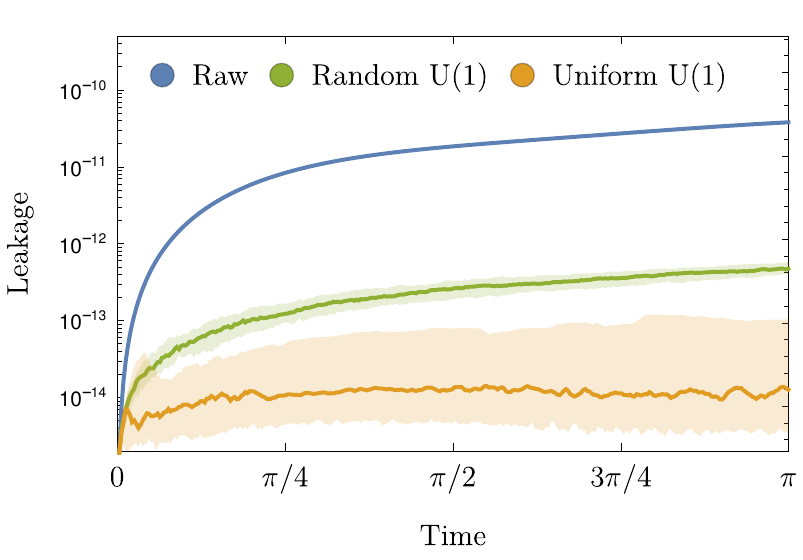}}

\subfloat[Fourth-order Suzuki-Trotter \label{fig:PF4}]{\includegraphics[width=0.37\textwidth]{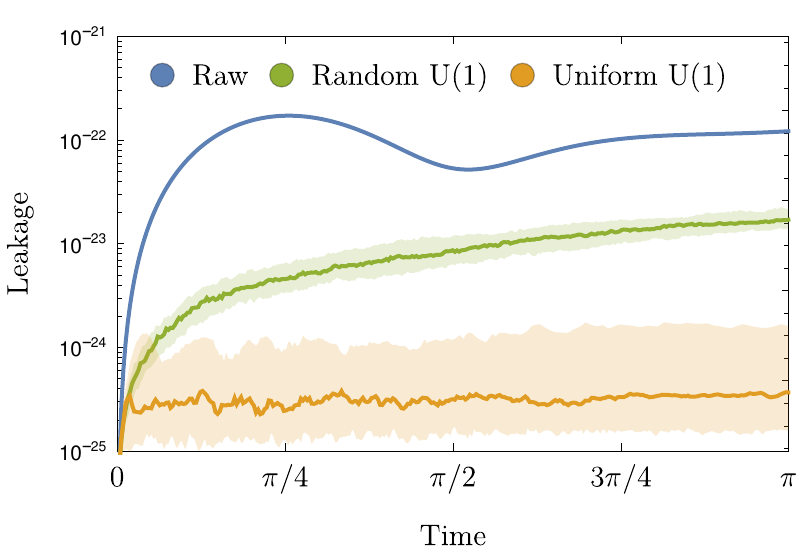}}

\subfloat[Multi-product formula via LCU \label{fig:MultiPF}]{\includegraphics[width=0.37\textwidth]{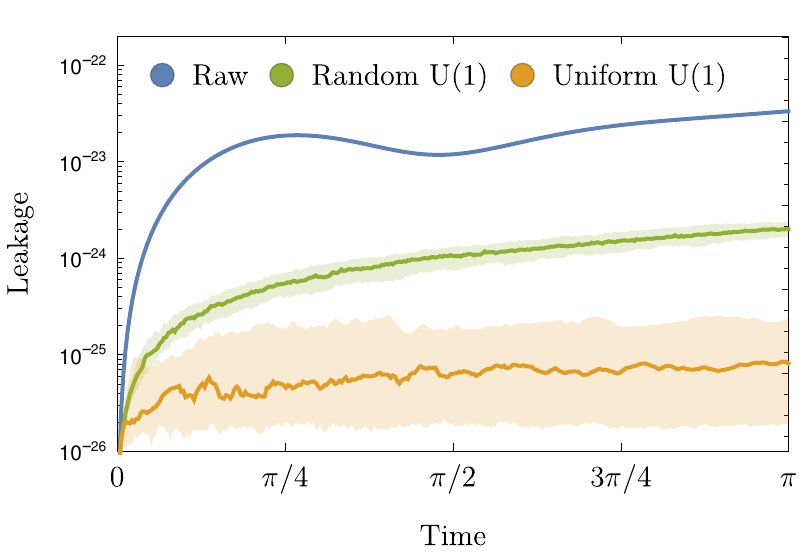}}
\caption{The leakage to the unphysical subspace as a function of time in simulating the Schwinger model using advanced algorithms. We consider a raw simulation (blue), a simulation protected by a random set of transformations drawn from the U(1) symmetry group (green), and a simulation protected by a uniform set of transformations (orange). The solid dots correspond to the median of 100 repetitions and the shaded area corresponds to the 25\%-75\% percentile.  }
\label{fig:other-algorithms}
\end{figure}

Given a Hamiltonian $H = \sum_{\nu =1}^L H_\nu$ being a sum of $L$ terms, the second-order Suzuki-Trotter formula simulate the time evolution $e^{-iH\dt}$ by
\begin{align} 
	P_2(\dt) = \prod_{\nu = 1}^{L} e^{-iH_\nu \frac{\dt}{2}}\cdot\prod_{\nu = L}^{1} e^{-iH_\nu \frac{\dt}{2}},  
\end{align}
which is correct up to an $\O{\dt^3}$ error.
The formula can be generalized to any even order $p\geq 2$ through a recursive construction~\cite{suzukiGeneralTheoryFractal1991}:
\begin{align} 
 	P_p(\dt) = P_{p-2}(\kappa_p \dt)^2P_{p-2}((1-4\kappa_p) \dt)P_{p-2}(\kappa_p \dt)^2, 
 \end{align} 
where $\kappa_p = 1/(4-4^{1/p})$.
The $p$th-order formula approximates $e^{-iH\dt}$ up to an error $\O{\dt^{p+1}}$.
Given a small $\dt$, the formulas can be made arbitrarily accurate by increasing $p$ at the cost of increasing the gate count exponentially with $p$.

In contrast, multi-product formulas~\cite{chinMultiproductSplittingRungeKuttaNystrom2010a} enable the construction of any $p$th-order approximations using only $\poly(p)$ quantum gates by approximating the time evolution by sums of product formulas.
Asymptotically, the gate counts of the multi-product formulas have polylogarithmic dependence on the inverse of the error tolerance.
Therefore, when used as a subroutine in the Lieb-Robinson-bound-based algorithm~\cite{haahQuantumAlgorithmSimulating2018}, the multi-product formulas also result in asymptotically optimal gate counts, up to polylogarithmic corrections, in simulating geometrically local systems. 
Because a sum of product formulas is generally not unitary, it must be implemented using techniques such as linear combinations of unitaries (LCU)~\cite{childsHamiltonianSimulationUsinga}, which encodes the multi-product formula into a unitary acting in a larger Hilbert space.
Here, we will simulate the Schwinger model using a multi-product formula constructed by Childs and Wiebe~\cite{childsHamiltonianSimulationUsinga}:
\begin{align} 
  	 M(\dt) = \frac{16}{15} P_2(\dt/4)^4 - \frac{1}{15} P_2(\dt), \label{eq:multiPF}
\end{align}  
which is a linear combination of two second-order product formulas.

\Cref{fig:other-algorithms} plots the leakage to the unphysical subspace during the simulation at $n = 4, x = 0.6,\mu = 0.1,\dt = 0.01$, and $\Lambda = 4$ using the second-order Suzuki-Trotter formula, the fourth-order Suzuki-Trotter formula, and the multi-product formula [\cref{eq:multiPF}] with and and without symmetry protection. 
We implement the multi-product formula using LCU and an additional ancillary qubit.
For the considered algorithms, the numerics show similar features to \cref{fig:schwinger-leakage}, where the symmetry protection suppresses the leakage by several orders of magnitude and, in particular, the uniform choice of transformations results in bounded errors throughout the simulation.
The figure therefore demonstrates the generality of our approach in protecting digital quantum simulations against errors that violate symmetries of the target system.
We note that the dips in the leakage of the raw simulations are likely due to the small system size considered in the simulations.

\begin{figure*}[t]
\centering
\includegraphics[width=0.75\textwidth]{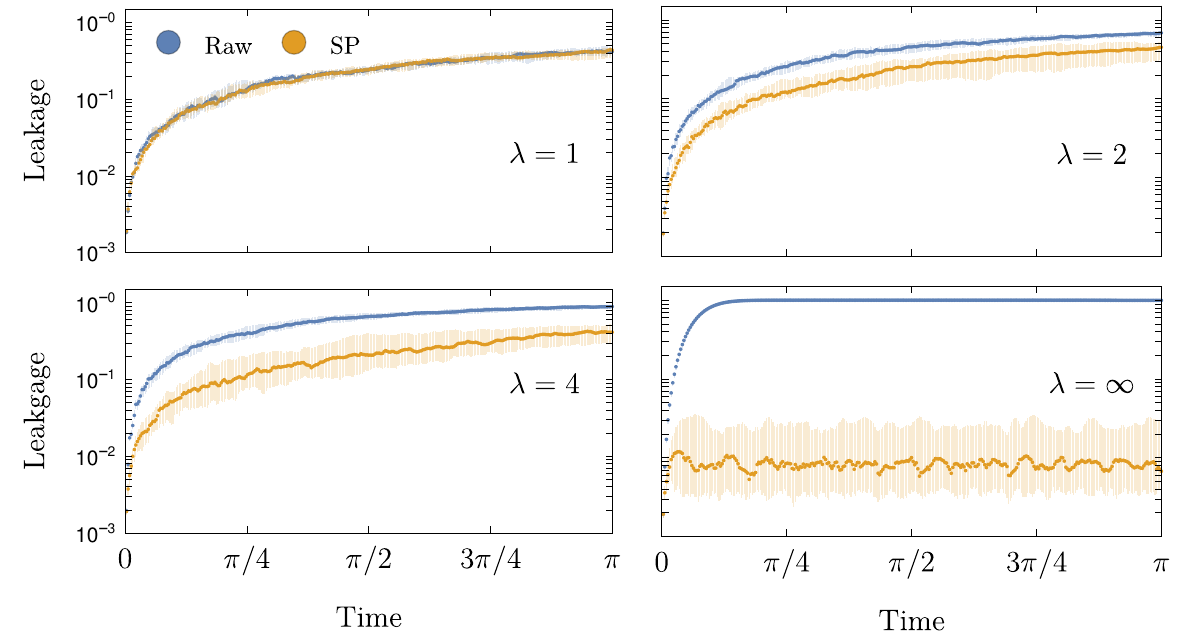}
\caption{The leakage probability due to experimental noise as a function of time at different values of the correlation length $\lambda$.
The simulation is repeated 100 times with different instances of the experimental noise.
The solid dots represent the median of the leakage and the bars correspond to the 25\%-75\% percentile regions.}
\label{fig:leakagevscorrlength}
\end{figure*}
\section{Additional protection against experimental errors}\label{sec:exp}

So far, we have demonstrated that symmetries in quantum systems can be used to suppress the simulation error of the Trotterization algorithm.
In this section, we discuss how the technique may also protect the simulation against other types of error, including the experimental errors that may arise in the implementation of Trotterization.

In our earlier derivation, we show that the lowest-order contribution to the total error is
\begin{align} 
	\norm{\overbar v_0}=\frac{1}{r}\norm{\sum_{k=1}^r C_k^\dag  v_0  C_k},
\end{align}
where $v_0$ is the lowest-order error from the simulation algorithm.
This derivation applies equally well for the case when the error $v_0$ comes from sources other than the approximations in the simulation algorithms.

However, in our analysis, we require that $v_0$ remains the same for different steps of the simulation.
In other words, the error $v_0$ for different Trotter steps are correlated in time.
In particular, an error with temporal correlation lengths being longer than the time step $\dt$ would enable us to choose the symmetry transformations such that the errors from several consecutive steps interfere destructively.
Therefore, we expect the symmetry protection technique to help reduce low-frequency noises, such as the $1/f$ noise typically found in solid-state qubit systems.

We provide numerical evidence for this argument by adding temporally correlated errors to the simulation of the Schwinger model. 
Specifically, after each step $k$ of the simulation, we apply single-qubit rotations $\exp(-i \eta\ \vec\sigma \cdot \hat n_k)$ on the system, where $\eta = 0.01$ is a small angle, around a random axis $\hat n_k$.
These rotations mimic the effect of a depolarizing channel and violate the gauge symmetries, resulting in the state leaking to the unphysical subspace.
To impart temporal correlations into this noise model, we choose the random unit vectors $\hat n_k$ again only after every $\lambda$ consecutive Trotter steps.
The parameter $\lambda$ therefore plays the role of the correlation length of the noise.

In \cref{fig:leakagevscorrlength}, we plot the probability that the state leaks to unphysical subspace (due to the simulation error) as a function of time for several values of the correlation length $\lambda$.
To study the effect of the symmetry protection technique on the added experimental noise, we use the fourth-order Trotterization in the simulation to suppress the algorithm error, making the added noise the main contributor to the leakage observed in \cref{fig:leakagevscorrlength}.
As expected, at $\lambda = 1$, the experimental error varies too fast between Trotter steps and is immune against the symmetry protection technique.
However, the technique begins to suppress the experimental error as soon as the noise becomes temporally correlated ($\lambda>1$) and becomes more effective as the correlation length $\lambda$ increases.
Even at $\lambda = 4$, we have managed to reduce error by about an of magnitude.

\section{Discussion \& Outlook}\label{sec:outlook}
In this paper, we propose a general technique to suppress the error of quantum simulation using the symmetries available in quantum systems, ultimately resulting in faster digital quantum simulation.
We have analyzed the technique when applied to the Trotterization algorithm and derived bounds on the total error of the simulation under symmetry protection. 
The bound provides insights for choosing the set of unitary transformations that optimally suppress the simulation error.
We then benchmarked the technique in simulating the Heisenberg model and lattice field theories.
Both examples showed that the symmetry protection results in significant reduction in the total error, and thus the gate count, of the simulation.  
Finally, we argue that the technique can also protect digital quantum simulation against temporally correlated noise in experiments.

An immediate future direction is to generalize the analysis in this paper to more advanced quantum simulation algorithms, such as the higher-order Suzuki-Trotter formulas~\cite{suzukiGeneralTheoryFractal1991}, the truncated Taylor series~\cite{berrySimulatingHamiltonianDynamics2015c}, or qubitization~\cite{lowHamiltonianSimulationQubitization2019b}.
We emphasize that our approach induces destructive interference between the errors from different steps of the simulation and, therefore, should suppress errors that violate the symmetries of the target system, regardless of the sources of the errors.
However, the optimal choice of the symmetry transformations depends on the exact error structure in each step of the simulation.
Since the error structures of more advanced algorithms are typically more complicated than the first-order Trotterization, it is more difficult to infer the set of symmetry transformations that optimally protects the simulation.
Nevertheless, extensive analytical and numerical studies of the effectiveness of the technique for protecting these advanced algorithms, especially when applied to the simulations of various physically relevant systems, such as the lattice field theories~\cite{martinezRealtimeDynamicsLattice2016b,klcoQuantumClassicalComputationSchwinger2018,klcoSUNonAbelianGauge2020}, or the electronic structures~\cite{Poulin15,babbushLowDepthQuantumSimulation2018,mottaLowRankRepresentations2018b,McArdle20}, would be useful for the long-term development of digital quantum simulation.

When the error structure of the algorithm is not readily available, an alternative promising approach for optimizing the set of symmetry transformations is to parameterize the transformations, variationally minimize the error of the first few simulation steps, and apply the same set of transformations repeatedly for the rest of the simulation \cite{cirstoiu2019variational}.
Understanding when such a variational approach can suppress the error in a long simulation could provide a path towards a scalable symmetry protection with a minimal calculation overhead.

In addition, our analysis in this paper focuses primarily on the error of the simulation algorithm under the symmetry protection in the full Hilbert space.
It would be interesting to, for example, build upon the recent result of Ref.~\cite{sahinogluHamiltonianSimulationLow2020} and analyze the symmetry-protected simulation error in a low-energy subspace. 

Lastly, we would like to note that, although our analysis focuses on digital quantum simulation, we expect the symmetry protection technique to apply equally well for analog quantum simulation and classical simulation of the dynamics of quantum systems.

\begin{acknowledgments}
	We thank Ryan Babbush, Andrew Childs, Su-Kuan Chu, Zohreh Davoudi, Jens Eisert, Mária Kieferová, Natalie Klco, Hank Lamm, Guang Hao Low, Nhung Nguyen, Alexander Shaw, and Nathan Wiebe for helpful discussions. Partial support for this research is provided by the Princeton Center for Complex Materials, a MRSEC supported by NSF Grant DMR No. 1420541 and by the U.S. Department of Energy, Office of Science, Office of Advanced Scientific Computing Research, Quantum Algorithms Teams and Quantum Testbed Pathfinder programs (Award No. DE-SC0019040). M.C.T. and Y.S. acknowledge additional funding by ARO MURI, NSF (Grant No. CCF-1813814) and Accelerated Research in Quantum Computing (Award No. DE-SC0020312) program. M.C.T also acknowledges DoE BES Materials and Chemical Sciences Research for Quantum Information Science program (Award No. DESC0019449), NSF PFCQC program, AFOSR, AFOSR MURI, ARL CDQI, and NSF PFC at JQI. Y.S. is supported by the Google Ph.D. Fellowship program. He also acknowledges the National Science Foundation RAISE-TAQS 1839204 and Amazon Web Services, AWS Quantum Program. The Institute for Quantum Information and Matter is an NSF Physics Frontiers Center PHY-1733907. Fermilab is operated by Fermi Research Alliance, LLC, under Contract No. DEAC02-07CH11359 with the US Department of Energy. The authors acknowledge the University of Maryland supercomputing resources (http://hpcc.umd.edu) made available for conducting the research reported in this paper.  
\end{acknowledgments}

\bibliography{sym-sim,zotero-sym-sim-do-not-edit}

%apsrev4-2.bst 2019-01-14 (MD) hand-edited version of apsrev4-1.bst
%Control: key (0)
%Control: author (8) initials jnrlst
%Control: editor formatted (1) identically to author
%Control: production of article title (0) allowed
%Control: page (0) single
%Control: year (1) truncated
%Control: production of eprint (0) enabled
\begin{thebibliography}{55}%
\makeatletter
\providecommand \@ifxundefined [1]{%
 \@ifx{#1\undefined}
}%
\providecommand \@ifnum [1]{%
 \ifnum #1\expandafter \@firstoftwo
 \else \expandafter \@secondoftwo
 \fi
}%
\providecommand \@ifx [1]{%
 \ifx #1\expandafter \@firstoftwo
 \else \expandafter \@secondoftwo
 \fi
}%
\providecommand \natexlab [1]{#1}%
\providecommand \enquote  [1]{``#1''}%
\providecommand \bibnamefont  [1]{#1}%
\providecommand \bibfnamefont [1]{#1}%
\providecommand \citenamefont [1]{#1}%
\providecommand \href@noop [0]{\@secondoftwo}%
\providecommand \href [0]{\begingroup \@sanitize@url \@href}%
\providecommand \@href[1]{\@@startlink{#1}\@@href}%
\providecommand \@@href[1]{\endgroup#1\@@endlink}%
\providecommand \@sanitize@url [0]{\catcode `\\12\catcode `\$12\catcode
  `\&12\catcode `\#12\catcode `\^12\catcode `\_12\catcode `\%12\relax}%
\providecommand \@@startlink[1]{}%
\providecommand \@@endlink[0]{}%
\providecommand \url  [0]{\begingroup\@sanitize@url \@url }%
\providecommand \@url [1]{\endgroup\@href {#1}{\urlprefix }}%
\providecommand \urlprefix  [0]{URL }%
\providecommand \Eprint [0]{\href }%
\providecommand \doibase [0]{https://doi.org/}%
\providecommand \selectlanguage [0]{\@gobble}%
\providecommand \bibinfo  [0]{\@secondoftwo}%
\providecommand \bibfield  [0]{\@secondoftwo}%
\providecommand \translation [1]{[#1]}%
\providecommand \BibitemOpen [0]{}%
\providecommand \bibitemStop [0]{}%
\providecommand \bibitemNoStop [0]{.\EOS\space}%
\providecommand \EOS [0]{\spacefactor3000\relax}%
\providecommand \BibitemShut  [1]{\csname bibitem#1\endcsname}%
\let\auto@bib@innerbib\@empty
%</preamble>
\bibitem [{\citenamefont {Blatt}\ and\ \citenamefont
  {Roos}(2012)}]{blattQuantumSimulationsTrapped2012}%
  \BibitemOpen
  \bibfield  {author} {\bibinfo {author} {\bibfnamefont {R.}~\bibnamefont
  {Blatt}}\ and\ \bibinfo {author} {\bibfnamefont {C.~F.}\ \bibnamefont
  {Roos}},\ }\bibfield  {title} {\bibinfo {title} {Quantum simulations with
  trapped ions},\ }\href {https://doi.org/10.1038/nphys2252} {\bibfield
  {journal} {\bibinfo  {journal} {Nature Physics}\ }\textbf {\bibinfo {volume}
  {8}},\ \bibinfo {pages} {277} (\bibinfo {year} {2012})}\BibitemShut {NoStop}%
\bibitem [{\citenamefont {Monroe}\ \emph {et~al.}(2019)\citenamefont {Monroe},
  \citenamefont {Campbell}, \citenamefont {Duan}, \citenamefont {Gong},
  \citenamefont {Gorshkov}, \citenamefont {Hess}, \citenamefont {Islam},
  \citenamefont {Kim}, \citenamefont {Pagano}, \citenamefont {Richerme},
  \citenamefont {Senko},\ and\ \citenamefont
  {Yao}}]{monroeProgrammableQuantumSimulations2019}%
  \BibitemOpen
  \bibfield  {author} {\bibinfo {author} {\bibfnamefont {C.}~\bibnamefont
  {Monroe}}, \bibinfo {author} {\bibfnamefont {W.~C.}\ \bibnamefont
  {Campbell}}, \bibinfo {author} {\bibfnamefont {L.-M.}\ \bibnamefont {Duan}},
  \bibinfo {author} {\bibfnamefont {Z.-X.}\ \bibnamefont {Gong}}, \bibinfo
  {author} {\bibfnamefont {A.~V.}\ \bibnamefont {Gorshkov}}, \bibinfo {author}
  {\bibfnamefont {P.}~\bibnamefont {Hess}}, \bibinfo {author} {\bibfnamefont
  {R.}~\bibnamefont {Islam}}, \bibinfo {author} {\bibfnamefont
  {K.}~\bibnamefont {Kim}}, \bibinfo {author} {\bibfnamefont {G.}~\bibnamefont
  {Pagano}}, \bibinfo {author} {\bibfnamefont {P.}~\bibnamefont {Richerme}},
  \bibinfo {author} {\bibfnamefont {C.}~\bibnamefont {Senko}},\ and\ \bibinfo
  {author} {\bibfnamefont {N.~Y.}\ \bibnamefont {Yao}},\ }\bibfield  {title}
  {\bibinfo {title} {Programmable {{Quantum Simulations}} of {{Spin Systems}}
  with {{Trapped Ions}}},\ }\href@noop {} {\bibfield  {journal} {\bibinfo
  {journal} {arXiv:1912.07845 [cond-mat, physics:quant-ph]}\ } (\bibinfo {year}
  {2019})},\ \Eprint {https://arxiv.org/abs/1912.07845} {arXiv:1912.07845
  [cond-mat, physics:quant-ph]} \BibitemShut {NoStop}%
\bibitem [{\citenamefont {Kjaergaard}\ \emph {et~al.}(2020)\citenamefont
  {Kjaergaard}, \citenamefont {Schwartz}, \citenamefont {Braum{\"u}ller},
  \citenamefont {Krantz}, \citenamefont {Wang}, \citenamefont {Gustavsson},\
  and\ \citenamefont {Oliver}}]{kjaergaardSuperconductingQubitsCurrent2020}%
  \BibitemOpen
  \bibfield  {author} {\bibinfo {author} {\bibfnamefont {M.}~\bibnamefont
  {Kjaergaard}}, \bibinfo {author} {\bibfnamefont {M.~E.}\ \bibnamefont
  {Schwartz}}, \bibinfo {author} {\bibfnamefont {J.}~\bibnamefont
  {Braum{\"u}ller}}, \bibinfo {author} {\bibfnamefont {P.}~\bibnamefont
  {Krantz}}, \bibinfo {author} {\bibfnamefont {J.~I.-J.}\ \bibnamefont {Wang}},
  \bibinfo {author} {\bibfnamefont {S.}~\bibnamefont {Gustavsson}},\ and\
  \bibinfo {author} {\bibfnamefont {W.~D.}\ \bibnamefont {Oliver}},\ }\bibfield
   {title} {\bibinfo {title} {Superconducting {{Qubits}}: {{Current State}} of
  {{Play}}},\ }\href {https://doi.org/10.1146/annurev-conmatphys-031119-050605}
  {\bibfield  {journal} {\bibinfo  {journal} {Annu. Rev. Condens. Matter
  Phys.}\ }\textbf {\bibinfo {volume} {11}},\ \bibinfo {pages} {369} (\bibinfo
  {year} {2020})},\ \Eprint {https://arxiv.org/abs/1905.13641}
  {arXiv:1905.13641} \BibitemShut {NoStop}%
\bibitem [{\citenamefont {Suzuki}(1991)}]{suzukiGeneralTheoryFractal1991}%
  \BibitemOpen
  \bibfield  {author} {\bibinfo {author} {\bibfnamefont {M.}~\bibnamefont
  {Suzuki}},\ }\bibfield  {title} {\bibinfo {title} {General theory of fractal
  path integrals with applications to many-body theories and statistical
  physics},\ }\href {https://doi.org/10.1063/1.529425} {\bibfield  {journal}
  {\bibinfo  {journal} {Journal of Mathematical Physics}\ }\textbf {\bibinfo
  {volume} {32}},\ \bibinfo {pages} {400} (\bibinfo {year} {1991})}\BibitemShut
  {NoStop}%
\bibitem [{\citenamefont {Berry}\ \emph {et~al.}(2015)\citenamefont {Berry},
  \citenamefont {Childs}, \citenamefont {Cleve}, \citenamefont {Kothari},\ and\
  \citenamefont {Somma}}]{berrySimulatingHamiltonianDynamics2015c}%
  \BibitemOpen
  \bibfield  {author} {\bibinfo {author} {\bibfnamefont {D.~W.}\ \bibnamefont
  {Berry}}, \bibinfo {author} {\bibfnamefont {A.~M.}\ \bibnamefont {Childs}},
  \bibinfo {author} {\bibfnamefont {R.}~\bibnamefont {Cleve}}, \bibinfo
  {author} {\bibfnamefont {R.}~\bibnamefont {Kothari}},\ and\ \bibinfo {author}
  {\bibfnamefont {R.~D.}\ \bibnamefont {Somma}},\ }\bibfield  {title} {\bibinfo
  {title} {Simulating {{Hamiltonian}} dynamics with a truncated {{Taylor}}
  series},\ }\href {https://doi.org/10.1103/PhysRevLett.114.090502} {\bibfield
  {journal} {\bibinfo  {journal} {Phys. Rev. Lett.}\ }\textbf {\bibinfo
  {volume} {114}},\ \bibinfo {pages} {090502} (\bibinfo {year} {2015})},\
  \Eprint {https://arxiv.org/abs/1412.4687} {arXiv:1412.4687} \BibitemShut
  {NoStop}%
\bibitem [{\citenamefont {Low}\ and\ \citenamefont
  {Chuang}(2019)}]{lowHamiltonianSimulationQubitization2019b}%
  \BibitemOpen
  \bibfield  {author} {\bibinfo {author} {\bibfnamefont {G.~H.}\ \bibnamefont
  {Low}}\ and\ \bibinfo {author} {\bibfnamefont {I.~L.}\ \bibnamefont
  {Chuang}},\ }\bibfield  {title} {\bibinfo {title} {Hamiltonian {{Simulation}}
  by {{Qubitization}}},\ }\href {https://doi.org/10.22331/q-2019-07-12-163}
  {\bibfield  {journal} {\bibinfo  {journal} {Quantum}\ }\textbf {\bibinfo
  {volume} {3}},\ \bibinfo {pages} {163} (\bibinfo {year} {2019})},\ \Eprint
  {https://arxiv.org/abs/1610.06546} {arXiv:1610.06546} \BibitemShut {NoStop}%
\bibitem [{\citenamefont {Childs}\ and\ \citenamefont
  {Wiebe}()}]{childsHamiltonianSimulationUsinga}%
  \BibitemOpen
  \bibfield  {author} {\bibinfo {author} {\bibfnamefont {A.~M.}\ \bibnamefont
  {Childs}}\ and\ \bibinfo {author} {\bibfnamefont {N.}~\bibnamefont {Wiebe}},\
  }\bibfield  {title} {\bibinfo {title} {Hamiltonian {{Simulation Using Linear
  Combinations}} of {{Unitary Operations}}},\ }\bibfield  {journal} {\bibinfo
  {journal} {QIC}\ }\textbf {\bibinfo {volume} {12}},\ \href
  {https://doi.org/10.26421/QIC12.11-12} {10.26421/QIC12.11-12},\ \Eprint
  {https://arxiv.org/abs/1202.5822} {arXiv:1202.5822} \BibitemShut {NoStop}%
\bibitem [{\citenamefont {Low}\ \emph {et~al.}(2019)\citenamefont {Low},
  \citenamefont {Kliuchnikov},\ and\ \citenamefont
  {Wiebe}}]{lowWellconditionedMultiproductHamiltonian2019}%
  \BibitemOpen
  \bibfield  {author} {\bibinfo {author} {\bibfnamefont {G.~H.}\ \bibnamefont
  {Low}}, \bibinfo {author} {\bibfnamefont {V.}~\bibnamefont {Kliuchnikov}},\
  and\ \bibinfo {author} {\bibfnamefont {N.}~\bibnamefont {Wiebe}},\ }\bibfield
   {title} {\bibinfo {title} {Well-conditioned multiproduct {{Hamiltonian}}
  simulation},\ }\href@noop {} {\bibfield  {journal} {\bibinfo  {journal}
  {arXiv:1907.11679 [physics, physics:quant-ph]}\ } (\bibinfo {year} {2019})},\
  \Eprint {https://arxiv.org/abs/1907.11679} {arXiv:1907.11679 [physics,
  physics:quant-ph]} \BibitemShut {NoStop}%
\bibitem [{\citenamefont {Haah}\ \emph {et~al.}(2018)\citenamefont {Haah},
  \citenamefont {Hastings}, \citenamefont {Kothari},\ and\ \citenamefont
  {Low}}]{haahQuantumAlgorithmSimulating2018}%
  \BibitemOpen
  \bibfield  {author} {\bibinfo {author} {\bibfnamefont {J.}~\bibnamefont
  {Haah}}, \bibinfo {author} {\bibfnamefont {M.~B.}\ \bibnamefont {Hastings}},
  \bibinfo {author} {\bibfnamefont {R.}~\bibnamefont {Kothari}},\ and\ \bibinfo
  {author} {\bibfnamefont {G.~H.}\ \bibnamefont {Low}},\ }\bibfield  {title}
  {\bibinfo {title} {Quantum algorithm for simulating real time evolution of
  lattice {{Hamiltonians}}},\ }\href@noop {} {\bibfield  {journal} {\bibinfo
  {journal} {arXiv:1801.03922 [quant-ph]}\ } (\bibinfo {year} {2018})},\
  \Eprint {https://arxiv.org/abs/1801.03922} {arXiv:1801.03922 [quant-ph]}
  \BibitemShut {NoStop}%
\bibitem [{\citenamefont {Tran}\ \emph {et~al.}(2019)\citenamefont {Tran},
  \citenamefont {Guo}, \citenamefont {Su}, \citenamefont {Garrison},
  \citenamefont {Eldredge}, \citenamefont {{Foss-Feig}}, \citenamefont
  {Childs},\ and\ \citenamefont {Gorshkov}}]{tranLocalityDigitalQuantum2019}%
  \BibitemOpen
  \bibfield  {author} {\bibinfo {author} {\bibfnamefont {M.~C.}\ \bibnamefont
  {Tran}}, \bibinfo {author} {\bibfnamefont {A.~Y.}\ \bibnamefont {Guo}},
  \bibinfo {author} {\bibfnamefont {Y.}~\bibnamefont {Su}}, \bibinfo {author}
  {\bibfnamefont {J.~R.}\ \bibnamefont {Garrison}}, \bibinfo {author}
  {\bibfnamefont {Z.}~\bibnamefont {Eldredge}}, \bibinfo {author}
  {\bibfnamefont {M.}~\bibnamefont {{Foss-Feig}}}, \bibinfo {author}
  {\bibfnamefont {A.~M.}\ \bibnamefont {Childs}},\ and\ \bibinfo {author}
  {\bibfnamefont {A.~V.}\ \bibnamefont {Gorshkov}},\ }\bibfield  {title}
  {\bibinfo {title} {Locality and digital quantum simulation of power-law
  interactions},\ }\href {https://doi.org/10.1103/PhysRevX.9.031006} {\bibfield
   {journal} {\bibinfo  {journal} {Phys. Rev. X}\ }\textbf {\bibinfo {volume}
  {9}},\ \bibinfo {pages} {031006} (\bibinfo {year} {2019})},\ \Eprint
  {https://arxiv.org/abs/1808.05225} {arXiv:1808.05225} \BibitemShut {NoStop}%
\bibitem [{\citenamefont {Campbell}(2019)}]{campbellRandomCompilerFast2019}%
  \BibitemOpen
  \bibfield  {author} {\bibinfo {author} {\bibfnamefont {E.}~\bibnamefont
  {Campbell}},\ }\bibfield  {title} {\bibinfo {title} {A random compiler for
  fast {{Hamiltonian}} simulation},\ }\href
  {https://doi.org/10.1103/PhysRevLett.123.070503} {\bibfield  {journal}
  {\bibinfo  {journal} {Phys. Rev. Lett.}\ }\textbf {\bibinfo {volume} {123}},\
  \bibinfo {pages} {070503} (\bibinfo {year} {2019})},\ \Eprint
  {https://arxiv.org/abs/1811.08017} {arXiv:1811.08017} \BibitemShut {NoStop}%
\bibitem [{\citenamefont {Childs}\ \emph {et~al.}(2019)\citenamefont {Childs},
  \citenamefont {Ostrander},\ and\ \citenamefont
  {Su}}]{childsFasterQuantumSimulation2019}%
  \BibitemOpen
  \bibfield  {author} {\bibinfo {author} {\bibfnamefont {A.~M.}\ \bibnamefont
  {Childs}}, \bibinfo {author} {\bibfnamefont {A.}~\bibnamefont {Ostrander}},\
  and\ \bibinfo {author} {\bibfnamefont {Y.}~\bibnamefont {Su}},\ }\bibfield
  {title} {\bibinfo {title} {Faster quantum simulation by randomization},\
  }\href {https://doi.org/10.22331/q-2019-09-02-182} {\bibfield  {journal}
  {\bibinfo  {journal} {Quantum}\ }\textbf {\bibinfo {volume} {3}},\ \bibinfo
  {pages} {182} (\bibinfo {year} {2019})},\ \Eprint
  {https://arxiv.org/abs/1805.08385} {arXiv:1805.08385} \BibitemShut {NoStop}%
\bibitem [{\citenamefont {Kuhlmann}\ \emph {et~al.}(2013)\citenamefont
  {Kuhlmann}, \citenamefont {Houel}, \citenamefont {Ludwig}, \citenamefont
  {Greuter}, \citenamefont {Reuter}, \citenamefont {Wieck}, \citenamefont
  {Poggio},\ and\ \citenamefont {Warburton}}]{kuhlmannChargeNoiseSpin2013}%
  \BibitemOpen
  \bibfield  {author} {\bibinfo {author} {\bibfnamefont {A.~V.}\ \bibnamefont
  {Kuhlmann}}, \bibinfo {author} {\bibfnamefont {J.}~\bibnamefont {Houel}},
  \bibinfo {author} {\bibfnamefont {A.}~\bibnamefont {Ludwig}}, \bibinfo
  {author} {\bibfnamefont {L.}~\bibnamefont {Greuter}}, \bibinfo {author}
  {\bibfnamefont {D.}~\bibnamefont {Reuter}}, \bibinfo {author} {\bibfnamefont
  {A.~D.}\ \bibnamefont {Wieck}}, \bibinfo {author} {\bibfnamefont
  {M.}~\bibnamefont {Poggio}},\ and\ \bibinfo {author} {\bibfnamefont {R.~J.}\
  \bibnamefont {Warburton}},\ }\bibfield  {title} {{\selectlanguage
  {en}\bibinfo {title} {Charge noise and spin noise in a semiconductor quantum
  device}},\ }\href {https://doi.org/10.1038/nphys2688} {\bibfield  {journal}
  {\bibinfo  {journal} {Nature Physics}\ }\textbf {\bibinfo {volume} {9}},\
  \bibinfo {pages} {570} (\bibinfo {year} {2013})}\BibitemShut {NoStop}%
\bibitem [{\citenamefont {Zanardi}(1999)}]{zanardi1999symmetrizing}%
  \BibitemOpen
  \bibfield  {author} {\bibinfo {author} {\bibfnamefont {P.}~\bibnamefont
  {Zanardi}},\ }\bibfield  {title} {\bibinfo {title} {Symmetrizing
  evolutions},\ }\href@noop {} {\bibfield  {journal} {\bibinfo  {journal}
  {Physics Letters A}\ }\textbf {\bibinfo {volume} {258}},\ \bibinfo {pages}
  {77} (\bibinfo {year} {1999})}\BibitemShut {NoStop}%
\bibitem [{\citenamefont {Viola}\ \emph
  {et~al.}(1999{\natexlab{a}})\citenamefont {Viola}, \citenamefont {Knill},\
  and\ \citenamefont {Lloyd}}]{violaDynamicalDecouplingOpen1999}%
  \BibitemOpen
  \bibfield  {author} {\bibinfo {author} {\bibfnamefont {L.}~\bibnamefont
  {Viola}}, \bibinfo {author} {\bibfnamefont {E.}~\bibnamefont {Knill}},\ and\
  \bibinfo {author} {\bibfnamefont {S.}~\bibnamefont {Lloyd}},\ }\bibfield
  {title} {\bibinfo {title} {Dynamical {{Decoupling}} of {{Open Quantum
  Systems}}},\ }\href {https://doi.org/10.1103/PhysRevLett.82.2417} {\bibfield
  {journal} {\bibinfo  {journal} {Phys. Rev. Lett.}\ }\textbf {\bibinfo
  {volume} {82}},\ \bibinfo {pages} {2417} (\bibinfo {year}
  {1999}{\natexlab{a}})}\BibitemShut {NoStop}%
\bibitem [{\citenamefont {Facchi}\ \emph {et~al.}(2004)\citenamefont {Facchi},
  \citenamefont {Lidar},\ and\ \citenamefont {Pascazio}}]{Facchi04}%
  \BibitemOpen
  \bibfield  {author} {\bibinfo {author} {\bibfnamefont {P.}~\bibnamefont
  {Facchi}}, \bibinfo {author} {\bibfnamefont {D.~A.}\ \bibnamefont {Lidar}},\
  and\ \bibinfo {author} {\bibfnamefont {S.}~\bibnamefont {Pascazio}},\
  }\bibfield  {title} {\bibinfo {title} {Unification of dynamical decoupling
  and the quantum zeno effect},\ }\href
  {https://doi.org/10.1103/PhysRevA.69.032314} {\bibfield  {journal} {\bibinfo
  {journal} {Phys. Rev. A}\ }\textbf {\bibinfo {volume} {69}},\ \bibinfo
  {pages} {032314} (\bibinfo {year} {2004})}\BibitemShut {NoStop}%
\bibitem [{\citenamefont {Khodjasteh}\ and\ \citenamefont
  {Lidar}(2008)}]{Khodjasteh08}%
  \BibitemOpen
  \bibfield  {author} {\bibinfo {author} {\bibfnamefont {K.}~\bibnamefont
  {Khodjasteh}}\ and\ \bibinfo {author} {\bibfnamefont {D.~A.}\ \bibnamefont
  {Lidar}},\ }\bibfield  {title} {\bibinfo {title} {Rigorous bounds on the
  performance of a hybrid dynamical-decoupling quantum-computing scheme},\
  }\href {https://doi.org/10.1103/PhysRevA.78.012355} {\bibfield  {journal}
  {\bibinfo  {journal} {Phys. Rev. A}\ }\textbf {\bibinfo {volume} {78}},\
  \bibinfo {pages} {012355} (\bibinfo {year} {2008})}\BibitemShut {NoStop}%
\bibitem [{\citenamefont {Viola}\ \emph
  {et~al.}(1999{\natexlab{b}})\citenamefont {Viola}, \citenamefont {Lloyd},\
  and\ \citenamefont {Knill}}]{violaUniversalControlDecoupled1999}%
  \BibitemOpen
  \bibfield  {author} {\bibinfo {author} {\bibfnamefont {L.}~\bibnamefont
  {Viola}}, \bibinfo {author} {\bibfnamefont {S.}~\bibnamefont {Lloyd}},\ and\
  \bibinfo {author} {\bibfnamefont {E.}~\bibnamefont {Knill}},\ }\bibfield
  {title} {{\selectlanguage {en}\bibinfo {title} {Universal {{Control}} of
  {{Decoupled Quantum Systems}}}},\ }\href
  {https://doi.org/10.1103/PhysRevLett.83.4888} {\bibfield  {journal} {\bibinfo
   {journal} {Phys. Rev. Lett.}\ }\textbf {\bibinfo {volume} {83}},\ \bibinfo
  {pages} {4888} (\bibinfo {year} {1999}{\natexlab{b}})}\BibitemShut {NoStop}%
\bibitem [{\citenamefont {Ng}\ \emph {et~al.}(2011)\citenamefont {Ng},
  \citenamefont {Lidar},\ and\ \citenamefont
  {Preskill}}]{ngCombiningDynamicalDecoupling2011}%
  \BibitemOpen
  \bibfield  {author} {\bibinfo {author} {\bibfnamefont {H.~K.}\ \bibnamefont
  {Ng}}, \bibinfo {author} {\bibfnamefont {D.~A.}\ \bibnamefont {Lidar}},\ and\
  \bibinfo {author} {\bibfnamefont {J.}~\bibnamefont {Preskill}},\ }\bibfield
  {title} {\bibinfo {title} {Combining dynamical decoupling with fault-tolerant
  quantum computation},\ }\href {https://doi.org/10.1103/PhysRevA.84.012305}
  {\bibfield  {journal} {\bibinfo  {journal} {Phys. Rev. A}\ }\textbf {\bibinfo
  {volume} {84}},\ \bibinfo {pages} {012305} (\bibinfo {year}
  {2011})}\BibitemShut {NoStop}%
\bibitem [{\citenamefont {Burgarth}\ \emph {et~al.}(2019)\citenamefont
  {Burgarth}, \citenamefont {Facchi}, \citenamefont {Gramegna},\ and\
  \citenamefont {Pascazio}}]{burgarthGeneralizedProductFormulas2019}%
  \BibitemOpen
  \bibfield  {author} {\bibinfo {author} {\bibfnamefont {D.}~\bibnamefont
  {Burgarth}}, \bibinfo {author} {\bibfnamefont {P.}~\bibnamefont {Facchi}},
  \bibinfo {author} {\bibfnamefont {G.}~\bibnamefont {Gramegna}},\ and\
  \bibinfo {author} {\bibfnamefont {S.}~\bibnamefont {Pascazio}},\ }\bibfield
  {title} {\bibinfo {title} {Generalized {{Product Formulas}} and {{Quantum
  Control}}},\ }\href {https://doi.org/10.1088/1751-8121/ab4403} {\bibfield
  {journal} {\bibinfo  {journal} {J. Phys. A: Math. Theor.}\ }\textbf {\bibinfo
  {volume} {52}},\ \bibinfo {pages} {435301} (\bibinfo {year} {2019})},\
  \Eprint {https://arxiv.org/abs/1906.04498} {arXiv:1906.04498} \BibitemShut
  {NoStop}%
\bibitem [{\citenamefont {Stannigel}\ \emph {et~al.}(2014)\citenamefont
  {Stannigel}, \citenamefont {Hauke}, \citenamefont {Marcos}, \citenamefont
  {Hafezi}, \citenamefont {Diehl}, \citenamefont {Dalmonte},\ and\
  \citenamefont {Zoller}}]{stannigelConstrainedDynamicsZeno2014}%
  \BibitemOpen
  \bibfield  {author} {\bibinfo {author} {\bibfnamefont {K.}~\bibnamefont
  {Stannigel}}, \bibinfo {author} {\bibfnamefont {P.}~\bibnamefont {Hauke}},
  \bibinfo {author} {\bibfnamefont {D.}~\bibnamefont {Marcos}}, \bibinfo
  {author} {\bibfnamefont {M.}~\bibnamefont {Hafezi}}, \bibinfo {author}
  {\bibfnamefont {S.}~\bibnamefont {Diehl}}, \bibinfo {author} {\bibfnamefont
  {M.}~\bibnamefont {Dalmonte}},\ and\ \bibinfo {author} {\bibfnamefont
  {P.}~\bibnamefont {Zoller}},\ }\bibfield  {title} {\bibinfo {title}
  {Constrained {{Dynamics}} via the {{Zeno Effect}} in {{Quantum Simulation}}:
  {{Implementing Non}}-{{Abelian Lattice Gauge Theories}} with {{Cold
  Atoms}}},\ }\href {https://doi.org/10.1103/PhysRevLett.112.120406} {\bibfield
   {journal} {\bibinfo  {journal} {Phys. Rev. Lett.}\ }\textbf {\bibinfo
  {volume} {112}},\ \bibinfo {pages} {120406} (\bibinfo {year}
  {2014})}\BibitemShut {NoStop}%
\bibitem [{\citenamefont {Childs}\ \emph {et~al.}(2020)\citenamefont {Childs},
  \citenamefont {Su}, \citenamefont {Tran}, \citenamefont {Wiebe},\ and\
  \citenamefont {Zhu}}]{childsTheoryTrotterError2020}%
  \BibitemOpen
  \bibfield  {author} {\bibinfo {author} {\bibfnamefont {A.~M.}\ \bibnamefont
  {Childs}}, \bibinfo {author} {\bibfnamefont {Y.}~\bibnamefont {Su}}, \bibinfo
  {author} {\bibfnamefont {M.~C.}\ \bibnamefont {Tran}}, \bibinfo {author}
  {\bibfnamefont {N.}~\bibnamefont {Wiebe}},\ and\ \bibinfo {author}
  {\bibfnamefont {S.}~\bibnamefont {Zhu}},\ }\bibfield  {title} {\bibinfo
  {title} {A {{Theory}} of {{Trotter Error}}},\ }\href@noop {} {\bibfield
  {journal} {\bibinfo  {journal} {arXiv:1912.08854 [cond-mat, physics:physics,
  physics:quant-ph]}\ } (\bibinfo {year} {2020})},\ \Eprint
  {https://arxiv.org/abs/1912.08854} {arXiv:1912.08854 [cond-mat,
  physics:physics, physics:quant-ph]} \BibitemShut {NoStop}%
\bibitem [{Note1()}]{Note1}%
  \BibitemOpen
  \bibinfo {note} {In most quantum computer architectures, implementing one
  Trotter step costs the same amount of gates regardless of the time step
  $\delta t$. Here, we also assume that the cost of implementing each symmetry
  transformation is negligible compared to the simulation circuit. Therefore,
  given a fixed $n$, the Trotter number $r$ is proportional to the gate count
  of the simulation.}\BibitemShut {Stop}%
\bibitem [{\citenamefont {Nandkishore}\ and\ \citenamefont
  {Huse}(2015)}]{nandkishoreManyBodyLocalizationThermalization2015}%
  \BibitemOpen
  \bibfield  {author} {\bibinfo {author} {\bibfnamefont {R.}~\bibnamefont
  {Nandkishore}}\ and\ \bibinfo {author} {\bibfnamefont {D.~A.}\ \bibnamefont
  {Huse}},\ }\bibfield  {title} {\bibinfo {title} {Many-{{Body Localization}}
  and {{Thermalization}} in {{Quantum Statistical Mechanics}}},\ }\href
  {https://doi.org/10.1146/annurev-conmatphys-031214-014726} {\bibfield
  {journal} {\bibinfo  {journal} {Annual Review of Condensed Matter Physics}\
  }\textbf {\bibinfo {volume} {6}},\ \bibinfo {pages} {15} (\bibinfo {year}
  {2015})}\BibitemShut {NoStop}%
\bibitem [{\citenamefont {Luitz}\ \emph {et~al.}(2015)\citenamefont {Luitz},
  \citenamefont {Laflorencie},\ and\ \citenamefont
  {Alet}}]{luitzManybodyLocalizationEdge2015}%
  \BibitemOpen
  \bibfield  {author} {\bibinfo {author} {\bibfnamefont {D.~J.}\ \bibnamefont
  {Luitz}}, \bibinfo {author} {\bibfnamefont {N.}~\bibnamefont {Laflorencie}},\
  and\ \bibinfo {author} {\bibfnamefont {F.}~\bibnamefont {Alet}},\ }\bibfield
  {title} {{\selectlanguage {en}\bibinfo {title} {Many-body localization edge
  in the random-field {{Heisenberg}} chain}},\ }\href
  {https://doi.org/10.1103/PhysRevB.91.081103} {\bibfield  {journal} {\bibinfo
  {journal} {Phys. Rev. B}\ }\textbf {\bibinfo {volume} {91}},\ \bibinfo
  {pages} {081103} (\bibinfo {year} {2015})}\BibitemShut {NoStop}%
\bibitem [{\citenamefont {Jordan}\ \emph {et~al.}(2019)\citenamefont {Jordan},
  \citenamefont {Lee},\ and\ \citenamefont
  {Preskill}}]{jordanQuantumComputationScattering2019}%
  \BibitemOpen
  \bibfield  {author} {\bibinfo {author} {\bibfnamefont {S.~P.}\ \bibnamefont
  {Jordan}}, \bibinfo {author} {\bibfnamefont {K.~S.~M.}\ \bibnamefont {Lee}},\
  and\ \bibinfo {author} {\bibfnamefont {J.}~\bibnamefont {Preskill}},\
  }\bibfield  {title} {\bibinfo {title} {Quantum {{Computation}} of
  {{Scattering}} in {{Scalar Quantum Field Theories}}},\ }\href@noop {}
  {\bibfield  {journal} {\bibinfo  {journal} {arXiv:1112.4833 [hep-th,
  physics:quant-ph]}\ } (\bibinfo {year} {2019})},\ \Eprint
  {https://arxiv.org/abs/1112.4833} {arXiv:1112.4833 [hep-th,
  physics:quant-ph]} \BibitemShut {NoStop}%
\bibitem [{\citenamefont {Jordan}\ \emph {et~al.}(2012)\citenamefont {Jordan},
  \citenamefont {Lee},\ and\ \citenamefont
  {Preskill}}]{jordanQuantumAlgorithmsQuantum2012}%
  \BibitemOpen
  \bibfield  {author} {\bibinfo {author} {\bibfnamefont {S.~P.}\ \bibnamefont
  {Jordan}}, \bibinfo {author} {\bibfnamefont {K.~S.~M.}\ \bibnamefont {Lee}},\
  and\ \bibinfo {author} {\bibfnamefont {J.}~\bibnamefont {Preskill}},\
  }\bibfield  {title} {\bibinfo {title} {Quantum {{Algorithms}} for {{Quantum
  Field Theories}}},\ }\href {https://doi.org/10.1126/science.1217069}
  {\bibfield  {journal} {\bibinfo  {journal} {Science}\ }\textbf {\bibinfo
  {volume} {336}},\ \bibinfo {pages} {1130} (\bibinfo {year} {2012})},\ \Eprint
  {https://arxiv.org/abs/1111.3633} {arXiv:1111.3633} \BibitemShut {NoStop}%
\bibitem [{\citenamefont {Hauke}\ \emph {et~al.}(2013)\citenamefont {Hauke},
  \citenamefont {Marcos}, \citenamefont {Dalmonte},\ and\ \citenamefont
  {Zoller}}]{haukeQuantumSimulationLattice2013}%
  \BibitemOpen
  \bibfield  {author} {\bibinfo {author} {\bibfnamefont {P.}~\bibnamefont
  {Hauke}}, \bibinfo {author} {\bibfnamefont {D.}~\bibnamefont {Marcos}},
  \bibinfo {author} {\bibfnamefont {M.}~\bibnamefont {Dalmonte}},\ and\
  \bibinfo {author} {\bibfnamefont {P.}~\bibnamefont {Zoller}},\ }\bibfield
  {title} {\bibinfo {title} {Quantum simulation of a lattice {{Schwinger}}
  model in a chain of trapped ions},\ }\href
  {https://doi.org/10.1103/PhysRevX.3.041018} {\bibfield  {journal} {\bibinfo
  {journal} {Phys. Rev. X}\ }\textbf {\bibinfo {volume} {3}},\ \bibinfo {pages}
  {041018} (\bibinfo {year} {2013})},\ \Eprint
  {https://arxiv.org/abs/1306.2162} {arXiv:1306.2162} \BibitemShut {NoStop}%
\bibitem [{\citenamefont {Zohar}\ \emph {et~al.}(2013)\citenamefont {Zohar},
  \citenamefont {Cirac},\ and\ \citenamefont
  {Reznik}}]{zoharColdAtomQuantumSimulator2013}%
  \BibitemOpen
  \bibfield  {author} {\bibinfo {author} {\bibfnamefont {E.}~\bibnamefont
  {Zohar}}, \bibinfo {author} {\bibfnamefont {J.~I.}\ \bibnamefont {Cirac}},\
  and\ \bibinfo {author} {\bibfnamefont {B.}~\bibnamefont {Reznik}},\
  }\bibfield  {title} {\bibinfo {title} {Cold-{{Atom Quantum Simulator}} for
  {{SU}}(2) {{Yang}}-{{Mills Lattice Gauge Theory}}},\ }\href
  {https://doi.org/10.1103/PhysRevLett.110.125304} {\bibfield  {journal}
  {\bibinfo  {journal} {Phys. Rev. Lett.}\ }\textbf {\bibinfo {volume} {110}},\
  \bibinfo {pages} {125304} (\bibinfo {year} {2013})}\BibitemShut {NoStop}%
\bibitem [{\citenamefont {Davoudi}\ \emph {et~al.}(2020)\citenamefont
  {Davoudi}, \citenamefont {Hafezi}, \citenamefont {Monroe}, \citenamefont
  {Pagano}, \citenamefont {Seif},\ and\ \citenamefont
  {Shaw}}]{davoudiAnalogQuantumSimulations2020}%
  \BibitemOpen
  \bibfield  {author} {\bibinfo {author} {\bibfnamefont {Z.}~\bibnamefont
  {Davoudi}}, \bibinfo {author} {\bibfnamefont {M.}~\bibnamefont {Hafezi}},
  \bibinfo {author} {\bibfnamefont {C.}~\bibnamefont {Monroe}}, \bibinfo
  {author} {\bibfnamefont {G.}~\bibnamefont {Pagano}}, \bibinfo {author}
  {\bibfnamefont {A.}~\bibnamefont {Seif}},\ and\ \bibinfo {author}
  {\bibfnamefont {A.}~\bibnamefont {Shaw}},\ }\bibfield  {title} {\bibinfo
  {title} {Towards analog quantum simulations of lattice gauge theories with
  trapped ions},\ }\href {https://doi.org/10.1103/PhysRevResearch.2.023015}
  {\bibfield  {journal} {\bibinfo  {journal} {Phys. Rev. Research}\ }\textbf
  {\bibinfo {volume} {2}},\ \bibinfo {pages} {023015} (\bibinfo {year}
  {2020})}\BibitemShut {NoStop}%
\bibitem [{\citenamefont {Martinez}\ \emph {et~al.}(2016)\citenamefont
  {Martinez}, \citenamefont {Muschik}, \citenamefont {Schindler}, \citenamefont
  {Nigg}, \citenamefont {Erhard}, \citenamefont {Heyl}, \citenamefont {Hauke},
  \citenamefont {Dalmonte}, \citenamefont {Monz}, \citenamefont {Zoller},\ and\
  \citenamefont {Blatt}}]{martinezRealtimeDynamicsLattice2016b}%
  \BibitemOpen
  \bibfield  {author} {\bibinfo {author} {\bibfnamefont {E.~A.}\ \bibnamefont
  {Martinez}}, \bibinfo {author} {\bibfnamefont {C.~A.}\ \bibnamefont
  {Muschik}}, \bibinfo {author} {\bibfnamefont {P.}~\bibnamefont {Schindler}},
  \bibinfo {author} {\bibfnamefont {D.}~\bibnamefont {Nigg}}, \bibinfo {author}
  {\bibfnamefont {A.}~\bibnamefont {Erhard}}, \bibinfo {author} {\bibfnamefont
  {M.}~\bibnamefont {Heyl}}, \bibinfo {author} {\bibfnamefont {P.}~\bibnamefont
  {Hauke}}, \bibinfo {author} {\bibfnamefont {M.}~\bibnamefont {Dalmonte}},
  \bibinfo {author} {\bibfnamefont {T.}~\bibnamefont {Monz}}, \bibinfo {author}
  {\bibfnamefont {P.}~\bibnamefont {Zoller}},\ and\ \bibinfo {author}
  {\bibfnamefont {R.}~\bibnamefont {Blatt}},\ }\bibfield  {title} {\bibinfo
  {title} {Real-time dynamics of lattice gauge theories with a few-qubit
  quantum computer},\ }\href {https://doi.org/10.1038/nature18318} {\bibfield
  {journal} {\bibinfo  {journal} {Nature}\ }\textbf {\bibinfo {volume} {534}},\
  \bibinfo {pages} {516} (\bibinfo {year} {2016})},\ \Eprint
  {https://arxiv.org/abs/1605.04570} {arXiv:1605.04570} \BibitemShut {NoStop}%
\bibitem [{\citenamefont {Klco}\ \emph {et~al.}(2018)\citenamefont {Klco},
  \citenamefont {Dumitrescu}, \citenamefont {McCaskey}, \citenamefont {Morris},
  \citenamefont {Pooser}, \citenamefont {Sanz}, \citenamefont {Solano},
  \citenamefont {Lougovski},\ and\ \citenamefont
  {Savage}}]{klcoQuantumClassicalComputationSchwinger2018}%
  \BibitemOpen
  \bibfield  {author} {\bibinfo {author} {\bibfnamefont {N.}~\bibnamefont
  {Klco}}, \bibinfo {author} {\bibfnamefont {E.~F.}\ \bibnamefont
  {Dumitrescu}}, \bibinfo {author} {\bibfnamefont {A.~J.}\ \bibnamefont
  {McCaskey}}, \bibinfo {author} {\bibfnamefont {T.~D.}\ \bibnamefont
  {Morris}}, \bibinfo {author} {\bibfnamefont {R.~C.}\ \bibnamefont {Pooser}},
  \bibinfo {author} {\bibfnamefont {M.}~\bibnamefont {Sanz}}, \bibinfo {author}
  {\bibfnamefont {E.}~\bibnamefont {Solano}}, \bibinfo {author} {\bibfnamefont
  {P.}~\bibnamefont {Lougovski}},\ and\ \bibinfo {author} {\bibfnamefont
  {M.~J.}\ \bibnamefont {Savage}},\ }\bibfield  {title} {\bibinfo {title}
  {Quantum-{{Classical Computation}} of {{Schwinger Model Dynamics}} using
  {{Quantum Computers}}},\ }\href {https://doi.org/10.1103/PhysRevA.98.032331}
  {\bibfield  {journal} {\bibinfo  {journal} {Phys. Rev. A}\ }\textbf {\bibinfo
  {volume} {98}},\ \bibinfo {pages} {032331} (\bibinfo {year} {2018})},\
  \Eprint {https://arxiv.org/abs/1803.03326} {arXiv:1803.03326} \BibitemShut
  {NoStop}%
\bibitem [{\citenamefont {Klco}\ \emph {et~al.}(2020)\citenamefont {Klco},
  \citenamefont {Stryker},\ and\ \citenamefont
  {Savage}}]{klcoSUNonAbelianGauge2020}%
  \BibitemOpen
  \bibfield  {author} {\bibinfo {author} {\bibfnamefont {N.}~\bibnamefont
  {Klco}}, \bibinfo {author} {\bibfnamefont {J.~R.}\ \bibnamefont {Stryker}},\
  and\ \bibinfo {author} {\bibfnamefont {M.~J.}\ \bibnamefont {Savage}},\
  }\bibfield  {title} {\bibinfo {title} {{{SU}}(2) non-{{Abelian}} gauge field
  theory in one dimension on digital quantum computers},\ }\href
  {https://doi.org/10.1103/PhysRevD.101.074512} {\bibfield  {journal} {\bibinfo
   {journal} {Phys. Rev. D}\ }\textbf {\bibinfo {volume} {101}},\ \bibinfo
  {pages} {074512} (\bibinfo {year} {2020})},\ \Eprint
  {https://arxiv.org/abs/1908.06935} {arXiv:1908.06935} \BibitemShut {NoStop}%
\bibitem [{\citenamefont {Chakraborty}\ \emph {et~al.}(2020)\citenamefont
  {Chakraborty}, \citenamefont {Honda}, \citenamefont {Izubuchi}, \citenamefont
  {Kikuchi},\ and\ \citenamefont
  {Tomiya}}]{chakrabortyDigitalQuantumSimulation2020}%
  \BibitemOpen
  \bibfield  {author} {\bibinfo {author} {\bibfnamefont {B.}~\bibnamefont
  {Chakraborty}}, \bibinfo {author} {\bibfnamefont {M.}~\bibnamefont {Honda}},
  \bibinfo {author} {\bibfnamefont {T.}~\bibnamefont {Izubuchi}}, \bibinfo
  {author} {\bibfnamefont {Y.}~\bibnamefont {Kikuchi}},\ and\ \bibinfo {author}
  {\bibfnamefont {A.}~\bibnamefont {Tomiya}},\ }\bibfield  {title} {\bibinfo
  {title} {Digital {{Quantum Simulation}} of the {{Schwinger Model}} with
  {{Topological Term}} via {{Adiabatic State Preparation}}},\ }\href@noop {}
  {\bibfield  {journal} {\bibinfo  {journal} {arXiv:2001.00485 [cond-mat,
  physics:hep-lat, physics:hep-ph, physics:hep-th, physics:quant-ph]}\ }
  (\bibinfo {year} {2020})},\ \Eprint {https://arxiv.org/abs/2001.00485}
  {arXiv:2001.00485 [cond-mat, physics:hep-lat, physics:hep-ph, physics:hep-th,
  physics:quant-ph]} \BibitemShut {NoStop}%
\bibitem [{\citenamefont {Shaw}\ \emph {et~al.}(2020)\citenamefont {Shaw},
  \citenamefont {Lougovski}, \citenamefont {Stryker},\ and\ \citenamefont
  {Wiebe}}]{shawQuantumAlgorithmsSimulating2020a}%
  \BibitemOpen
  \bibfield  {author} {\bibinfo {author} {\bibfnamefont {A.~F.}\ \bibnamefont
  {Shaw}}, \bibinfo {author} {\bibfnamefont {P.}~\bibnamefont {Lougovski}},
  \bibinfo {author} {\bibfnamefont {J.~R.}\ \bibnamefont {Stryker}},\ and\
  \bibinfo {author} {\bibfnamefont {N.}~\bibnamefont {Wiebe}},\ }\bibfield
  {title} {\bibinfo {title} {Quantum {{Algorithms}} for {{Simulating}} the
  {{Lattice Schwinger Model}}},\ }\href@noop {} {\bibfield  {journal} {\bibinfo
   {journal} {arXiv:2002.11146 [hep-lat, physics:nucl-th, physics:quant-ph]}\ }
  (\bibinfo {year} {2020})},\ \Eprint {https://arxiv.org/abs/2002.11146}
  {arXiv:2002.11146 [hep-lat, physics:nucl-th, physics:quant-ph]} \BibitemShut
  {NoStop}%
\bibitem [{\citenamefont {Kogut}\ and\ \citenamefont
  {Susskind}(1975)}]{kogutHamiltonianFormulationWilson1975}%
  \BibitemOpen
  \bibfield  {author} {\bibinfo {author} {\bibfnamefont {J.}~\bibnamefont
  {Kogut}}\ and\ \bibinfo {author} {\bibfnamefont {L.}~\bibnamefont
  {Susskind}},\ }\bibfield  {title} {\bibinfo {title} {Hamiltonian formulation
  of {{Wilson}}'s lattice gauge theories},\ }\href
  {https://doi.org/10.1103/PhysRevD.11.395} {\bibfield  {journal} {\bibinfo
  {journal} {Phys. Rev. D}\ }\textbf {\bibinfo {volume} {11}},\ \bibinfo
  {pages} {395} (\bibinfo {year} {1975})}\BibitemShut {NoStop}%
\bibitem [{Note2()}]{Note2}%
  \BibitemOpen
  \bibinfo {note} {During the preparation of this manuscript, we learned of
  related work \cite {lammSuppressingCoherentGauge2020} which provides
  numerical evidence for the suppression, using gauge transformations, of the
  experimental drift error in simulating lattice field theories.}\BibitemShut
  {Stop}%
\bibitem [{\citenamefont {Coleman}(1976)}]{COLEMAN1976239}%
  \BibitemOpen
  \bibfield  {author} {\bibinfo {author} {\bibfnamefont {S.}~\bibnamefont
  {Coleman}},\ }\bibfield  {title} {\bibinfo {title} {More about the massive
  schwinger model},\ }\href
  {https://doi.org/https://doi.org/10.1016/0003-4916(76)90280-3} {\bibfield
  {journal} {\bibinfo  {journal} {Annals of Physics}\ }\textbf {\bibinfo
  {volume} {101}},\ \bibinfo {pages} {239 } (\bibinfo {year}
  {1976})}\BibitemShut {NoStop}%
\bibitem [{\citenamefont {Hamer}\ \emph {et~al.}(1997)\citenamefont {Hamer},
  \citenamefont {Weihong},\ and\ \citenamefont {Oitmaa}}]{PhysRevD.56.55}%
  \BibitemOpen
  \bibfield  {author} {\bibinfo {author} {\bibfnamefont {C.~J.}\ \bibnamefont
  {Hamer}}, \bibinfo {author} {\bibfnamefont {Z.}~\bibnamefont {Weihong}},\
  and\ \bibinfo {author} {\bibfnamefont {J.}~\bibnamefont {Oitmaa}},\
  }\bibfield  {title} {\bibinfo {title} {Series expansions for the massive
  schwinger model in hamiltonian lattice theory},\ }\href
  {https://doi.org/10.1103/PhysRevD.56.55} {\bibfield  {journal} {\bibinfo
  {journal} {Phys. Rev. D}\ }\textbf {\bibinfo {volume} {56}},\ \bibinfo
  {pages} {55} (\bibinfo {year} {1997})}\BibitemShut {NoStop}%
\bibitem [{\citenamefont {Byrnes}\ \emph {et~al.}(2002)\citenamefont {Byrnes},
  \citenamefont {Sriganesh}, \citenamefont {Bursill},\ and\ \citenamefont
  {Hamer}}]{PhysRevD.66.013002}%
  \BibitemOpen
  \bibfield  {author} {\bibinfo {author} {\bibfnamefont {T.~M.~R.}\
  \bibnamefont {Byrnes}}, \bibinfo {author} {\bibfnamefont {P.}~\bibnamefont
  {Sriganesh}}, \bibinfo {author} {\bibfnamefont {R.~J.}\ \bibnamefont
  {Bursill}},\ and\ \bibinfo {author} {\bibfnamefont {C.~J.}\ \bibnamefont
  {Hamer}},\ }\bibfield  {title} {\bibinfo {title} {Density matrix
  renormalization group approach to the massive schwinger model},\ }\href
  {https://doi.org/10.1103/PhysRevD.66.013002} {\bibfield  {journal} {\bibinfo
  {journal} {Phys. Rev. D}\ }\textbf {\bibinfo {volume} {66}},\ \bibinfo
  {pages} {013002} (\bibinfo {year} {2002})}\BibitemShut {NoStop}%
\bibitem [{\citenamefont {Buyens}\ \emph {et~al.}(2014)\citenamefont {Buyens},
  \citenamefont {Haegeman}, \citenamefont {Van~Acoleyen}, \citenamefont
  {Verschelde},\ and\ \citenamefont {Verstraete}}]{PhysRevLett.113.091601}%
  \BibitemOpen
  \bibfield  {author} {\bibinfo {author} {\bibfnamefont {B.}~\bibnamefont
  {Buyens}}, \bibinfo {author} {\bibfnamefont {J.}~\bibnamefont {Haegeman}},
  \bibinfo {author} {\bibfnamefont {K.}~\bibnamefont {Van~Acoleyen}}, \bibinfo
  {author} {\bibfnamefont {H.}~\bibnamefont {Verschelde}},\ and\ \bibinfo
  {author} {\bibfnamefont {F.}~\bibnamefont {Verstraete}},\ }\bibfield  {title}
  {\bibinfo {title} {Matrix product states for gauge field theories},\ }\href
  {https://doi.org/10.1103/PhysRevLett.113.091601} {\bibfield  {journal}
  {\bibinfo  {journal} {Phys. Rev. Lett.}\ }\textbf {\bibinfo {volume} {113}},\
  \bibinfo {pages} {091601} (\bibinfo {year} {2014})}\BibitemShut {NoStop}%
\bibitem [{\citenamefont {K{\"u}hn}\ \emph {et~al.}(2014)\citenamefont
  {K{\"u}hn}, \citenamefont {Cirac},\ and\ \citenamefont
  {Ba{\~n}uls}}]{kuhnQuantumSimulationSchwinger2014}%
  \BibitemOpen
  \bibfield  {author} {\bibinfo {author} {\bibfnamefont {S.}~\bibnamefont
  {K{\"u}hn}}, \bibinfo {author} {\bibfnamefont {J.~I.}\ \bibnamefont
  {Cirac}},\ and\ \bibinfo {author} {\bibfnamefont {M.-C.}\ \bibnamefont
  {Ba{\~n}uls}},\ }\bibfield  {title} {\bibinfo {title} {Quantum simulation of
  the {{Schwinger}} model: {{A}} study of feasibility},\ }\href
  {https://doi.org/10.1103/PhysRevA.90.042305} {\bibfield  {journal} {\bibinfo
  {journal} {Phys. Rev. A}\ }\textbf {\bibinfo {volume} {90}},\ \bibinfo
  {pages} {042305} (\bibinfo {year} {2014})}\BibitemShut {NoStop}%
\bibitem [{\citenamefont {Ercolessi}\ \emph {et~al.}(2018)\citenamefont
  {Ercolessi}, \citenamefont {Facchi}, \citenamefont {Magnifico}, \citenamefont
  {Pascazio},\ and\ \citenamefont
  {Pepe}}]{ercolessiPhaseTransitionsGauge2018a}%
  \BibitemOpen
  \bibfield  {author} {\bibinfo {author} {\bibfnamefont {E.}~\bibnamefont
  {Ercolessi}}, \bibinfo {author} {\bibfnamefont {P.}~\bibnamefont {Facchi}},
  \bibinfo {author} {\bibfnamefont {G.}~\bibnamefont {Magnifico}}, \bibinfo
  {author} {\bibfnamefont {S.}~\bibnamefont {Pascazio}},\ and\ \bibinfo
  {author} {\bibfnamefont {F.~V.}\ \bibnamefont {Pepe}},\ }\bibfield  {title}
  {\bibinfo {title} {Phase {{Transitions}} in \${{Z}}\_\{n\}\$ {{Gauge
  Models}}: {{Towards Quantum Simulations}} of the {{Schwinger}}-{{Weyl
  QED}}},\ }\href {https://doi.org/10.1103/PhysRevD.98.074503} {\bibfield
  {journal} {\bibinfo  {journal} {Phys. Rev. D}\ }\textbf {\bibinfo {volume}
  {98}},\ \bibinfo {pages} {074503} (\bibinfo {year} {2018})},\ \Eprint
  {https://arxiv.org/abs/1705.11047} {arXiv:1705.11047} \BibitemShut {NoStop}%
\bibitem [{\citenamefont
  {Chin}(2010)}]{chinMultiproductSplittingRungeKuttaNystrom2010a}%
  \BibitemOpen
  \bibfield  {author} {\bibinfo {author} {\bibfnamefont {S.~A.}\ \bibnamefont
  {Chin}},\ }\bibfield  {title} {{\selectlanguage {en}\bibinfo {title}
  {Multi-product splitting and {{Runge}}-{{Kutta}}-{{Nystr\"om}}
  integrators}},\ }\href {https://doi.org/10.1007/s10569-010-9255-9} {\bibfield
   {journal} {\bibinfo  {journal} {Celest Mech Dyn Astr}\ }\textbf {\bibinfo
  {volume} {106}},\ \bibinfo {pages} {391} (\bibinfo {year}
  {2010})}\BibitemShut {NoStop}%
\bibitem [{\citenamefont {Poulin}\ \emph {et~al.}(2015)\citenamefont {Poulin},
  \citenamefont {Hastings}, \citenamefont {Wecker}, \citenamefont {Wiebe},
  \citenamefont {Doberty},\ and\ \citenamefont {Troyer}}]{Poulin15}%
  \BibitemOpen
  \bibfield  {author} {\bibinfo {author} {\bibfnamefont {D.}~\bibnamefont
  {Poulin}}, \bibinfo {author} {\bibfnamefont {M.~B.}\ \bibnamefont
  {Hastings}}, \bibinfo {author} {\bibfnamefont {D.}~\bibnamefont {Wecker}},
  \bibinfo {author} {\bibfnamefont {N.}~\bibnamefont {Wiebe}}, \bibinfo
  {author} {\bibfnamefont {A.~C.}\ \bibnamefont {Doberty}},\ and\ \bibinfo
  {author} {\bibfnamefont {M.}~\bibnamefont {Troyer}},\ }\bibfield  {title}
  {\bibinfo {title} {The trotter step size required for accurate quantum
  simulation of quantum chemistry},\ }\href
  {https://dl.acm.org/doi/10.5555/2871401.2871402} {\bibfield  {journal}
  {\bibinfo  {journal} {Quantum Info. Comput.}\ }\textbf {\bibinfo {volume}
  {15}},\ \bibinfo {pages} {361} (\bibinfo {year} {2015})}\BibitemShut
  {NoStop}%
\bibitem [{\citenamefont {Babbush}\ \emph {et~al.}(2018)\citenamefont
  {Babbush}, \citenamefont {Wiebe}, \citenamefont {McClean}, \citenamefont
  {McClain}, \citenamefont {Neven},\ and\ \citenamefont
  {Chan}}]{babbushLowDepthQuantumSimulation2018}%
  \BibitemOpen
  \bibfield  {author} {\bibinfo {author} {\bibfnamefont {R.}~\bibnamefont
  {Babbush}}, \bibinfo {author} {\bibfnamefont {N.}~\bibnamefont {Wiebe}},
  \bibinfo {author} {\bibfnamefont {J.}~\bibnamefont {McClean}}, \bibinfo
  {author} {\bibfnamefont {J.}~\bibnamefont {McClain}}, \bibinfo {author}
  {\bibfnamefont {H.}~\bibnamefont {Neven}},\ and\ \bibinfo {author}
  {\bibfnamefont {G.~K.-L.}\ \bibnamefont {Chan}},\ }\bibfield  {title}
  {\bibinfo {title} {Low-{{Depth Quantum Simulation}} of {{Materials}}},\
  }\href {https://doi.org/10.1103/PhysRevX.8.011044} {\bibfield  {journal}
  {\bibinfo  {journal} {Phys. Rev. X}\ }\textbf {\bibinfo {volume} {8}},\
  \bibinfo {pages} {011044} (\bibinfo {year} {2018})}\BibitemShut {NoStop}%
\bibitem [{\citenamefont {Motta}\ \emph {et~al.}(2018)\citenamefont {Motta},
  \citenamefont {Ye}, \citenamefont {McClean}, \citenamefont {Li},
  \citenamefont {Minnich}, \citenamefont {Babbush},\ and\ \citenamefont
  {Chan}}]{mottaLowRankRepresentations2018b}%
  \BibitemOpen
  \bibfield  {author} {\bibinfo {author} {\bibfnamefont {M.}~\bibnamefont
  {Motta}}, \bibinfo {author} {\bibfnamefont {E.}~\bibnamefont {Ye}}, \bibinfo
  {author} {\bibfnamefont {J.~R.}\ \bibnamefont {McClean}}, \bibinfo {author}
  {\bibfnamefont {Z.}~\bibnamefont {Li}}, \bibinfo {author} {\bibfnamefont
  {A.~J.}\ \bibnamefont {Minnich}}, \bibinfo {author} {\bibfnamefont
  {R.}~\bibnamefont {Babbush}},\ and\ \bibinfo {author} {\bibfnamefont
  {G.~K.-L.}\ \bibnamefont {Chan}},\ }\bibfield  {title} {\bibinfo {title} {Low
  rank representations for quantum simulation of electronic structure},\
  }\href@noop {} {\bibfield  {journal} {\bibinfo  {journal} {arXiv:1808.02625
  [physics, physics:quant-ph]}\ } (\bibinfo {year} {2018})},\ \Eprint
  {https://arxiv.org/abs/1808.02625} {arXiv:1808.02625 [physics,
  physics:quant-ph]} \BibitemShut {NoStop}%
\bibitem [{\citenamefont {McArdle}\ \emph {et~al.}(2020)\citenamefont
  {McArdle}, \citenamefont {Endo}, \citenamefont {Aspuru-Guzik}, \citenamefont
  {Benjamin},\ and\ \citenamefont {Yuan}}]{McArdle20}%
  \BibitemOpen
  \bibfield  {author} {\bibinfo {author} {\bibfnamefont {S.}~\bibnamefont
  {McArdle}}, \bibinfo {author} {\bibfnamefont {S.}~\bibnamefont {Endo}},
  \bibinfo {author} {\bibfnamefont {A.}~\bibnamefont {Aspuru-Guzik}}, \bibinfo
  {author} {\bibfnamefont {S.~C.}\ \bibnamefont {Benjamin}},\ and\ \bibinfo
  {author} {\bibfnamefont {X.}~\bibnamefont {Yuan}},\ }\bibfield  {title}
  {\bibinfo {title} {Quantum computational chemistry},\ }\href
  {https://doi.org/10.1103/RevModPhys.92.015003} {\bibfield  {journal}
  {\bibinfo  {journal} {Rev. Mod. Phys.}\ }\textbf {\bibinfo {volume} {92}},\
  \bibinfo {pages} {015003} (\bibinfo {year} {2020})}\BibitemShut {NoStop}%
\bibitem [{\citenamefont {Cirstoiu}\ \emph {et~al.}(2019)\citenamefont
  {Cirstoiu}, \citenamefont {Holmes}, \citenamefont {Iosue}, \citenamefont
  {Cincio}, \citenamefont {Coles},\ and\ \citenamefont
  {Sornborger}}]{cirstoiu2019variational}%
  \BibitemOpen
  \bibfield  {author} {\bibinfo {author} {\bibfnamefont {C.}~\bibnamefont
  {Cirstoiu}}, \bibinfo {author} {\bibfnamefont {Z.}~\bibnamefont {Holmes}},
  \bibinfo {author} {\bibfnamefont {J.}~\bibnamefont {Iosue}}, \bibinfo
  {author} {\bibfnamefont {L.}~\bibnamefont {Cincio}}, \bibinfo {author}
  {\bibfnamefont {P.~J.}\ \bibnamefont {Coles}},\ and\ \bibinfo {author}
  {\bibfnamefont {A.}~\bibnamefont {Sornborger}},\ }\bibfield  {title}
  {\bibinfo {title} {Variational fast forwarding for quantum simulation beyond
  the coherence time},\ }\href {https://arxiv.org/abs/1910.04292} {\bibfield
  {journal} {\bibinfo  {journal} {arXiv preprint arXiv:1910.04292}\ } (\bibinfo
  {year} {2019})}\BibitemShut {NoStop}%
\bibitem [{\citenamefont {{\c S}ahino{\u g}lu}\ and\ \citenamefont
  {Somma}(2020)}]{sahinogluHamiltonianSimulationLow2020}%
  \BibitemOpen
  \bibfield  {author} {\bibinfo {author} {\bibfnamefont {B.}~\bibnamefont {{\c
  S}ahino{\u g}lu}}\ and\ \bibinfo {author} {\bibfnamefont {R.~D.}\
  \bibnamefont {Somma}},\ }\bibfield  {title} {\bibinfo {title} {Hamiltonian
  simulation in the low energy subspace},\ }\href@noop {} {\bibfield  {journal}
  {\bibinfo  {journal} {arXiv:2006.02660 [quant-ph]}\ } (\bibinfo {year}
  {2020})},\ \Eprint {https://arxiv.org/abs/2006.02660} {arXiv:2006.02660
  [quant-ph]} \BibitemShut {NoStop}%
\bibitem [{\citenamefont {Lamm}\ \emph {et~al.}(2020)\citenamefont {Lamm},
  \citenamefont {Lawrence},\ and\ \citenamefont
  {Yamauchi}}]{lammSuppressingCoherentGauge2020}%
  \BibitemOpen
  \bibfield  {author} {\bibinfo {author} {\bibfnamefont {H.}~\bibnamefont
  {Lamm}}, \bibinfo {author} {\bibfnamefont {S.}~\bibnamefont {Lawrence}},\
  and\ \bibinfo {author} {\bibfnamefont {Y.}~\bibnamefont {Yamauchi}},\
  }\bibfield  {title} {\bibinfo {title} {Suppressing {{Coherent Gauge Drift}}
  in {{Quantum Simulations}}},\ }\href@noop {} {\bibfield  {journal} {\bibinfo
  {journal} {arXiv:2005.12688 [hep-lat, physics:quant-ph]}\ } (\bibinfo {year}
  {2020})},\ \Eprint {https://arxiv.org/abs/2005.12688} {arXiv:2005.12688
  [hep-lat, physics:quant-ph]} \BibitemShut {NoStop}%
\bibitem [{\citenamefont {Stryker}(2019)}]{strykerOraclesGaussLaw2019}%
  \BibitemOpen
  \bibfield  {author} {\bibinfo {author} {\bibfnamefont {J.~R.}\ \bibnamefont
  {Stryker}},\ }\bibfield  {title} {\bibinfo {title} {Oracles for {{Gauss}}'s
  law on digital quantum computers},\ }\href
  {https://doi.org/10.1103/PhysRevA.99.042301} {\bibfield  {journal} {\bibinfo
  {journal} {Phys. Rev. A}\ }\textbf {\bibinfo {volume} {99}},\ \bibinfo
  {pages} {042301} (\bibinfo {year} {2019})},\ \Eprint
  {https://arxiv.org/abs/1812.01617} {arXiv:1812.01617} \BibitemShut {NoStop}%
\bibitem [{\citenamefont {Blanes}\ \emph {et~al.}(2009)\citenamefont {Blanes},
  \citenamefont {Casas}, \citenamefont {Oteo},\ and\ \citenamefont
  {Ros}}]{blanesMagnusExpansionIts2009a}%
  \BibitemOpen
  \bibfield  {author} {\bibinfo {author} {\bibfnamefont {S.}~\bibnamefont
  {Blanes}}, \bibinfo {author} {\bibfnamefont {F.}~\bibnamefont {Casas}},
  \bibinfo {author} {\bibfnamefont {J.~A.}\ \bibnamefont {Oteo}},\ and\
  \bibinfo {author} {\bibfnamefont {J.}~\bibnamefont {Ros}},\ }\bibfield
  {title} {{\selectlanguage {en}\bibinfo {title} {The {{Magnus}} expansion and
  some of its applications}},\ }\href
  {https://doi.org/10.1016/j.physrep.2008.11.001} {\bibfield  {journal}
  {\bibinfo  {journal} {Physics Reports}\ }\textbf {\bibinfo {volume} {470}},\
  \bibinfo {pages} {151} (\bibinfo {year} {2009})}\BibitemShut {NoStop}%
\bibitem [{\citenamefont {Moan}\ and\ \citenamefont {Niesen}(2008)}]{Moan2008}%
  \BibitemOpen
  \bibfield  {author} {\bibinfo {author} {\bibfnamefont {P.~C.}\ \bibnamefont
  {Moan}}\ and\ \bibinfo {author} {\bibfnamefont {J.}~\bibnamefont {Niesen}},\
  }\bibfield  {title} {\bibinfo {title} {Convergence of the magnus series},\
  }\href {https://doi.org/10.1007/s10208-007-9010-0} {\bibfield  {journal}
  {\bibinfo  {journal} {Foundations of Computational Mathematics}\ }\textbf
  {\bibinfo {volume} {8}},\ \bibinfo {pages} {291} (\bibinfo {year}
  {2008})}\BibitemShut {NoStop}%
\bibitem [{\citenamefont {Arnal}\ \emph {et~al.}(2018)\citenamefont {Arnal},
  \citenamefont {Casas},\ and\ \citenamefont {Chiralt}}]{Arnal_2018}%
  \BibitemOpen
  \bibfield  {author} {\bibinfo {author} {\bibfnamefont {A.}~\bibnamefont
  {Arnal}}, \bibinfo {author} {\bibfnamefont {F.}~\bibnamefont {Casas}},\ and\
  \bibinfo {author} {\bibfnamefont {C.}~\bibnamefont {Chiralt}},\ }\bibfield
  {title} {\bibinfo {title} {A general formula for the magnus expansion in
  terms of iterated integrals of right-nested commutators},\ }\href
  {https://doi.org/10.1088/2399-6528/aab291} {\bibfield  {journal} {\bibinfo
  {journal} {Journal of Physics Communications}\ }\textbf {\bibinfo {volume}
  {2}},\ \bibinfo {pages} {035024} (\bibinfo {year} {2018})}\BibitemShut
  {NoStop}%
\end{thebibliography}%

\newpage

\begin{widetext}
\appendix

\section{Faster convergence of quantum Zeno effect}\label{sec:zeno}

Using symmetries to protect quantum simulations has previously been explored in the context of the quantum Zeno effect:
undesirable errors from the simulation can be suppressed by constantly measuring the system in an appropriate basis~\cite{zanardi1999symmetrizing,stannigelConstrainedDynamicsZeno2014,strykerOraclesGaussLaw2019}.
However, measurements are costly in most available quantum computers and therefore often only performed once at the end in simulations on quantum computers.
An alternative approach commonly used in quantum control is to frequently apply fast pulses, or ``kicks'', to the system during the experiments.
In the high frequency limit, these kicks confine the dynamics of the system to the so-called quantum Zeno subspaces defined by the spectral decomposition of the kicks~\cite{zanardi1999symmetrizing,violaDynamicalDecouplingOpen1999,Facchi04,Khodjasteh08,violaUniversalControlDecoupled1999,ngCombiningDynamicalDecoupling2011,burgarthGeneralizedProductFormulas2019},
effectively realizing the quantum Zeno effect without measuring the systems.

In this section, we derive a concrete bound on the rate at which the effective Hamiltonian of a frequently kicked system converges to its projection to the Zeno subspaces. This bound exponentially improves a recent result of Burgarth, Facchi, Gramegna, and Pascazio~\cite{burgarthGeneralizedProductFormulas2019}. Interestingly, our proof makes use of a tight analysis of Trotter error \cite{childsTheoryTrotterError2020}, suggesting a deep connection between quantum simulation and quantum Zeno effect.

The aim of quantum control is to confine the dynamics of a system evolving under a Hamiltonian $G$ into the subspaces specified by a set of projectors:
\begin{align} 
	\mathcal P = \{P_\mu\}. 
\end{align}
One approach is to repeatedly measure the system in the basis corresponding to $\mathcal P$ throughout the evolution.
These measurements results in the quantum Zeno effect: the dynamics of the system is confined to the subspaces corresponding to the projectors $P_\mu$.
Alternative to measuring the system, one could periodically ``kick'' the system~\cite{burgarthGeneralizedProductFormulas2019} with a unitary
\begin{align} 
	U_{\text{kick}} = \sum_{\mu} e^{-i\phi_\mu} P_{\mu},\label{eq:Ukickdef}
\end{align}
where $\phi_\mu$ is chosen such that $\phi_\mu \neq \phi_\nu\mod 2\pi$  for all $\mu \neq \nu$.

Suppose the total evolution time is $t$ and we apply the kick every $\dt = t/r$ where $r$ is an integer.
The dynamics of the system becomes
\begin{align} 
	(U_\kick^{\dag})^r \left(e^{-iGt/r}U_\kick\right)^r,\label{eq:Zenodynamics}
\end{align}
where $(U_{\kick}^{\dag})^r$ is added at the end of the sequence to undo the evolution generated by the $r$ applications of $U_{\kick}$.
In the limit $r\rightarrow\infty$, the dynamics of the system again exhibits the quantum Zeno effect
\begin{equation}
	U_{\text{kick}}^{\dagger r}\left(e^{-i\frac{t}{r}G}U_{\text{kick}}\right)^r\rightarrow e^{-itG_{\text{Zeno}}},
\end{equation}
where
\begin{equation}
	G_{\text{Zeno}}\equiv\sum_{\mu=1}^{m}P_\mu GP_\mu,
\end{equation}
is the projection of $G$ onto the subspaces defined by the spectral decomposition of $U_{\text{kick}}$.
In other words, the kicks effectively confine the dynamics of the system to the subspaces defined by the projectors $P_\mu$~(See \cref{fig:zeno}).

\begin{figure}[t]
\centering
\includegraphics[width=0.42\textwidth]{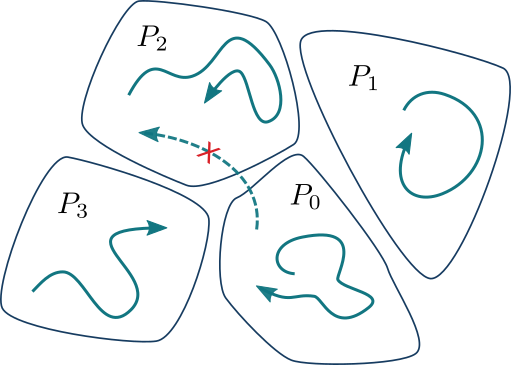}
\caption{The frequent kicks confine the dynamics of the system (solid arrows) to the so-called quantum Zeno subspaces, defined by the projectors $P_\mu$ in the spectral decomposition of the kicks $U_{\text{kick}} = \sum_{\mu} e^{-i\phi_\mu} P_{\mu}$.
In particular, the kicks suppress the probability for the system to travel between the subspaces (dashed arrow).
By generating the kicks from the symmetries of the system, we can target the simulation error---the sole contributor to possible violations of the symmetries in an ideal simulation---for suppression.}
\label{fig:zeno}
\end{figure}

Ref.~\cite[(A.30)]{burgarthGeneralizedProductFormulas2019} derived the following bound on the convergence rate with explicit dependence on all parameters of interest
\begin{align}
	&\norm{U_{\text{kick}}^{\dagger r}\left(U_{\text{kick}}e^{-i\frac{t}{r}G}\right)^r - e^{-itG_{\text{Zeno}}}}
	\leq\frac{\xi m^2\norm{G}t(1+2e^{m\norm{G}t})}{r},\label{eq:old_zeno_bound}
\end{align}
where $m$ is the number of projectors and
\begin{align} 
	\xi\equiv\max_{\mu\neq\nu}\abs{\sin\left(\frac{\phi_\nu-\phi_\mu}{2}\right)}^{-1}  \label{eq:CZeno_def}
\end{align}
is the inverse spectral gap.
Unfortunately, this bound has exponential dependence on $m$, $\norm{G}$, and $t$, which, in particular, suggests that we have to increase the number of kicks $r$ exponentially with the evolution time of the system and therefore may be impractical in many applications. 
In \cref{thm:Zeno}, we prove a different bound that exponentially improves the bound in Ref.~\cite{burgarthGeneralizedProductFormulas2019} in terms of $m$, $\norm{G}$, and~$t$.

\begin{theorem}[Faster convergence of quantum Zeno effect]\label{thm:Zeno}
Let $U_\kick$ be the unitary defined in \cref{eq:Ukickdef} with $m$ distinct eigenvalues, inverse spectral gap $\xi$, and a set of orthogonal projectors $\{P_\mu\}$. Let $G_\Zeno = \sum_{\mu} P_\mu G P_\mu$ denote the projection of a Hamiltonian $G$ onto the subspaces defined by $\{P_\mu\}$.
We have 
\begin{align}
	\epsilon_{\text{Zeno}} \equiv &\norm{U_{\text{kick}}^{\dagger r}\left(U_{\text{kick}}e^{-i\frac{t}{r}G}\right)^r - e^{-itG_{\text{Zeno}}}}
	\leq\frac{2\xi\sqrt{m}\norm{G}^2t^2\log r}{r}+\frac{\xi\sqrt{m}\norm{G}t}{r}
	\leq\frac{3\xi\sqrt{m}\norm{G}^2t^2\log r}{r}.\label{eq:new_zeno_bound}
\end{align}
\end{theorem}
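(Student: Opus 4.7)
The plan is to decompose the proof into an algebraic rearrangement followed by two estimation steps. First, I would inductively conjugate each leftmost $U_\kick$ through the subsequent exponentials to rewrite $U_\kick^{\dag r}(U_\kick e^{-iG\dt})^r = e^{-iG_{r-1}\dt}e^{-iG_{r-2}\dt}\cdots e^{-iG_0\dt}$, where $G_k \equiv U_\kick^{-k} G U_\kick^k$. The left-hand side then becomes a first-order product formula with $r$ terms $\{G_k\}$ and step $\dt=t/r$. Setting $\bar G \equiv \frac{1}{r}\sum_{k=0}^{r-1}G_k$, I then split $\epsilon_\Zeno$ via triangle inequality into a ``Trotter'' part $\|\prod_k e^{-iG_k\dt} - e^{-i\bar G t}\|$ and an ``averaging'' part $\|e^{-i\bar G t}-e^{-iG_\Zeno t}\|\leq t\|\bar G - G_\Zeno\|$.

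The averaging part I expect to be straightforward. Using the spectral decomposition $G_k = \sum_{\mu,\nu} e^{ik(\phi_\mu - \phi_\nu)}P_\mu G P_\nu$, I would show $\bar G - G_\Zeno = \sum_{\mu\neq\nu} c_{\mu\nu}P_\mu G P_\nu$ with $|c_{\mu\nu}| = |\frac{1}{r}\sum_k e^{ik(\phi_\mu-\phi_\nu)}|\leq \xi/r$ from the closed form of a finite geometric series. A Cauchy--Schwarz step exploiting orthogonality $P_\mu P_{\mu'}=\delta_{\mu\mu'}P_\mu$ then contracts one of the index sums to give $\|\bar G - G_\Zeno\|\leq\sqrt{m}\,\xi\|G\|/r$, producing the $\xi\sqrt m\|G\|t/r$ term of the theorem after multiplying by $t$.

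The Trotter part is the principal obstacle and where the tight analysis of Ref.~\cite{childsTheoryTrotterError2020} enters. Their first-order Trotter bound, applied to the $r$-term product with step $\dt$, would give $\|\prod_k e^{-iG_k\dt} - e^{-i\bar G t}\| \lesssim \dt^2 \|\sum_{j<k}[G_k, G_j]\| + (\text{higher-order contributions})$, crucially with the commutator sum kept \emph{inside} the operator norm. Expanding each $G_k$ spectrally, $\sum_{j<k}[G_k,G_j]=\sum_{\mu,\nu,\beta}d_{\mu\nu\beta}P_\mu G P_\nu G P_\beta$, where $d_{\mu\nu\beta}$ is a double geometric sum over $j,k$ with frequencies $\phi_\mu-\phi_\nu$ and $\phi_\nu-\phi_\beta$. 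Off the three ``resonance'' loci $\mu=\nu$, $\nu=\beta$, and $\mu=\beta$ the coefficient obeys $|d_{\mu\nu\beta}|=O(\xi^2)$; on resonance one has only $|d_{\mu\nu\beta}|=O(r\xi)$. A blockwise Cauchy--Schwarz bound again using orthogonality of $\{P_\mu\}$ then yields $\|\sum_{j<k}[G_k,G_j]\|=O(\sqrt m\,\xi\,r\|G\|^2)$, so the leading Trotter error is $O(\sqrt m\,\xi\|G\|^2 t^2/r)$.

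The $\log r$ factor in the theorem is the source of the hard part of the proof: it must arise from controlling all higher-order Magnus/Trotter corrections. The key is to avoid the trivial bound $\sum_{j<k}\|[G_k,G_j]\|$ (which would give a catastrophic $t^2\|G\|^2$ with no $1/r$ gain) and instead to keep every commutator sum inside the operator norm, applying the oscillatory-cancellation argument at each order. The $\log r$ is consistent with either an Abel/harmonic-summation estimate ($\sum_{k=1}^{r}1/k=\Theta(\log r)$) appearing in higher-order nested-commutator kernels, or a recursive dyadic splitting of the $r$ steps of depth $\log_2 r$ in which each level contributes a constant factor. Executing this higher-order analysis rigorously while preserving the tight $\sqrt m\,\xi/r$ prefactor is the main technical hurdle.
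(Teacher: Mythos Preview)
Your overall decomposition---rewrite as a product $\prod_k e^{-iG_k\dt}$, then split $\epsilon_\Zeno$ into a Trotter part $\|\prod_k e^{-iG_k\dt}-e^{-i\bar G t}\|$ and an averaging part $\|e^{-i\bar G t}-e^{-iG_\Zeno t}\|$---is exactly what the paper does, and your averaging estimate (geometric sum plus the orthogonality/Cauchy--Schwarz step giving $\sqrt m$) is essentially the paper's Corollary~1.

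Where you diverge is the Trotter part. Your primary plan is to feed the full $r$-term product into the commutator expansion of Ref.~\cite{childsTheoryTrotterError2020}, spectrally expand $\sum_{j<k}[G_k,G_j]$ and then all higher nested commutators, and track oscillatory cancellations order by order. The paper instead commits to the option you list second and only speculatively: a \emph{dyadic recursion} that never invokes anything beyond the two-term bound $\|e^{-iA}e^{-iB}-e^{-i(A+B)}\|\leq\tfrac12\|[A,B]\|$. The recursion is $f(s)\leq f(s_1)+f(s_2)+\tfrac12\|[G_{[s_1+1,s]},G_{[1,s_1]}]\|\dt^2$ with $|s_1-s_2|\leq 1$, and the $\log r$ appears purely as the recursion depth. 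The key lemma making this work---and the idea your proposal does not yet isolate---is that the \emph{block} commutator obeys $\|[G_{[a,b]},G_{[c,d]}]\|\leq 2(b-a+d-c+2)\xi\sqrt m\|G\|^2$, i.e.\ it grows only \emph{linearly} in the total block length. The paper gets this not by your double-geometric-sum route but by the one-line trick of inserting the reference point $G_\Zeno$: since $\|G_{[a,b]}-(b-a+1)G_\Zeno\|\leq\xi\sqrt m\|G\|$ is bounded \emph{independently of the block length} (your averaging estimate again), one factor in each commutator term collapses to $O(1)$ while the other stays $O(\text{block length})\cdot\|G\|$.

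Your spectral-expansion route for the leading commutator is sound and would recover the same linear scaling, but pushing it through all higher orders as you propose would face a combinatorial blow-up in the number of resonance loci and projector indices; it is not clear you would land on a clean $\log r$ rather than a polynomial in $\log r$ or worse. The paper's block-commutator-plus-dyadic-recursion avoids higher nested commutators entirely, which is what makes the argument short.
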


To prove \cref{thm:Zeno}, we rewrite the evolution as
\begin{equation}
	U_{\text{kick}}^{\dagger r}\left(e^{-i\frac{t}{r}G}U_{\text{kick}}\right)^r
	=e^{-i\frac{t}{r}G_{r}}e^{-i\frac{t}{r}G_{r-1}}\cdots e^{-i\frac{t}{r}G_1},
\end{equation}
where we have defined
\begin{equation}
	G_k\equiv U_{\text{kick}}^{\dagger k}GU_{\text{kick}}^{k}.
\end{equation}
Letting $G_{[1,r]}\equiv G_1+\cdots+G_{r}$, the first step of our proof is to establish the error bound
\begin{equation}
\label{eq:destructive_trotter_error}
	\norm{e^{-i\frac{t}{r}G_{r}}\cdots e^{-i\frac{t}{r}G_1}-e^{-i\frac{t}{r}G_{[1,r]}}}\leq\frac{2\xi\sqrt{m}\norm{G}^2t^2\log r}{r}.
\end{equation}
This is the spectral-norm error of the first-order Trotter formula \cite{childsTheoryTrotterError2020}. However, a naive error analysis in terms of the commutators between $G_j$ (see \cite[Proposition 15]{childsTheoryTrotterError2020} for example) gives a bound that does not decrease with $r$ and thus fails to establish the desirable bound. Instead, we seek a better analysis that exploits the spectral information of $U_{\text{kick}}$ \cite{burgarthGeneralizedProductFormulas2019}.

The starting point of our analysis is the established von Neumann's ergodic theorem whose proof is included for completeness.
\begin{theorem}[Von Neumann's ergodic theorem]
	Let $U$ be a unitary operator and $U=\sum_{\mu=1}^{m}e^{-i\phi_\mu}P_\mu$ be its spectral decomposition, with $\phi_1=0$ and $\phi_\mu$ distinct. Then,
	\begin{equation}
	\norm{\frac{1}{r}\sum_{k=1}^{r}U^k-P_1}\leq\frac{\xi_1}{r},
	\end{equation}
	where
	\begin{equation}
		\xi_1:=2\max_{\nu\neq 1}\abs{e^{-i\phi_\nu}-1}^{-1}
		=\max_{\nu\neq 1}\abs{\sin\left(\frac{\phi_\nu}{2}\right)}^{-1}.
	\end{equation}
\end{theorem}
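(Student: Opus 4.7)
The proof reduces to a scalar calculation once one invokes the spectral decomposition of $U$. Since $U$ is normal and the $P_\mu$ are mutually orthogonal projectors with $\sum_\mu P_\mu = I$, any polynomial in $U$ acts diagonally: $U^k = \sum_{\mu=1}^m e^{-ik\phi_\mu} P_\mu$. Substituting this into the average and isolating the $\mu=1$ term (where $\phi_1 = 0$ makes every summand equal to one, so the inner average reproduces $P_1$ exactly), I would obtain
\[
\frac{1}{r}\sum_{k=1}^{r} U^k - P_1 = \sum_{\mu=2}^{m} c_\mu^{(r)} P_\mu, \qquad c_\mu^{(r)} \equiv \frac{1}{r}\sum_{k=1}^{r} e^{-ik\phi_\mu}.
\]

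Next I would estimate each scalar coefficient $c_\mu^{(r)}$ by summing the geometric series,
\[
c_\mu^{(r)} = \frac{1}{r}\cdot\frac{e^{-i\phi_\mu}\bigl(1 - e^{-ir\phi_\mu}\bigr)}{1 - e^{-i\phi_\mu}},
\]
which is well defined for $\mu \neq 1$ because $\phi_\mu \not\equiv 0 \pmod{2\pi}$ by distinctness. A triangle inequality in the numerator (together with unitarity $|e^{-i\phi_\mu}| = 1$) gives the uniform bound $|c_\mu^{(r)}| \leq \frac{2}{r\,|1 - e^{-i\phi_\mu}|}$.

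Finally, since the $P_\mu$ project onto mutually orthogonal subspaces, the operator represented by $\sum_{\mu\geq 2} c_\mu^{(r)} P_\mu$ is block diagonal and its spectral norm equals $\max_{\mu\geq 2}|c_\mu^{(r)}|$. Combining this with the scalar bound yields
\[
\left\|\frac{1}{r}\sum_{k=1}^{r} U^k - P_1\right\| \leq \max_{\mu\neq 1}\frac{2}{r\,|1 - e^{-i\phi_\mu}|} = \frac{\xi_1}{r},
\]
and the alternate form of $\xi_1$ stated in the theorem follows from the elementary identity $|1 - e^{-i\phi_\mu}| = 2|\sin(\phi_\mu/2)|$. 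I do not anticipate a genuine obstacle: the only point requiring care is the block-diagonal reduction of the operator norm, which is immediate from the orthogonality of the $P_\mu$.
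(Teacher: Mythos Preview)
Your proof is correct and follows essentially the same route as the paper's: both expand via the spectral decomposition, sum the resulting geometric series for each $\mu\neq 1$, and then use that the norm of $\sum_{\mu\geq 2} c_\mu^{(r)} P_\mu$ with orthogonal $P_\mu$ equals $\max_{\mu\geq 2}|c_\mu^{(r)}|$.
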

\begin{proof}
	The bound follows from
	\begin{equation}
	\begin{aligned}
		\norm{\frac{1}{r}\sum_{k=1}^{r}U^k-P_1}
		&=\norm{\left(\frac{1}{r}\sum_{k=1}^{r}U^k-P_1\right)\sum_{\nu=1}^{m}P_\nu}\\
		&=\norm{\frac{1}{r}\sum_{\nu=1}^{m}\sum_{k=1}^{r}e^{-ik\phi_\nu}P_\nu-P_1}\\
		&=\norm{\frac{1}{r}\sum_{\nu\neq 1}e^{-i\phi_\nu}\frac{1-e^{-ir\phi_\nu}}{1-e^{-i\phi_\nu}}P_\nu}\\
		&=\frac{1}{r}\max_{\nu\neq 1}\abs{\frac{1-e^{-ir\phi_\nu}}{1-e^{-i\phi_\nu}}}
		\leq\frac{\xi_1}{r}.
	\end{aligned}
	\end{equation}
\end{proof}
We note that the condition $\phi_1 = 0$ is not restrictive as we can always make $\phi_1 = 0$ by adding a global phase to $U_\kick$ \cite{Facchi04}.
\begin{corollary}
	\label{cor:zeno_subspace}
	Let $U$ be a unitary operator and $U=\sum_{\mu=1}^{m}e^{-i\phi_\mu}P_\mu$ be its spectral decomposition. Then, for any operator $G$,
	\begin{equation}
	\norm{\frac{1}{r}\sum_{k=1}^{r}U^kGU^{-k}-\sum_{\mu=1}^mP_\mu GP_\mu}\leq\frac{\xi\sqrt{m}\norm{G}}{r},
	\end{equation}
	where
	\begin{equation}
	\xi:=2\max_{\mu\neq \nu}\abs{e^{-i\phi_\mu}-e^{-i\phi_\nu}}^{-1}
	=\max_{\mu\neq\nu}\abs{\sin\left(\frac{\phi_\nu-\phi_\mu}{2}\right)}^{-1}.
	\end{equation}
\end{corollary}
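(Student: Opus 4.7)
The plan is to mirror the proof of von Neumann's ergodic theorem just proved, but applied to every pair of eigenvalues of $U$ in the spectral decomposition of $U^k G U^{-k}$. Using $U = \sum_\mu e^{-i\phi_\mu} P_\mu$ with $P_\mu P_\nu = \delta_{\mu\nu} P_\mu$ and $\sum_\mu P_\mu = \mathbb{I}$, I would first expand
\begin{equation}
\frac{1}{r}\sum_{k=1}^{r} U^k G U^{-k} = \sum_{\mu,\nu} c_{\mu\nu}\, P_\mu G P_\nu, \qquad c_{\mu\nu} \equiv \frac{1}{r}\sum_{k=1}^{r} e^{-ik(\phi_\mu-\phi_\nu)}.
\end{equation}
The diagonal terms give $c_{\mu\mu} = 1$, reproducing exactly $\sum_\mu P_\mu G P_\mu$, so the deviation operator is
\begin{equation}
D \;\equiv\; \frac{1}{r}\sum_{k=1}^{r} U^k G U^{-k} - \sum_\mu P_\mu G P_\mu \;=\; \sum_{\mu\neq\nu} c_{\mu\nu}\, P_\mu G P_\nu.
\end{equation}

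The next step is to control the off-diagonal coefficients. Summing the geometric series exactly as in the previous theorem gives $|c_{\mu\nu}| = \frac{1}{r}\left|\frac{1-e^{-ir(\phi_\mu-\phi_\nu)}}{1-e^{-i(\phi_\mu-\phi_\nu)}}\right| \leq \frac{\xi}{r}$ for all $\mu\neq\nu$, where $\xi$ is the inverse spectral gap of $U$. This is the only place where the spectrum of $U$ enters, and it furnishes the $1/r$ decay.

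The main obstacle --- and the only nontrivial step --- is upgrading the pointwise bound $|c_{\mu\nu}|\leq \xi/r$ to an operator-norm bound with a $\sqrt{m}$ rather than $m$ prefactor. A naive triangle inequality over $\mu,\nu$ would cost a factor of $m^2$, and bounding each row $\sum_\nu c_{\mu\nu} P_\mu G P_\nu$ separately gives $m$. To extract $\sqrt{m}$, I would exploit the orthogonality of the eigenspaces on both sides. For an arbitrary unit vector $\ket{\psi}$, write $\ket{\psi_\nu}\equiv P_\nu \ket{\psi}$ so that $\sum_\nu \norm{\ket{\psi_\nu}}^2 = 1$, and use the orthogonality of the ranges of the $P_\mu$ to obtain
\begin{equation}
\norm{D\ket{\psi}}^2 \;=\; \sum_\mu \norm{P_\mu\, G\!\!\sum_{\nu\neq\mu} c_{\mu\nu} \ket{\psi_\nu}}^{\!2} \;\leq\; \norm{G}^2 \sum_\mu \sum_{\nu\neq\mu} |c_{\mu\nu}|^2 \norm{\ket{\psi_\nu}}^2,
\end{equation}
where the inner Pythagorean step uses the orthogonality of the $\ket{\psi_\nu}$. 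Bounding $|c_{\mu\nu}|^2 \leq (\xi/r)^2$ and summing over $\nu$ then $\mu$ yields $\norm{D\ket{\psi}}^2 \leq m\,\norm{G}^2 (\xi/r)^2$, which after taking the supremum over $\ket{\psi}$ gives exactly the stated bound $\norm{D} \leq \xi\sqrt{m}\,\norm{G}/r$. The key conceptual point is that the $\sqrt{m}$ rather than $m$ is a direct consequence of applying Parseval twice --- once on the left through the orthogonal projectors $P_\mu$, and once on the right through the orthogonal decomposition $\ket{\psi}=\sum_\nu\ket{\psi_\nu}$.
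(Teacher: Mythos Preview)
Your argument is correct, and the overall strategy---spectral decomposition, geometric-series control of the off-diagonal coefficients, and orthogonality of the eigenprojectors to extract $\sqrt{m}$---matches the paper. The execution differs slightly: the paper inserts $\sum_\nu P_\nu$ only on the \emph{right}, writes the difference as $\sum_\nu A_\nu P_\nu$ with $A_\nu = \frac{1}{r}\sum_k (e^{i\phi_\nu}U)^k G - P_\nu G$, and then applies the operator identity $\norm{\sum_\nu A_\nu P_\nu}^2 = \norm{\sum_\nu A_\nu P_\nu A_\nu^\dagger} \leq \sum_\nu \norm{A_\nu P_\nu}^2$ (via $\norm{X}^2=\norm{XX^\dagger}$), after which von Neumann's ergodic theorem applies directly to each $e^{i\phi_\nu}U$. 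Your route instead expands on \emph{both} sides and works at the level of a test vector, using Pythagoras twice. Your version is a bit more elementary and self-contained; the paper's is slightly slicker in that it invokes the preceding theorem as a black box and never unpacks the $P_\mu$ on the left. Either way the bound and the $\sqrt{m}$ mechanism are identical.
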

\begin{proof}
	The claimed bound follows from
	\begin{equation}
	\begin{aligned}
		\norm{\frac{1}{r}\sum_{k=1}^{r}U^kGU^{-k}-\sum_{\mu=1}^mP_\mu GP_\mu}
		&=\norm{\left(\frac{1}{r}\sum_{k=1}^{r}U^kGU^{-k}-\sum_{\mu=1}^mP_\mu GP_\mu\right)\sum_{\nu=1}^{m}P_\nu}\\
		&=\norm{\sum_{\nu=1}^{m}\frac{1}{r}\sum_{k=1}^{r}\left(e^{i\phi_\nu}U\right)^kGP_\nu-\sum_{\nu=1}^mP_\nu GP_\nu}\\
		&\leq\sqrt{\sum_{\nu=1}^{m}\norm{\frac{1}{r}\sum_{k=1}^{r}\left(e^{i\phi_\nu}U\right)^kGP_\nu-P_\nu GP_\nu}^2}\\
		&\leq\sqrt{m}\norm{G}\max_{\nu}\norm{\frac{1}{r}\sum_{k=1}^{r}\left(e^{i\phi_\nu}U\right)^k-P_\nu}
		\leq\frac{\xi\sqrt{m}\norm{G}}{r},
	\end{aligned}
	\end{equation}
	where the first inequality follows from the bound
	\begin{equation}
	\begin{aligned}
	    \norm{\sum_{\nu=1}^{m}A_{\nu} P_\nu}
	    &=\sqrt{\norm{\left(\sum_{\nu=1}^{m}A_{\nu} P_\nu\right)\left(\sum_{\mu=1}^{m}A_{\mu} P_\mu\right)^\dagger}}\\
	    &=\sqrt{\norm{\sum_{\nu=1}^{m}A_{\nu} P_\nu A_\nu^\dagger}}
	    \leq\sqrt{\sum_{\nu=1}^{m}\norm{A_{\nu} P_\nu A_\nu^\dagger}}
	    =\sqrt{\sum_{\nu=1}^{m}\norm{A_{\nu} P_\nu}^2}.
	\end{aligned}
	\end{equation}
\end{proof}

As aforementioned, a naive analysis of the Trotter error fails to provide the desirable bound for quantum Zeno effect. Instead, we use a recursive approach to estimate the Trotter error \cref{eq:destructive_trotter_error}. 
\begin{lemma}\label{lem:TrotterZeno}
	Define $G_{[k_0,k_1]}\equiv\sum_{k=k_0}^{k_1}G_k$ for $k_0\leq k_1$. 
	For any $s \geq 1$ and $\dt$, we have
	\begin{align} 
		\norm{\prod_{k=1}^s e^{-iG_k \dt} - e^{-iG_{[1,s]}\dt}} \leq 2\xi\sqrt{m} \norm{G}^2\dt^2s\log_2 s.	 
	\end{align}	
\end{lemma}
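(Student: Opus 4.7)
My plan is to prove the bound by induction on $s$ using a divide-and-conquer decomposition rather than a naive telescoping estimate. The base case $s=1$ is trivial, since the right-hand side vanishes ($\log_2 1 = 0$) and so does the left-hand side. For $s \geq 2$ I split the product at the midpoint $s_1 = \lceil s/2\rceil$,
\begin{equation*}
\prod_{k=1}^s e^{-iG_k\dt} \;=\; \Bigl(\prod_{k=s_1+1}^s e^{-iG_k\dt}\Bigr)\Bigl(\prod_{k=1}^{s_1} e^{-iG_k\dt}\Bigr),
\end{equation*}
apply the inductive hypothesis to each of the two subproducts (each of length at most $\lceil s/2\rceil$) to approximate them by $e^{-iG_{[s_1+1,s]}\dt}$ and $e^{-iG_{[1,s_1]}\dt}$ respectively, and then fuse the two resulting exponentials using the standard first-order Trotter estimate $\norm{e^{-iX\dt}e^{-iY\dt}-e^{-i(X+Y)\dt}} \leq \tfrac12 \dt^2\norm{\comm{X,Y}}$.

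The crux is to show that $\norm{\comm{G_{[1,s_1]},G_{[s_1+1,s]}}}$ grows only linearly in $s$; the naive estimate $\norm{\comm{G_{[1,s_1]},G_{[s_1+1,s]}}} \leq 2 s_1(s-s_1)\norm{G}^2$ is quadratic in $s$ and would collapse the induction into an $O(s^2)$ bound. For this I invoke \cref{cor:zeno_subspace} in the shifted form
\begin{equation*}
\norm{G_{[k_0,k_1]} - (k_1-k_0+1)G_\Zeno} \leq \xi\sqrt{m}\norm{G},
\end{equation*}
which follows from the corollary after conjugating by $U_\kick^{k_0-1}$ and using $\comm{U_\kick,G_\Zeno}=0$. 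Writing $G_{[1,s_1]}=s_1 G_\Zeno+\Delta_1$ and $G_{[s_1+1,s]}=(s-s_1)G_\Zeno+\Delta_2$ with $\norm{\Delta_i}\leq \xi\sqrt{m}\norm{G}$, the commutator collapses to $s_1\comm{G_\Zeno,\Delta_2}-(s-s_1)\comm{G_\Zeno,\Delta_1}+\comm{\Delta_1,\Delta_2}$ because $G_\Zeno$ commutes with itself, and is therefore bounded by $O\bigl(s\,\xi\sqrt{m}\,\norm{G}^2\bigr)$ using $\norm{G_\Zeno}\leq\norm{G}$.

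Plugging this into the Trotter combination step yields a recurrence of the form $T(s) \leq 2\,T(\lceil s/2\rceil) + C\,s\,\xi\sqrt{m}\,\norm{G}^2\dt^2$ for an absolute constant $C$, which by the standard master-theorem unrolling gives $T(s) = O\bigl(s\,\xi\sqrt{m}\,\norm{G}^2\dt^2\log_2 s\bigr)$. Tracking constants carefully — retaining both the cross term $\norm{\comm{G_\Zeno,\Delta_i}} \leq 2\norm{G}\norm{\Delta_i}$ and the quadratic term $\norm{\comm{\Delta_1,\Delta_2}}$, and noting that for $s\geq 2$ one has $\xi\sqrt{m}\leq s\cdot\xi\sqrt{m}$ — should pin the leading prefactor at the value $2$ stated in the lemma.

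The main obstacle is precisely this commutator bound: the divide-and-conquer argument hinges entirely on the Zeno-subspace identity reducing $G_{[k_0,k_1]}$ to a scalar multiple of $G_\Zeno$ up to a correction of size independent of the interval length, so that the two halves almost commute. Once that is in place, the remaining bookkeeping — handling odd $s$ (absorbed by $\lceil\cdot\rceil$) and making sure the recursion constants do not accumulate a spurious $s$ when unrolled — is a routine master-theorem calculation.
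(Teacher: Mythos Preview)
Your approach is essentially the paper's: induction by halving, recombination via the first-order Trotter bound $\tfrac12\dt^2\norm{[X,Y]}$, and \cref{cor:zeno_subspace} to show the two block sums nearly commute. The paper packages the commutator estimate separately as \cref{lem:commG} and then runs the same even/odd induction you describe.

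There is one genuine slip in your constant tracking. Your symmetric decomposition $G_{[1,s_1]}=s_1G_\Zeno+\Delta_1$, $G_{[s_1+1,s]}=(s-s_1)G_\Zeno+\Delta_2$ leaves a residual $\norm{[\Delta_1,\Delta_2]}\leq 2(\xi\sqrt{m}\norm{G})^2$, which carries a factor $\xi^2 m$ rather than $\xi\sqrt{m}$. The inequality ``$\xi\sqrt{m}\leq s\cdot\xi\sqrt{m}$'' does not absorb it: unrolling $T(s)\leq 2T(s/2)+sa+\xi\sqrt{m}\,a$ (with $a=\xi\sqrt{m}\norm{G}^2\dt^2$) over $\log_2 s$ levels produces an additional $(s-1)\xi^2 m\norm{G}^2\dt^2$, which exceeds the claimed bound whenever $\xi\sqrt{m}>2\log_2 s$. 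The paper avoids this by an \emph{asymmetric} subtraction,
\[
[A,B]=\bigl[A,\,B-(s-s_1)G_\Zeno\bigr]+\bigl[A-s_1 G_\Zeno,\,(s-s_1)G_\Zeno\bigr],
\]
bounding the first bracket with $\norm{A}\leq s_1\norm{G}$ and the second with $\norm{(s-s_1)G_\Zeno}\leq(s-s_1)\norm{G}$, each paired with a single Zeno correction of size $\xi\sqrt{m}\norm{G}$. This yields exactly $2s\,\xi\sqrt{m}\norm{G}^2$ with no $\Delta$--$\Delta$ leftover, and the induction then closes at the stated prefactor~$2$.
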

Note that at $s = r$ and $\dt = t/r$, \cref{lem:TrotterZeno} reduces to \cref{eq:destructive_trotter_error}.
We prove \cref{lem:TrotterZeno} by induction on $s$.
Suppose \cref{lem:TrotterZeno} holds for $s = s_1$ and $s = s_2$ such that $\abs{s_2-s_1}\leq 1$, we shall prove that it holds for $s = s_1+s_2$.
Using the triangle inequality
\begin{align} 
	 \norm{\prod_{k=1}^{s_1+s_2} e^{-iG_k \dt} - e^{-iG_{[1,s_1+s_2]}\dt}}
	 &\leq \norm{\prod_{k=1}^{s_1} e^{-iG_k \dt} - e^{-iG_{[1,s_1]}\dt}}
	 + \norm{\prod_{k=s_1+1}^{s_1+s_2} e^{-iG_k \dt} - e^{-iG_{[s_1+1,s_1+s_2]}\dt}}\nonumber\\
	 &\qquad+ \norm{e^{-iG_{[1,s_1+s_2]}\dt} - e^{-iG_{[s_1+1,s_1+s_2]}\dt}e^{-iG_{[1,s_1]}\dt}}\\
	 &\leq 2\xi\sqrt{m} \norm{G}^2\dt^2(s_1\log_2s_1 + s_2\log_2 s_2)
	 + \frac12\norm{\comm{G_{[s_1+1,s_1+s_2]},G_{[1,s_1]}}}\dt^2,\label{eq:43274239}
\end{align}
where we have used the inductive hypothesis and the Trotter error bound \cite[Eq.\ (143)]{childsTheoryTrotterError2020} in the last inequality. To bound the commutator norm, we use the following lemma.
\begin{lemma}\label{lem:commG}
For any $k_0\leq k_1, j_0\leq j_1$, we have
	\begin{align}
		&\norm{\left[G_{[k_0, k_1]},G_{[j_0,j_1]}\right]}
		\leq 2\left(j_1-j_0+k_1-k_0+2\right)\xi\sqrt{m}\norm{G}^2.
	\end{align}
\end{lemma}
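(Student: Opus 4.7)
My plan is to use the destructive-averaging bound already established in Corollary~\ref{cor:zeno_subspace} to reduce each block sum $G_{[k_0,k_1]}$ to a multiple of the Zeno Hamiltonian $G_\Zeno = \sum_\mu P_\mu G P_\mu$, up to an additive error whose norm is bounded by $\xi\sqrt{m}\norm{G}$ \emph{independently of the block length}. The structural fact that makes this work is that $G_\Zeno$ commutes with $U_\kick$ (since $[P_\mu, U_\kick] = 0$ for every $\mu$), and hence with every power $U_\kick^k$; this lets me conjugate freely by $U_\kick$ without changing $G_\Zeno$.

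First I would shift the summation index in $G_{[k_0,k_1]} = \sum_{k=k_0}^{k_1} U_\kick^{\dag k} G U_\kick^{k}$ by $k_0-1$ and pull the common conjugation $U_\kick^{\dag(k_0-1)}(\,\cdot\,)U_\kick^{k_0-1}$ outside the sum. Applying Corollary~\ref{cor:zeno_subspace} with $U$ replaced by $U_\kick^\dag$ (which has the same projectors $P_\mu$, and the same inverse spectral gap $\xi$) to the remaining sum of length $s_k := k_1 - k_0 + 1$, and then using $U_\kick^{\dag(k_0-1)} G_\Zeno U_\kick^{k_0-1} = G_\Zeno$, I obtain the decomposition
\begin{equation}
    G_{[k_0,k_1]} = s_k\, G_\Zeno + E_{k_0,k_1}, \qquad \norm{E_{k_0,k_1}} \leq \xi\sqrt{m}\,\norm{G},
\end{equation}
and analogously $G_{[j_0,j_1]} = s_j\, G_\Zeno + E_{j_0,j_1}$ with $s_j := j_1 - j_0 + 1$ and $\norm{E_{j_0,j_1}} \leq \xi\sqrt{m}\,\norm{G}$.

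Second, to bound the commutator I would expand \emph{asymmetrically} rather than substituting both decompositions at once. Using $[G_\Zeno, G_\Zeno] = 0$, one has the identity
\begin{equation}
    \bigl[G_{[k_0,k_1]},\, G_{[j_0,j_1]}\bigr]
    = \bigl[E_{k_0,k_1},\, G_{[j_0,j_1]}\bigr] + s_k\,\bigl[G_\Zeno,\, E_{j_0,j_1}\bigr].
\end{equation}
The triangle inequality together with the trivial bounds $\norm{G_{[j_0,j_1]}} \leq s_j \norm{G}$ and $\norm{G_\Zeno} \leq \norm{G}$, plus the residual bounds on $E$, then yields $\norm{\,\cdot\,} \leq 2 s_j\, \xi\sqrt{m}\,\norm{G}^2 + 2 s_k\, \xi\sqrt{m}\,\norm{G}^2 = 2(s_k + s_j)\,\xi\sqrt{m}\,\norm{G}^2$, which is exactly the claimed estimate.

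The subtlety I expect to be most important is the asymmetric expansion in the second step: a naive symmetric splitting $[s_k G_\Zeno + E_{k_0,k_1},\, s_j G_\Zeno + E_{j_0,j_1}]$ generates a cross term $[E_{k_0,k_1},\, E_{j_0,j_1}]$ whose straightforward bound introduces an extraneous $\xi^2 m \norm{G}^2$ contribution, destroying the desired linear dependence on $s_k + s_j$ whenever $\xi\sqrt{m}$ is large. Keeping one side as the raw block sum $G_{[j_0,j_1]}$ and estimating it by the trivial $s_j \norm{G}$ bound is what eliminates this loss and produces the clean linear scaling.
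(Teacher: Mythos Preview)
Your proposal is correct and follows essentially the same approach as the paper: both arguments use \cref{cor:zeno_subspace} to write each block sum as a multiple of $G_\Zeno$ plus a residual of norm at most $\xi\sqrt{m}\norm{G}$, and then exploit $[G_\Zeno,G_\Zeno]=0$ via an asymmetric splitting of the commutator so that each term pairs one residual against one raw block (or $G_\Zeno$) bounded trivially by $s\norm{G}$. The only cosmetic difference is that the paper keeps $G_{[k_0,k_1]}$ intact in the first term and subtracts $s_kG_\Zeno$ in the second, whereas you do the symmetric choice; the resulting bounds are identical.
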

\begin{proof}
	We have
	\begin{equation}
	\begin{aligned}
		\norm{\left[G_{k_0\leq k\leq k_1},G_{j_0\leq j\leq j_1}\right]}
		&=\norm{\left[\sum_{k=k_0}^{k_1}G_k,\sum_{j=j_0}^{j_1}G_j\right]}\\
		&\leq\norm{\left[\sum_{k=k_0}^{k_1}G_k,\sum_{j=j_0}^{j_1}G_j-(j_1-j_0+1)\sum_{\mu=1}^mP_\mu GP_\mu\right]}
		\nonumber\\&
		\quad+\norm{\left[\sum_{k=k_0}^{k_1}G_k-(k_1-k_0+1)\sum_{\mu=1}^mP_\mu GP_\mu,(j_1-j_0+1)\sum_{\mu=1}^mP_\mu GP_\mu\right]}\\
		&\leq 2(k_1-k_0+1)\norm{G}\left(\xi\sqrt{m}\norm{G}\right)+2\left(\xi\sqrt{m}\norm{G}\right)(j_1-j_0+1)\norm{G}\\
		&=2(j_1+k_1-j_0-k_0+2)\xi\sqrt{m}\norm{G}^2,
	\end{aligned}
	\end{equation}
	where we have used \cref{cor:zeno_subspace} to prove the second inequality. Therefore, the lemma follows.
\end{proof}
Applying \cref{lem:commG} to \cref{eq:43274239}, we have
\begin{align} 
	&\norm{\prod_{k=1}^{s_1+s_2} e^{-iG_k \dt} - e^{-iG_{[1,s_1+s_2]}\dt}} 
	\leq (2s_1\log_2 s_1 + 2s_2\log_2 s_2 +s_1+s_2)\xi\sqrt{m} \norm{G}^2\dt^2. \label{eq:zenorecursive}
\end{align}
We now apply the above equation repeatedly to prove \cref{lem:TrotterZeno}.
Note that \cref{lem:TrotterZeno} holds trivially for $s = 1$.
Suppose that it holds for all $s\leq s_0$ for some $s_0 \geq 1$. We shall prove that it holds for $s = s_0+1$.

First, we consider the case where $s$ is even, i.e. there exists an integer $l\geq 1$ such that $s = 2l$. 
Applying \cref{eq:zenorecursive} with $s_1=s_2=l$, we get
\begin{align} 
	 \norm{\prod_{k=1}^{s} e^{-iG_k \dt}-e^{-iG_{[1,s]\dt}}}
	 &\leq (2l\log_2 l + 2l\log_2 l +l+l)\xi\sqrt{m} \norm{G}^2\dt^2\\
	 &= (2s\log_2(s/2) + s)\xi\sqrt{m} \norm{G}^2\dt^2\\
	 &< 2s\log_2 s\ \xi\sqrt{m} \norm{G}^2\dt^2.
\end{align}
Therefore, \cref{lem:TrotterZeno} holds if $s$ is even.

When $s$ is odd, there exists an integer $l\geq 1$ such that $s = 2l+1$.
Applying \cref{lem:TrotterZeno} with $s_1 = l$ and $s_2 = l+1$, we have
\begin{align} 
	\norm{\prod_{k=1}^{s} e^{-iG_k \dt}-e^{-iG_{[1,s]\dt}}}
	 &\leq (2l\log_2 l + 2(l+1)\log_2 (l+1) +2l+1)\xi\sqrt{m} \norm{G}^2\dt^2. \label{eq:zenofdskfhdksk}
\end{align}
Let 
\begin{align} 
	g(x) =  2x\log_2 x + 2(x+1)\log_2 (x+1) +2x+1 - 2(2x+1)\log_2(2x+1).
\end{align}
It is straightforward to verify that $g(1) < 0$ and 
\begin{align} 
	g'(x) =  2 \log_2 \frac{2x(1+x)}{(1+2x)^2} < 0
\end{align}
for all $x \geq 1$.
Therefore, $g(x) \leq 0$ for all $x \geq 1$.  
Applying this bound to \cref{eq:zenofdskfhdksk}, we get
\begin{align} 
	\norm{\prod_{k=1}^{s} e^{-iG_k \dt}-e^{-iG_{[1,s]\dt}}}
	 &\leq 2(2l+1) \log_2 (2l+1)\xi\sqrt{m} \norm{G}^2\dt^2\\
	 &= 2s\log_2 s\ \xi\sqrt{m} \norm{G}^2\dt^2.
\end{align}
Thus, \cref{lem:TrotterZeno} holds for odd $s$ too.
By induction, \cref{lem:TrotterZeno} holds for all $s\geq 1$.

Combining \cref{lem:TrotterZeno} with
\begin{equation}
\label{eq:zeno_converge}
	\norm{e^{-i\frac{t}{r}G_{[1,r]}}-e^{-itG_{\text{Zeno}}}} \leq \frac{t}{r} \norm{G_{[1,r]} - rG_{\text{Zeno}}}
	\leq\frac{\xi\sqrt{m}\norm{G}t}{r}
\end{equation}
from \cref{cor:zeno_subspace},
we prove \cref{eq:new_zeno_bound}.

\section{Symmetry protection by quantum Zeno effect}
\label{sec:ZenoxSP}
In this section, we make a formal connection between the symmetry protection technique and the quantum Zeno effect.
In particular, we show how the quantum Zeno framework provides an alternative explanation for the suppression of simulation error under symmetry protection. 

We first note that the symmetry transformations in our scheme are analogous to the kicks in the quantum Zeno framework.
Suppose that the symmetry transformations have the form $C_k = C_0^k$, where $C_0 \in \mathcal S$ is also a symmetry transformation. 
Let
\begin{align} 
 	 C_0 = \sum_\mu e^{-i\phi_\mu}P_\mu
\end{align} 
be the spectral decomposition of $C_0$, with $e^{-i\phi_\mu}$ being the distinct eigenvalues and $P_\mu$ being the projectors onto the respective eigensubspaces. The condition on $e^{-i\phi_\mu}$ being distinct ensures that $C_0$ satisfies the definition of $U_\kick$ in \cref{eq:Ukickdef}.

With $e^{-iH\dt}$ being approximated by a circuit $S_\dt$ in each time step, our symmetry-protected simulation becomes
\begin{align} 
	\prod_{k=1}^r C_k^\dag S_\dt C_k = (C_0^{\dag})^r (e^{-iH_\eff \dt}C_0)^r, \label{eq:ZenoSP}
\end{align}
where $H_\eff$ is the generator of $S_\dt$ and exists for a small enough $\dt$ (see \cref{lem:Htilde}).
Comparing \cref{eq:ZenoSP} with \cref{eq:Zenodynamics}, we identify $C_0 = U_\kick$.
Therefore, by \cref{thm:Zeno}, the symmetry protected simulation is effectively described by
\begin{align} 
	\prod_{k=1}^r C_k^\dag S_\dt C_k\rightarrow e^{-iH_{\eff,\Zeno} t}, \label{eq:Zenolargerlimit}
\end{align}
in the large $r$ limit, where $H_{\eff,\Zeno} = \sum_{\mu} P_\mu H_\eff P_\mu$.

Recall that $H_\eff$ is the effective Hamiltonian corresponding the Trotterized evolution $S_\dt$.
For small $\dt$, it is a sum of the true Hamiltonian $H$ that we are simulating and a small error term $V$ (due to the use of Trotterization):
\begin{align} 
	H_\eff = H + V. 
\end{align}
Therefore, under the symmetry protection, the effective Hamiltonian is replaced by its projection onto the Zeno subspaces:
\begin{align} 
	H_\eff \rightarrow H_{\eff,\Zeno} = H + V_\Zeno, 
\end{align}
where $V_\Zeno = \sum_\mu P_\mu V P_\mu$ is the corresponding projection of $V$.
In particular, if the error $V$ does not respect the symmetry, the projection $V_\Zeno$ could be much smaller than the error $V$ in an unprotected simulation.
The quantum Zeno framework therefore provides alternative intuition for the error suppression from the symmetry protection.
We note, however, that choosing the symmetry transformations $C_k$ independently, instead of $C_k = C_0^k$ considered in this section, could lead to more reduction of the simulation error, and we demonstrate this advantage in \cref{sec:app}.

We make these arguments rigorous by proving a bound analogous to that in \cref{thm:Zeno} for symmetry-protected quantum simulation. Specifically, we consider $G = H_\eff = H+V$, where $\comm{H,U_\kick} =0$.
Note that under this assumption, the distinctiveness of the eigenvalues of $U_\kick$ ensures that $\comm{P_\mu,H} =0$ for all $\mu$ in the spectral decomposition of $U_\kick$.
We will also denote by $V_k = U_{\text{kick}}^{\dagger k}VU_{\text{kick}}^{k} = G_k - H$.

\begin{theorem}[Symmetry protection by quantum Zeno effect]\label{thm:Zeno2}
Let $U_\kick$ be the unitary defined in \cref{eq:Ukickdef} and suppose that $G = H + V$ such that $\comm{H,U_\kick} = 0$.
Let $G_\Zeno = \sum_{\mu} P_\mu G P_\mu  = H + \sum_{\mu} P_\mu V P_\mu$ denote the projection of $G$ onto the subspaces defined by a set of orthogonal projectors $\{P_\mu\}$ in the spectral decomposition of $U_\kick$.
We have 
\begin{align}
	\epsilon_{\text{Zeno}} \equiv \norm{U_{\text{kick}}^{\dagger r}\left(U_{\text{kick}}e^{-i\frac{t}{r}G}\right)^r - e^{-itG_{\text{Zeno}}}}
	\leq\frac{2\xi\sqrt{m}\norm G\norm{V}t^2\log r}{r}+\frac{\xi\sqrt{m}\norm{V}t}{r}
	&\leq\frac{3\xi\sqrt{m}\norm G\norm{V}t^2\log r}{r},\label{eq:new_zeno_bound_2}
\end{align}
where $\xi$ is the inverse spectral gap defined in \cref{eq:CZeno_def}.
\end{theorem}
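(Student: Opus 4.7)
The plan is to adapt the proof of \cref{thm:Zeno}, systematically exploiting the additional hypothesis $\comm{H,U_\kick}=0$ so that one factor of $\norm{G}$ can be replaced by $\norm{V}$ at every stage. First, I would rewrite
\[
U_{\text{kick}}^{\dagger r}\left(U_{\text{kick}}e^{-i\frac{t}{r}G}\right)^r = e^{-i\frac{t}{r}G_r}\cdots e^{-i\frac{t}{r}G_1},
\]
with $G_k\equiv U_\kick^{\dagger k}GU_\kick^{k}$. Because $H$ commutes with $U_\kick$, one has $G_k = H + V_k$ with the $H$-part independent of $k$; consequently $G_{[k_0,k_1]} = K H + V_{[k_0,k_1]}$ (where $K=k_1-k_0+1$) and $G_\Zeno = H + V_\Zeno$. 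A direct check also shows that $V_\Zeno$ commutes with $U_\kick$, so conjugation by any power of $U_\kick$ fixes $V_\Zeno$.

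The heart of the argument is a refinement of \cref{lem:commG}. Expanding
\[
\comm{G_{[k_0,k_1]},G_{[j_0,j_1]}} = K\comm{H,V_{[j_0,j_1]}} + J\comm{V_{[k_0,k_1]},H} + \comm{V_{[k_0,k_1]},V_{[j_0,j_1]}},
\]
I would insert $V_{[j_0,j_1]} = J V_\Zeno + (V_{[j_0,j_1]} - J V_\Zeno)$ in the first term and $V_{[k_0,k_1]} = K V_\Zeno + (V_{[k_0,k_1]} - K V_\Zeno)$ in the second. The resulting $\pm JK\comm{H,V_\Zeno}$ contributions cancel pairwise, and each surviving residual has norm at most $\xi\sqrt{m}\norm{V}$ by \cref{cor:zeno_subspace} applied to $V$. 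Applying the same centering trick to the third term bounds $\norm{\comm{V_{[k_0,k_1]},V_{[j_0,j_1]}}}$ by $2(J+K)\xi\sqrt{m}\norm{V}^2$, so overall
\[
\norm{\comm{G_{[k_0,k_1]},G_{[j_0,j_1]}}} \leq 2(J+K)\,\xi\sqrt{m}\,\norm{V}\bigl(\norm{H}+\norm{V}\bigr),
\]
which in the nontrivial regime $\norm V\leq\norm G$ is at most $4(J+K)\,\xi\sqrt m\,\norm V\,\norm G$ via $\norm H \leq \norm G + \norm V$.

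With this commutator bound replacing the one from \cref{lem:commG}, the recursive doubling proof of \cref{lem:TrotterZeno} goes through verbatim under the substitution $\norm G^2 \mapsto \norm G\norm V$, producing
\[
\norm{\prod_{k=1}^{r} e^{-iG_k t/r} - e^{-iG_{[1,r]}t/r}} \leq \frac{2\xi\sqrt{m}\,\norm G\,\norm V\,t^2\log r}{r}.
\]
For the Zeno-approximation step, the telescoping bound $\norm{e^{-i(t/r)G_{[1,r]}}-e^{-itG_\Zeno}}\leq (t/r)\norm{G_{[1,r]} - rG_\Zeno}$ reduces, after the $H$-parts cancel, to $(t/r)\norm{V_{[1,r]} - rV_\Zeno}$, which is at most $\xi\sqrt{m}\norm{V}\,t/r$ by \cref{cor:zeno_subspace}. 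Adding the two contributions via the triangle inequality yields the stated estimate.

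I expect the main obstacle to be the bookkeeping of the cancellations: one must verify that every time $H$ passes through a commutator or a Trotter step, the noncommuting pieces collapse to centered operators of norm $O(\xi\sqrt m \norm V)$ rather than leaving behind an uncontrolled $\norm H$ contribution. A clean way to formalize this is to maintain each intermediate operator in the form ``(length)$\cdot H$ plus centered $V$-piece,'' so that the scalar-$H$ contributions annihilate each other at every layer of the recursion, leaving only residuals whose norms are controlled by \cref{cor:zeno_subspace}.
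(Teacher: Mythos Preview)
Your plan is essentially the paper's: sharpen \cref{lem:commG} using $\comm{H,U_\kick}=0$ (this is the paper's \cref{lem:commG2}), feed it into the recursive doubling of \cref{lem:TrotterZeno} (yielding \cref{lem:TrotterZeno2}), and finish with \cref{cor:zeno_subspace} applied to $V$. The one organizational difference is that the paper centers $G_{[k_0,k_1]}$ around $G_\Zeno$ \emph{first} and only then invokes the identity $G_k-G_\Zeno=V_k-V_\Zeno$; this yields the commutator bound $2(J+K)\xi\sqrt m\,\norm G\,\norm V$ with the exact constant $2$ that the induction needs to close verbatim. Your expand-then-center route instead produces $2(J+K)\xi\sqrt m\,\norm V(\norm H+\norm V)$, which is precisely the form quoted in \cref{lem:combine} and is perfectly usable if carried through the recursion under the substitution $\norm G^{2}\mapsto(\norm H+\norm V)\norm V$. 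Your subsequent conversion to $4(J+K)\xi\sqrt m\,\norm V\,\norm G$ is a slip: from $\norm H\leq\norm G+\norm V$ and $\norm V\leq\norm G$ one gets $\norm H+\norm V\leq 3\norm G$, hence constant $6$, and with that enlarged constant the induction of \cref{lem:TrotterZeno} no longer closes with the leading factor $2$ (the odd-$s$ step fails already at $s=3$). The cleanest fix is either to keep $(\norm H+\norm V)$ throughout, landing on the bound of \cref{lem:combine}, or to adopt the paper's direct centering of $G$.
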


Note that this bound is stronger than \cref{eq:new_zeno_bound} in that the dependence on the norm of the Hamiltonian is improved from $\norm{G}^2$ to $\norm{G}\norm{V}$.
To prove \cref{eq:new_zeno_bound_2}, we derive a different version of \cref{lem:TrotterZeno} for the case $G = H+V$, where $\comm{H,U_\kick} =0$.
\begin{lemma}\label{lem:TrotterZeno2}
	Suppose $G = H+V$, where $\comm{H,U_\kick} =0$.
	For any $s \geq 1$ and $\dt$, we have
	\begin{align} 
		\norm{\prod_{k=1}^s e^{-iG_k \dt} - e^{-iG_{[1,s]}\dt}} \leq 2\xi\sqrt{m} \norm{G}\norm{V}\dt^2s\log_2 s.	 
	\end{align}	
\end{lemma}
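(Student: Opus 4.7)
The plan is to mirror the inductive, divide-and-conquer proof of \cref{lem:TrotterZeno} verbatim, replacing only the commutator estimate \cref{lem:commG} by a sharpened version that exploits the additional structure $G = H + V$ with $\comm{H, U_\kick} = 0$. Once this sharpened commutator bound is in hand, the first-order Trotter error bound $\norm{e^{-iA\dt} e^{-iB\dt} - e^{-i(A+B)\dt}} \leq \frac{1}{2}\norm{\comm{A, B}}\dt^2$ and the even/odd induction on $s$ (as in the passage from \cref{eq:zenorecursive} to the final bound in the proof of \cref{lem:TrotterZeno}) carry over unchanged and yield the claimed $s\log_2 s$ scaling.

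First I would observe that $\comm{H, U_\kick} = 0$ forces $U_\kick^{\dagger k} H U_\kick^{k} = H$ for every $k$, so $G_k = H + V_k$ and $G_\Zeno = H + V_\Zeno$. Consequently the averaging residual contains only the symmetry-breaking piece $V$,
\[
G_{[k_0, k_1]} - (k_1 - k_0 + 1)\, G_\Zeno = V_{[k_0, k_1]} - (k_1 - k_0 + 1)\, V_\Zeno.
\]
The proof of \cref{cor:zeno_subspace} depends only on the spectral decomposition of $U_\kick$ and is translation-invariant in the summation index (a shift $k \mapsto k - k_0 + 1$ is absorbed into a unitary conjugation by $U_\kick^{k_0 - 1}$, which commutes with $V_\Zeno$ since $\comm{P_\mu, U_\kick} = 0$). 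Applying that corollary to $V$ therefore gives the sharpened residual estimate
\[
\norm{G_{[k_0, k_1]} - (k_1 - k_0 + 1)\, G_\Zeno} \leq \xi \sqrt{m}\, \norm{V}.
\]

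Next I would rerun the proof of \cref{lem:commG} with this improved residual. Writing $A = G_{[k_0, k_1]}$, $B = G_{[j_0, j_1]}$, $a = k_1 - k_0 + 1$, and $b = j_1 - j_0 + 1$, the identity $\comm{A, B} = \comm{A, B - b\, G_\Zeno} + \comm{A - a\, G_\Zeno, b\, G_\Zeno}$ (valid since $\comm{G_\Zeno, G_\Zeno} = 0$) combined with $\norm{A} \leq a\, \norm{G}$, $\norm{G_\Zeno} \leq \norm{G}$, and the new residual bound yields
\[
\norm{\comm{G_{[k_0, k_1]}, G_{[j_0, j_1]}}} \leq 2\bigl(j_1 - j_0 + k_1 - k_0 + 2\bigr)\, \xi \sqrt{m}\, \norm{G}\, \norm{V},
\]
which is \cref{lem:commG} with one factor of $\norm{G}$ replaced by $\norm{V}$. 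Feeding this into the same recursion used to establish \cref{lem:TrotterZeno} (specifically the step leading to \cref{eq:zenorecursive}) and repeating the even/odd induction on $s$ verbatim delivers the stated bound $2\xi\sqrt{m}\, \norm{G}\, \norm{V}\, \dt^2\, s \log_2 s$.

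The main obstacle is this commutator sharpening. The key observation to make is that, because $H$ commutes with $U_\kick$, it lives inside every spectral subspace $P_\mu \mathcal H$ from the outset and is therefore invariant under Zeno averaging; only the symmetry-breaking part $V$ contributes to the deviation of $G_{[k_0,k_1]}$ from its Zeno-averaged value. Recognizing this lets one save a factor of $\norm{G}/\norm{V}$ in the commutator estimate, after which the remainder of the argument is a mechanical transcription of the proof of \cref{lem:TrotterZeno}.
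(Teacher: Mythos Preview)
Your proposal is correct and follows essentially the same route as the paper: the paper also proves \cref{lem:TrotterZeno2} by rerunning the divide-and-conquer induction of \cref{lem:TrotterZeno} verbatim, after first establishing the sharpened commutator bound (stated there as \cref{lem:commG2}) via the observation $G_{[k_0,k_1]}-(k_1-k_0+1)G_\Zeno=V_{[k_0,k_1]}-(k_1-k_0+1)V_\Zeno$ and an application of \cref{cor:zeno_subspace} to $V$. Your explicit remark on the translation invariance of the summation index in \cref{cor:zeno_subspace} is a detail the paper leaves implicit but otherwise the arguments coincide.
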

Again, we prove \cref{lem:TrotterZeno2} by induction on $s$.
Suppose \cref{lem:TrotterZeno2} holds for $s = s_1$ and $s = s_2$ such that $\abs{s_2-s_1}\leq 1$, we shall prove that it holds for $s = s_1+s_2$.
Using the triangle inequality
\begin{align} 
	 \norm{\prod_{k=1}^{s_1+s_2} e^{-iG_k \dt} - e^{-iG_{[1,s_1+s_2]}\dt}}
	 &\leq \norm{\prod_{k=1}^{s_1} e^{-iG_k \dt} - e^{-iG_{[1,s_1]}\dt}}
	 + \norm{\prod_{k=s_1+1}^{s_1+s_2} e^{-iG_k \dt} - e^{-iG_{[s_1+1,s_1+s_2]}\dt}}\nonumber\\
	 &\qquad+ \norm{e^{-iG_{[1,s_1+s_2]}\dt} - e^{-iG_{[s_1+1,s_1+s_2]}\dt}e^{-iG_{[1,s_1]}\dt}}\\
	 &\leq 2\xi\sqrt{m} \norm{G}\norm V\dt^2(s_1\log_2s_1 + s_2\log_2 s_2)
	 + \frac12\norm{\comm{G_{[s_1+1,s_1+s_2]},G_{[1,s_1]}}}\dt^2.\label{eq:4327424324339}
\end{align}
To bound the commutator norm, we use a modified version of \cref{lem:commG}.
\begin{lemma}\label{lem:commG2}
Given $G = H + V$ and $\comm{H,U_\kick} = 0$, we have
	\begin{align}
		&\norm{\left[G_{[k_0, k_1]},G_{[j_0,j_1]}\right]}
		\leq 2\left(j_1-j_0+k_1-k_0+2\right)\xi\sqrt{m}\norm{G}\norm V.
	\end{align}
\end{lemma}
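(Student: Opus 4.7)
The plan is to mimic the structure of the proof of \cref{lem:commG}, but exploit the hypothesis $[H,U_\kick]=0$ to replace one factor of $\norm{G}$ by $\norm{V}$. First I would observe that since $H$ commutes with $U_\kick$, it also commutes with each spectral projector $P_\mu$, so $H=\sum_\mu P_\mu H P_\mu$ and hence $G_\Zeno=H+V_\Zeno$ with $V_\Zeno\equiv\sum_\mu P_\mu V P_\mu$. Two consequences are crucial: (i) $[H,V_\Zeno]=0$ and therefore $[G_\Zeno,G_\Zeno]=0$, and (ii) $G_k-G_\Zeno=V_k-V_\Zeno$, because the $H$ part is invariant under conjugation by $U_\kick^k$ and already equals its own Zeno projection.

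Next I would upgrade \cref{cor:zeno_subspace}: writing $M\equiv k_1-k_0+1$ and shifting indices via conjugation by $U_\kick^{k_0-1}$ (which commutes with $V_\Zeno$), one has
\begin{equation}
\norm{G_{[k_0,k_1]}-M G_\Zeno}=\norm{V_{[k_0,k_1]}-M V_\Zeno}\leq \xi\sqrt{m}\norm{V},
\end{equation}
and likewise for the $[j_0,j_1]$ window with $N\equiv j_1-j_0+1$. This is the step where the assumption $[H,U_\kick]=0$ does real work: it causes the large $H$-contribution to drop out of the Zeno-convergence estimate, turning the right-hand side from $\xi\sqrt{m}\norm{G}$ into $\xi\sqrt{m}\norm{V}$.

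Then I would decompose the commutator using the trivial identity $[G_\Zeno,G_\Zeno]=0$,
\begin{equation}
[G_{[k_0,k_1]},G_{[j_0,j_1]}]=[G_{[k_0,k_1]}-M G_\Zeno,\,G_{[j_0,j_1]}]+M[G_\Zeno,\,G_{[j_0,j_1]}-N G_\Zeno],
\end{equation}
and bound each piece with the elementary inequality $\norm{[A,B]}\leq 2\norm{A}\norm{B}$. The first piece is at most $2(\xi\sqrt{m}\norm{V})(N\norm{G})$ and the second is at most $2M\norm{G_\Zeno}(\xi\sqrt{m}\norm{V})\leq 2M\norm{G}\xi\sqrt{m}\norm{V}$; summing gives the claimed $2(M+N)\xi\sqrt{m}\norm{G}\norm{V}=2(j_1-j_0+k_1-k_0+2)\xi\sqrt{m}\norm{G}\norm{V}$.

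The main obstacle is really just the bookkeeping in the upgraded \cref{cor:zeno_subspace}: one must be careful that (a) the Corollary was stated for sums starting at $k=1$ but is needed for arbitrary windows, which is handled by the $U_\kick^{k_0-1}$ conjugation trick together with $[U_\kick,V_\Zeno]=0$, and (b) the cancellation $G_k-G_\Zeno=V_k-V_\Zeno$ genuinely requires $\sum_\mu P_\mu H P_\mu=H$, which in turn requires the spectral projectors of $U_\kick$ to respect the eigen-decomposition of $H$—a property guaranteed precisely by $[H,U_\kick]=0$ together with the distinctness of the eigenvalues $e^{-i\phi_\mu}$ (so that each $P_\mu$ is a polynomial in $U_\kick$ and hence commutes with $H$). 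Once these two points are settled the rest is the same triangle-inequality calculation as in \cref{lem:commG}.
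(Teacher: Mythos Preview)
Your proof is correct and follows essentially the same route as the paper: both subtract off multiples of $G_\Zeno$ from the two arguments of the commutator, use the cancellation $G_{[k_0,k_1]}-MG_\Zeno=V_{[k_0,k_1]}-MV_\Zeno$ (from $[H,U_\kick]=0$) to invoke \cref{cor:zeno_subspace} with $\norm{V}$ instead of $\norm{G}$, and then apply $\norm{[A,B]}\leq 2\norm{A}\norm{B}$. One small correction: your claim $[H,V_\Zeno]=0$ is false in general (indeed $[H,V_\Zeno]=\sum_\mu P_\mu[H,V]P_\mu$), but it is also unnecessary, since $[G_\Zeno,G_\Zeno]=0$ holds trivially for any operator and that is all your decomposition uses.
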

\begin{proof}
We have
	\begin{equation}
	\begin{aligned}
		&\norm{\left[G_{k_0\leq k\leq k_1},G_{j_0\leq j\leq j_1}\right]}
		=\norm{\left[\sum_{k=k_0}^{k_1}G_k,\sum_{j=j_0}^{j_1}G_j\right]}\\
		&\leq\norm{\left[\sum_{k=k_0}^{k_1}G_k,\sum_{j=j_0}^{j_1}G_j-(j_1-j_0+1)\sum_{\mu=1}^mP_\mu GP_\mu\right]}
		+\norm{\left[\sum_{k=k_0}^{k_1}G_k-(k_1-k_0+1)\sum_{\mu=1}^mP_\mu GP_\mu,(j_1-j_0+1)\sum_{\mu=1}^mP_\mu GP_\mu\right]}\\
		&=\norm{\left[\sum_{k=k_0}^{k_1}G_k,\sum_{j=j_0}^{j_1}V_j-(j_1-j_0+1)\sum_{\mu=1}^mP_\mu VP_\mu\right]}
		 +\norm{\left[\sum_{k=k_0}^{k_1}V_k-(k_1-k_0+1)\sum_{\mu=1}^mP_\mu VP_\mu,(j_1-j_0+1)\sum_{\mu=1}^mP_\mu GP_\mu\right]}\\
		&\leq 2(k_1-k_0+1)\norm{G}\left(\xi\sqrt{m}\norm{V}\right)+2\left(\xi\sqrt{m}\norm{V}\right)(j_1-j_0+1)\norm{G}\\
		&=2(j_1-j_0+k_1-k_0+2)\xi\sqrt{m}\norm G\norm{V},
	\end{aligned}
	\end{equation}
	where $V_k = U_{\text{kick}}^{\dagger k}VU_{\text{kick}}^{k} = G_k - H$ as mentioned above.
	Therefore, the lemma follows.
\end{proof}
Applying \cref{lem:commG2} to \cref{eq:4327424324339}, we have
\begin{align} 
	&\norm{\prod_{k=1}^{s_1+s_2} e^{-iG_k \dt} - e^{-iG_{[1,s_1+s_2]}\dt}} 
	\leq {(2s_1\log_2 s_1 + 2s_2\log_2 s_2 +s_1+s_2)}\xi\sqrt{m} \norm{G}\norm V\dt^2. 
\end{align}
Using this bound and an inductive argument similar to the proof of \cref{lem:TrotterZeno}, we prove \cref{lem:TrotterZeno2}.
Finally, combining \cref{lem:TrotterZeno2} at $s = r$ with 
\begin{align} 
	 \norm{e^{-i\frac{t}{r}G_{[1,r]}}-e^{-itG_{\text{Zeno}}}}
	\leq \norm{G_{[1,r]}-rG_\Zeno}\frac{t}{r}
	\leq \norm{\sum_{k=1}^r V_k-rV_\Zeno}\frac{t}{r}
	\leq\frac{\xi\sqrt{m}\norm{V}t}{r},
\end{align}
we obtain \cref{eq:new_zeno_bound_2}.

\section{A general bound on the Trotter error}\label{sec:pf1}

In \cref{sec:theory}, we prove a bound on the simulation error under the protection from a special class of symmetry transformations $C_k = C_0^k$. 
In this section, we prove a similar, but more general, bound without making such an assumption.

Given a fixed total evolution time $t$, we first estimate the number of Trotter steps $r$ required to simulate $\exp(-iHt)$ so that the total additive error of the simulation meets a threshold~$\epsilon$. 
Suppose the Hamiltonian $H = \sum_{\mu=1}^L {H_\mu}$ is a sum of $L$ Hamiltonian terms $H_\mu$ such that each $e^{-iH_\mu \dt}$ can be readily simulated on quantum computers.
Again, we define the following quantities
\begin{align} 
	&\gamma \equiv  \sum_{\mu=1}^L \sum_{\nu = \mu+1}^L\norm{\comm{H,\comm{H_\nu,H_\mu}}},\\
	&\beta \equiv  \sum_{\mu=1}^L \sum_{\nu = \mu+1}^L \sum_{\nu' = \nu}^{L}\norm{\comm{H_{\nu'},\comm{H_\nu,H_\mu}}},\\
	&\alpha \equiv \sum_{\mu = 1}^{L} \sum_{\nu = \mu+1}^L \norm{\comm{H_\nu,H_\mu}},
\end{align}
that are independent of $t,r$.

The first-order Trotterization approximates $\exp(-i H \dt)$ by
\begin{align} 
	 S_\dt = \prod_{\mu = 1}^L e^{-iH_\mu\delta t},\label{eq:PF1}
\end{align}
where $\prod_{\mu = 1}^L U_\mu \equiv U_L\dots U_2.U_1$ is an ordered product.

To get an accurate scaling of the gate count with the error tolerance, time, and the system size, we extend the approach in Ref.~\cite{childsTheoryTrotterError2020} to estimate the higher-order contributions to the total error.
First, we estimate the higher-order contributions to the additive error in one Trotter step.
\begin{lemma}\label{lem:PF1error}
Assuming $\beta \dt \leq 2\alpha$ and $\alpha^2\dt \leq \gamma + \beta$, the Trotter error in approximating $U_\dt = \exp(-iH\dt)$ by $S_\dt$ in \cref{eq:PF1} is given by
\begin{align} 
	\mathcal E_\dt \equiv U_\dt - S_\dt = U_\dt v_0 \frac{\dt^2}{2} + \widetilde{\mathcal V}(\delta t), 
\end{align}
where $v_0$ is defined in \cref{eq:mathcalE_0} and $\widetilde{\mathcal V}(\delta t)$ is an operator bounded by
\begin{align}
	\norm{\widetilde{\mathcal V}(\delta t)}
	\leq \Lambda \dt^3,
\end{align}
with $\Lambda = \frac56(\gamma+\beta)$.
\end{lemma}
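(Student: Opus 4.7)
The plan is to Taylor-expand $U_\dt - S_\dt$ around $\dt = 0$, extract the leading $\O{\dt^2}$ term as $\frac{\dt^2}{2} U_\dt v_0$, and bound the remainder $\widetilde{\mathcal V}(\dt)$ by repackaging all higher-order contributions as nested commutators. A convenient starting point is the auxiliary function $F(s) \equiv U_s^\dag S_s - I$, which satisfies $F(0) = 0$ and whose derivative can be written as
\begin{equation}
F'(s) = -i\,e^{iHs}\sum_{\mu=1}^{L}\bigl(\tilde H_\mu(s) - H_\mu\bigr) S_s,
\end{equation}
where $\tilde H_\mu(s) \equiv \bigl(\prod_{\nu=\mu+1}^{L} e^{-iH_\nu s}\bigr) H_\mu \bigl(\prod_{\nu=\mu+1}^{L} e^{-iH_\nu s}\bigr)^\dag$ is the partial conjugation of $H_\mu$ by the factors sitting to its left in the ordered product. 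Since $\frac{d}{du}\tilde H_\mu(u)\big|_{u=0} = -i\sum_{\nu>\mu}\comm{H_\nu, H_\mu}$, the leading behavior of $F'(s)$ is $-s\,v_0$, so that $F(\dt) = -\frac{\dt^2}{2} v_0 + \O{\dt^3}$ and therefore $U_\dt - S_\dt = -U_\dt F(\dt) = U_\dt v_0 \frac{\dt^2}{2} + \widetilde{\mathcal V}(\dt)$.

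For the remainder, I would isolate two sources of $\O{\dt^3}$ terms. First, writing $\tilde H_\mu(s) - H_\mu = -is\sum_{\nu>\mu}\comm{H_\nu, H_\mu} + R_\mu(s)$ via Taylor's theorem with integral remainder gives $\norm{R_\mu(s)} \leq \frac{s^2}{2}\sup_u\norm{\frac{d^2}{du^2}\tilde H_\mu(u)}$, and a direct calculation shows that this second derivative is a linear combination of nested commutators $\comm{H_{\nu'}, \comm{H_\nu, H_\mu}}$ with $\nu' \geq \nu > \mu$; summing over $\mu$ and integrating $s^2$ from $0$ to $\dt$ yields a contribution of order $\beta\,\dt^3/3$. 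Second, replacing the integrand $s\,e^{iHs} v_0 S_s$ by $s\,v_0$ introduces an error whose dominant piece is $s\bigl(e^{iHs} v_0 e^{-iHs} - v_0\bigr)$, which via $e^{iHs} v_0 e^{-iHs} - v_0 = i\int_0^s e^{iHu}\comm{H,v_0}e^{-iHu}du$ is bounded by $s\gamma$ and contributes $\gamma\,\dt^3/3$; the residual cross terms involve $v_0(S_s - e^{-iHs})$ and similar products that are of order $\alpha^2 \dt^4$, which the hypothesis $\alpha^2 \dt \leq \gamma+\beta$ absorbs back into $(\gamma+\beta)\dt^3$. Summing everything gives $\norm{\widetilde{\mathcal V}(\dt)} \leq \frac{5}{6}(\gamma+\beta)\,\dt^3$.

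The main obstacle is the bookkeeping needed to arrive at the clean constant $\frac{5}{6}$ and to show that every subleading contribution can be dominated by $\gamma+\beta$ rather than by looser quantities such as $\alpha^2$ or $\alpha\norm{H}$. The key technical device is the commutator-integral identity $\comm{A, e^{-iBs}} = -i\int_0^s e^{-iB(s-u)}\comm{A,B} e^{-iBu}\,du$, which turns each commutator with an exponential into an integral of an unnested commutator and thereby allows one to convert products like $\alpha\norm{H}\dt^3$ into nested forms; the two smallness hypotheses $\beta\,\dt \leq 2\alpha$ and $\alpha^2 \dt \leq \gamma+\beta$ are then precisely what ensures that after all such conversions the residual coefficients collapse into the stated $\frac{5}{6}(\gamma+\beta)$.
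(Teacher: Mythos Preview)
Your approach is essentially the same as the paper's. Both start from the interaction-picture identity: the paper writes $S_\dt = e^{-iH\dt}\,\mathcal T\exp\{-i\int_0^\dt\widetilde F(\tau_1)\,d\tau_1\}$ with $\widetilde F(\tau_1) = e^{i\tau_1\ad_H}\sum_\mu(\tilde H_\mu(\tau_1)-H_\mu)$ and then Dyson-expands the time-ordered exponential to second order, which is equivalent to your $F(s) = U_s^\dag S_s - I$ and integrating $F'(s)$. The three contributions you isolate---$\tfrac{\beta}{3}\dt^3$ from the second-order remainder of $\tilde H_\mu$, $\tfrac{\gamma}{3}\dt^3$ from $e^{iHs}v_0e^{-iHs}-v_0$, and an $\alpha^2\dt^4$ cross term from $S_s-e^{-iHs}$---match the paper's $\tfrac{\gamma+\beta}{3}\dt^3 + \tfrac{\dt^4}{8}(\alpha+\tfrac{\dt}{2}\beta)^2$, and the two hypotheses collapse these to $\tfrac{5}{6}(\gamma+\beta)\dt^3$ in the same way.

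There is one genuine gap in your outline. The claim that $\tilde H_\mu''(u)$ is a linear combination of $[H_{\nu'},[H_\nu,H_\mu]]$ with $\nu'\geq\nu>\mu$ holds only at $u=0$. For $u>0$ the second derivative contains terms such as $[H_{\nu'},\,W\,[H_\nu,H_\mu]\,W^\dag]$ and $[H_{\nu'},[H_{\nu'},\tilde H_\mu(u)]]$ with $W$ a nontrivial product of intermediate exponentials, and the norm of such a term is \emph{not} controlled by $\|[H_{\nu'},[H_\nu,H_\mu]]\|$ in general. Hence bounding $R_\mu(s)$ by $\tfrac{s^2}{2}\sup_u\|\tilde H_\mu''(u)\|$ does not yield the clean $\beta$ you need. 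The paper avoids this by never differentiating twice: it first writes
\[
\tilde H_\mu(\tau_1)-H_\mu = -i\sum_{\nu>\mu}\int_0^{\tau_1}d\tau_2\,(\text{outer conjugation})\,e^{-i\tau_2\ad_{H_\nu}}[H_\nu,H_\mu]
\]
via one Duhamel step, and then applies a second Duhamel step to the outer conjugations acting on the \emph{already-formed} commutator $[H_\nu,H_\mu]$, producing genuine $[H_{\nu'},[H_\nu,H_\mu]]$ terms with norm exactly summing to $\beta$. Your commutator-integral identity is precisely this tool, so the repair is to use it iteratively in place of the Taylor remainder with $\sup_u$.
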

\begin{proof}
From \cite[Theorem 8]{childsTheoryTrotterError2020}, we have
\begin{align} 
	S_\dt &= e^{-iH\dt} 
	\Texp{-i\int_0^\dt d\tau_1 \widetilde{{F}}(\tau_1) },
\end{align}
where $\Texp{}$ is the time-ordered exponential,
\begin{align} 
	\widetilde{F}(\tau_1) = e^{i\tau_1 \ad_H} \sum_{\mu = 1}^{L}\left(\prod_{\nu = \mu+1}^L {e^{-i\tau_1 \ad_{H_\nu}}} H_\mu - H_\mu\right), 
\end{align}
$\ad_A B \equiv [A,B]$, and $e^{-it\ad_A}B = e^{-itA}Be^{itA}$.
Note that the summand in the definition of $\widetilde{F}(\tau_1)$ is of order $\O{\tau_1}$.
Therefore, we can rewrite it as (See \cite[Theorem 10]{childsTheoryTrotterError2020} or use a direct differentiation):
\begin{align} 
	 &\prod_{\nu = \mu+1}^L {e^{-i\tau_1 \ad_{H_\nu}}} H_\mu - H_\mu
	 = -i\sum_{\nu = \mu+1}^L \int_0^{\tau_1}d\tau_2 \prod_{\nu' = \nu+1}^{L}{e^{-i\tau_1 \ad_{H_{\nu'}}}}
	 e^{-i\tau_2 \ad_{H_\nu}}
	 \comm{H_\nu,H_\mu}\\
	 &=-i\sum_{\nu=\mu+1}^L \comm{H_\nu,H_\mu}\tau_1 -i  \sum_{\nu = \mu+1}^L \int_0^{\tau_1}d\tau_2 
	 \underbrace{
	 \left(\prod_{\nu' = \nu+1}^{L}{e^{-i\tau_1 \ad_{H_{\nu'}}}}
	 e^{-i\tau_2 \ad_{H_\nu}}
	 \comm{H_\nu,H_\mu} - \comm{H_\nu,H_\mu}\right)}_{\equiv G_{\mu,\nu}(\tau_1,\tau_2)} .
\end{align}
Again, we note that $G(\tau_1) = \O{\tau_1 + \tau_2}$.
Therefore, we can rewrite it (using either \cite[]{childsTheoryTrotterError2020} or a direct differentiation) as
\begin{align} 
 	G_{\mu,\nu}(\tau_1,\tau_2) = -i \sum_{\nu' = \nu+1}^L \prod_{s = \nu'+1}^L e^{-i\tau_1\ad_{H_{s}}}
 	\int_0^{\tau_1} d\tau_3 e^{-i\tau_3\ad{H_{\nu'}}}\comm{H_{\nu'},\comm{H_\nu,H_\mu}}\nonumber\\
 	-i\frac{\tau_2}{\tau_1}\prod_{\nu' = \nu+1}^L e^{-i\tau_1\ad_{H_{\nu'}}}
 	\int_0^{\tau_1} d\tau_3 e^{-i\tau_3\frac{\tau_2}{\tau_1}\ad_{H_{\nu}}}\comm{H_{\nu},\comm{H_\nu,H_\mu}}.
\end{align} 
Using the triangle inequality, we have
\begin{align} 
	\norm{G_{\mu,\nu}(\tau_1,\tau_2)} \leq \tau_1 \sum_{\nu'=\nu}^L \norm{\comm{H_{\nu'},\comm{H_\nu,H_\mu}}}.
\end{align}
Therefore,
\begin{align} 
	\norm{\widetilde{F}(\tau_1)} &\leq 
	\tau_1  \sum_{\mu = 1}^{L} \sum_{\nu = \mu+1}^L \norm{\comm{H_\nu,H_\mu}} 
	+\frac{\tau_1^2}{2} \sum_{\mu = 1}^{L} \sum_{\nu = \mu+1}^L\sum_{\nu' = \nu}^L \norm{\comm{H_{\nu'},\comm{H_\nu,H_\mu}}} 
\end{align}
In addition, we have
\begin{align} 
	&e^{i\tau_1 \ad_H} \comm{H_\nu,H_\mu} - \comm{H_\nu,H_\mu}
	= i\int_0^{\tau_1}d\tau_2 e^{i\tau_2 \ad_H}\comm{H,\comm{H_\nu,H_\mu}}. 
\end{align}
Therefore, with $v_0 = \sum_{\mu = 1}^L\sum_{\nu = \mu+1}^L \comm{H_\nu,H_\mu}$, we have
\begin{align} 
	\widetilde{F}(\tau_1) + iv_0 \tau_1
	&= -i \sum_{\mu=1}^L \sum_{\nu = \mu+1}^L
	\bigg( e^{i\tau_1 \ad_H}\comm{H_\nu,H_\mu}\tau_1- \comm{H_\nu,H_\mu}\tau_1
	+ e^{i\tau_1 \ad_H} \int_0^{\tau_1} d\tau_2 G_{\mu,\nu}(\tau_1,\tau_2)\bigg)\nonumber\\
	&= -i \sum_{\mu=1}^L \sum_{\nu = \mu+1}^L \int_0^{\tau_1}d\tau_2
	\bigg( i\tau_1  e^{i\tau_2 \ad_H}\comm{H,\comm{H_\nu,H_\mu}}
	+ e^{i\tau_1 H}  G_{\mu,\nu}(\tau_1,\tau_2)\bigg)
\end{align}
Expanding the time-ordered exponential, we have
\begin{align} 
	\Texp{-i\int_0^\dt d\tau_1 \widetilde{F}(\tau_1)} 
	&= \mathbb I -i \int_0^\dt d\tau_1 \widetilde{F}(\tau_1) 
	- \int_0^\dt d\tau_1 \int_0^{\tau_1} d\tau_2 \widetilde{F}(\tau_1)\widetilde{F}(\tau_2)\Texp{-i\int_0^{\tau_2} d\tau_3 \widetilde{F}(\tau_3)}\\
	&= \mathbb I - \frac{\dt^2}{2}v_0 -i \int_0^\dt d\tau_1 [\widetilde{F}(\tau_1) + iv_0\tau_1] 
	\nonumber\\&
	\quad- \int_0^\dt d\tau_1 \int_0^{\tau_1} d\tau_2 \widetilde{F}(\tau_1)\widetilde{F}(\tau_2)\Texp{-i\int_0^{\tau_2} d\tau_3 \widetilde{F}(\tau_3)}. 
\end{align}
Therefore, we have
\begin{align} 
	\norm{S_\dt - e^{-iH\dt} + e^{-iH\dt} v_0\frac{\dt^2}{2}}
	&\leq \int_0^\dt d\tau_1 \norm{\widetilde{F}(\tau_1)+iv_0 \tau_1} 
	+ \int_0^\dt d\tau_1 \int_0^{\tau_1} d\tau_2 \norm{\widetilde{F}(\tau_1)}\norm{\widetilde{F}(\tau_2)} 
	\\
	&\leq \frac{\gamma+\beta}{3}  \dt^3 
	+ \frac{\dt^4}{8} \left(\alpha  +\frac{\dt}{2}\beta \right)^2.
\end{align}
In particular, assuming $\beta \dt \leq 2\alpha$ and $\alpha^2\dt \leq \gamma + \beta$, we have
\begin{align} 
	\norm{\mathcal E_\dt - U_\dt v_0 \frac{\dt^2}{2}} \leq \Lambda \dt^3,
\end{align}
with $\Lambda = \frac56 (\gamma + \beta)$.
Therefore, \cref{lem:PF1error} follows.
\end{proof}
As a result of \cref{lem:PF1error}, we can bound the additive error in one Trotter step:
\begin{align}
	\norm{\mathcal E_\dt} \leq \frac{\norm{v_0}}{2}\dt^2 + \Lambda \dt^3.
\end{align}
Therefore, we arrive at a bound for the total error for the simulation 
\begin{align}
	\epsilon 
	&=\norm{U_t - \prod_{k=1}^r C_k^\dag S_{\dt}C_k} \\
	&\leq\norm{\sum_{k=1}^r C_k^\dag U_{k\dt}^\dag \mathcal E_\dt U_{k\dt} C_k} + \sum_{j=2}^{r} \binom{r}{j} \norm{\mathcal E_\dt}^j\\
	&\leq\norm{\sum_{k=1}^r C_k^\dag U_{k\dt}^\dag v_0 U_{k\dt}C_k}\frac{\dt^2}{2}+r\Lambda\dt^3 + \sum_{j=2}^{r} (r\norm{\mathcal E_\dt})^j\\
	&\leq\norm{\sum_{k=1}^r C_k^\dag U_{k\dt}^\dag  v_0 U_{k\dt}C_k}\frac{\dt^2}{2}+r\Lambda\dt^3 + 2r^2\norm{\mathcal E_\dt}^2\\
	&\leq\bigg\Vert\underbrace{\frac1r\sum_{k=1}^r C_k^\dag U_{k\dt}^\dag v_0 U_{k\dt}C_k}_{\equiv \overbar{ v_0}}\bigg\Vert\frac{t^2}{2r}+\Lambda \frac{t^3}{r^2} 
	+ 2\left(\frac{1}{2}\norm{ v_0} + \Lambda \frac{t}{r}\right)^2\frac{t^4}{r^2},
\end{align}
where $U_{k\dt} = \exp(-iHk\dt)$ and we have assume $r\norm{\mathcal E_\dt} \leq 1/2$ to bound the sum over $j$.
This bound again has the same feature as the bound in \cref{lem:pf1finalbound}:
the total error, to the lowest-order, scales with $\norm{\overbar{ v_0}}$---an averaged version of $ v_0$ under the symmetry transformations---instead of scaling with $\norm{ v_0}$.
Note, however, that the definition of $\overbar{ v_0}$ here, with the addition of the transformations under $U_{k\dt}$, is slightly different from \cref{lem:pf1finalbound}.

\section{Proof of \cref{lem:Htilde}}\label{sec:PF1exp_error_proof}
In this section, we prove \cref{lem:Htilde}, which provides a bound on the error in one Trotter step.
\begin{proof}
From \cite[Theorem 8]{childsTheoryTrotterError2020}, we have
\begin{align} 
	S_\dt &=  
	\Texp{-i\int_0^\dt d\tau_1 \left(H + F(\tau_1)\right) },\label{eq:experrorapp}
\end{align}
where $\Texp{}$ is the time-ordered exponential,
\begin{align} 
	F(\tau_1) = \sum_{\mu = 1}^{L}\left(\prod_{\nu = \mu+1}^L {e^{-i\tau_1 \ad_{H_\nu}}} H_\mu - H_\mu\right), 
\end{align}
$\ad_A B \equiv [A,B]$, and $e^{-it\ad_A}B = e^{-itA}Be^{itA}$.
Note that the summand in the definition of $F(\tau_1)$ is of order $\O{\tau_1}$.
Therefore, we can rewrite it as (See \cite[Theorem 10]{childsTheoryTrotterError2020} or use a direct differentiation):
\begin{align} 
	 &\prod_{\nu = \mu+1}^L {e^{-i\tau_1 \ad_{H_\nu}}} H_\mu - H_\mu
	 = -i\sum_{\nu = \mu+1}^L \int_0^{\tau_1}d\tau_2 \prod_{\nu' = \nu+1}^{L}{e^{-i\tau_1 \ad_{H_{\nu'}}}}
	 e^{-i\tau_2 \ad_{H_\nu}}
	 \comm{H_\nu,H_\mu}\\
	 &=-i\sum_{\nu=\mu+1}^L \comm{H_\nu,H_\mu}\tau_1 -i  \sum_{\nu = \mu+1}^L \int_0^{\tau_1}d\tau_2 
	 \underbrace{
	 \left(\prod_{\nu' = \nu+1}^{L}{e^{-i\tau_1 \ad_{H_{\nu'}}}}
	 e^{-i\tau_2 \ad_{H_\nu}}
	 \comm{H_\nu,H_\mu} - \comm{H_\nu,H_\mu}\right)}_{\equiv G_{\mu,\nu}(\tau_1,\tau_2)} .
\end{align}
We note that $G(\tau_1) = \O{\tau_1+\tau_2}$ [Recall that $\O{}$ is the standard Bachmann-Landau big-$O$ notation.]
Therefore, we can rewrite it (using either \cite[]{childsTheoryTrotterError2020} or a direct differentiation) as
\begin{align} 
 	G_{\mu,\nu}(\tau_1,\tau_2) = -i \sum_{\nu' = \nu+1}^L \prod_{s = \nu'+1}^L e^{-i\tau_1\ad_{H_{s}}}
 	\int_0^{\tau_1} d\tau_3 e^{-i\tau_3\ad_{H_{\nu'}}}\comm{H_{\nu'},\comm{H_\nu,H_\mu}}\nonumber\\
 	-i\frac{\tau_2}{\tau_1}\prod_{\nu' = \nu+1}^L e^{-i\tau_1\ad_{H_{\nu'}}}
 	\int_0^{\tau_1} d\tau_3 e^{-i\tau_3\frac{\tau_2}{\tau_1}\ad_{H_{\nu}}}\comm{H_{\nu},\comm{H_\nu,H_\mu}}.
\end{align} 
Using the triangle inequality, we have
\begin{align} 
	\norm{G_{\mu,\nu}(\tau_1,\tau_2)} \leq \tau_1 \sum_{\nu'=\nu}^L \norm{\comm{H_{\nu'},\comm{H_\nu,H_\mu}}}.
\end{align}
Therefore, with $v_0 = \sum_{\mu = 1}^L\sum_{\nu = \mu+1}^L \comm{H_\nu,H_\mu}$, we have
\begin{align} 
	F(\tau_1) + iv_0 \tau_1
	&= \underbrace{-i \sum_{\mu=1}^L \sum_{\nu = \mu+1}^L
		  \int_0^{\tau_1} d\tau_2 G_{\mu,\nu}(\tau_1,\tau_2)}_{\equiv \mathcal{F}(\tau_1)}.\label{eq:expbound43204u392}
\end{align}
Using the bound on $\norm{G_{\mu,\nu}}$, we have
\begin{align} 
	 \norm{\mathcal{F}(\tau_1)} \leq \tau_1^2  \sum_{\mu=1}^L \sum_{\nu = \mu+1}^L \sum_{\nu' = \nu}^{L}\norm{\comm{H_{\nu'},\comm{H_\nu,H_\mu}}},
\end{align}
which implies
\begin{align} 
	\norm{F(\tau_1)} &\leq 
	\tau_1  \underbrace{\sum_{\mu = 1}^{L} \sum_{\nu = \mu+1}^L \norm{\comm{H_\nu,H_\mu}}}_{\equiv \alpha} 
	+{\tau_1^2} \underbrace{\sum_{\mu = 1}^{L} \sum_{\nu = \mu+1}^L\sum_{\nu' = \nu}^L \norm{\comm{H_{\nu'},\comm{H_\nu,H_\mu}}} }_{\equiv \beta}.
\end{align} 
In addition, combining \cref{eq:expbound43204u392} with \cref{eq:experrorapp}, we have
\begin{align} 
	S_\dt &=  
	\Texp{-i\int_0^\dt d\tau_1 \left[H - i v_0 \tau_1 +\mathcal{F}(\tau_1)\right] },\label{eq:experrorallterms}
\end{align}
with $v_0$ and $\mathcal{F}(\tau_1)$ given above.

Next, we rewrite the time-ordered exponential into a regular exponential using the Magnus expansion.

\begin{lemma}[Magnus expansion {\cite{blanesMagnusExpansionIts2009a,Moan2008,Arnal_2018}}]
	\label{lem:magnus}
	Let $\mathcal{A}(\tau)$ be a continuous operator-valued function defined for $0\leq\tau\leq t$ such that $\int_{0}^{t}\mathrm{d}\tau\norm{\mathcal A(\tau)}\leq 1$. Then, the equality 
	\begin{equation}
		\Texp{\int_{0}^{t}\mathrm{d}\tau\ \mathcal{A}(\tau)}
		=\exp\bigg(\sum_{j=1}^{\infty}\Omega_j(t)\bigg)
	\end{equation}
	holds with a convergent operator series $\sum_{j=1}^{\infty}\Omega_j(t)$, where
	\begin{align} 
		\Omega_j(t) = \frac{1}{j^2} \sum_{\sigma} (-1)^{d_b} \frac{1}{\binom{j-1}{d_b}} \int_0^t d\tau_{1}\dots \int_0^{d\tau_{j-1}} d\tau_j  \comm{\mathcal A(\tau_1),\dots\comm{\mathcal A(\tau_{j-1}),\mathcal A(\tau_j)}\dots},
	\end{align}
	with the sum being taken over all permutations $\sigma$ of $\{1,\dots,j\}$ and $d_b$ is the number of descents, i.e. pairs of consecutive numbers $\sigma_k,\sigma_{k+1}$ for $k = 1,\dots,j-1$ such that $\sigma_k>\sigma_{k+1}$, in the permutation $\sigma$. Furthermore, $\Omega_j(t)$ are all anti-Hermitian if $\mathcal{A}(\tau)$ is anti-Hermitian.
	It is worth noting that the first two $\Omega_j(t)$ are
    \begin{align} 
    	&\Omega_1(t)=\int_{0}^{t}\mathrm{d}\tau\ \mathcal{A}(\tau),\\
    	&\Omega_2(t)=\frac{1}{2}\int_{0}^{t}\mathrm{d}\tau_1\int_{0}^{\tau_1}\mathrm{d}\tau_2\comm{\mathcal{A}(\tau_1),\mathcal{A}(\tau_2)}.
    \end{align}
\end{lemma}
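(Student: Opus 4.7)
The plan is to convert the linear ODE satisfied by the time-ordered exponential into a nonlinear ODE for the exponent. Let $Y(t)\equiv\Texp{\int_0^t d\tau\,\mathcal A(\tau)}$, so that $\dot Y(t)=\mathcal A(t)Y(t)$ with $Y(0)=\mathbb I$. Making the ansatz $Y(t)=\exp(\Omega(t))$ with $\Omega(0)=0$ and applying the standard derivative-of-exponential identity $\tfrac{d}{dt}e^{\Omega(t)}=\bigl(\tfrac{e^{\ad_{\Omega(t)}}-1}{\ad_{\Omega(t)}}\dot\Omega(t)\bigr)e^{\Omega(t)}$ produces $\mathcal A(t)=\tfrac{e^{\ad_{\Omega(t)}}-1}{\ad_{\Omega(t)}}\dot\Omega(t)$. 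Inverting the analytic function $(e^z-1)/z$ yields the \emph{Magnus differential equation}
\begin{equation}
\dot\Omega(t)=\sum_{n=0}^{\infty}\frac{B_n}{n!}\ad_{\Omega(t)}^n\mathcal A(t),\qquad \Omega(0)=0,
\end{equation}
where $B_n$ are the Bernoulli numbers.

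Next I would solve this ODE by Picard iteration, introducing a formal parameter $\lambda$ (writing $\mathcal A\mapsto\lambda\mathcal A$ and $\Omega=\sum_{j\geq 1}\lambda^j\Omega_j$) and matching powers of $\lambda$. Order $\lambda^1$ yields $\Omega_1(t)=\int_0^t\mathcal A(\tau)d\tau$; order $\lambda^2$, using $B_1=-1/2$, gives $\Omega_2(t)=\tfrac12\int_0^t d\tau_1\int_0^{\tau_1}d\tau_2\comm{\mathcal A(\tau_1),\mathcal A(\tau_2)}$, matching the final display of the lemma. Each higher $\Omega_j$ emerges as a rational-coefficient sum of iterated integrals on the ordered simplex $t\geq\tau_1\geq\cdots\geq\tau_j\geq 0$ of right-nested commutators of $\mathcal A$'s. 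To recast this into the symmetric form stated in the lemma I would partition permutations $\sigma$ of $\{1,\ldots,j\}$ by descent pattern and group the Picard contributions accordingly; the Bernoulli weights collapse (via the Klarsfeld-Oteo reorganisation) into the claimed factor $(-1)^{d_b(\sigma)}/[\,j^2\binom{j-1}{d_b(\sigma)}\,]$ multiplying the nested commutator with arguments permuted by $\sigma$. Anti-Hermiticity of each $\Omega_j$ when $\mathcal A$ is anti-Hermitian is then automatic: the commutator of anti-Hermitian operators is anti-Hermitian (since $\comm{A,B}^\dag=\comm{B^\dag,A^\dag}=\comm{-B,-A}=-\comm{A,B}$), and $\Omega_j$ is a \emph{real} linear combination of such nested commutators.

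Finally, for convergence under the hypothesis $I(t)\equiv\int_0^t\norm{\mathcal A(\tau)}d\tau\leq 1$, I would bound each $\Omega_j$ recursively from the Picard construction. Using $\norm{\comm{A,B}}\leq 2\norm{A}\norm{B}$ and induction on $j$, one establishes $\norm{\Omega_j(t)}\leq\gamma_j\,I(t)^j$ with constants $\gamma_j$ satisfying a generating-function equation read off from the Bernoulli series; the standard analysis (Moan-Niesen) shows the resulting generating function $\sum_j\gamma_j x^j$ has radius of convergence strictly greater than $1$, so $\sum_j\Omega_j(t)$ converges absolutely in operator norm whenever $I(t)\leq 1$. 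The hard part will be the combinatorial step: extracting the precise descent-weighted closed form from the Bernoulli-series Picard iteration, rather than merely \emph{some} iterated-commutator expansion. Existence of the exponential representation, convergence of the series, and preservation of anti-Hermiticity are comparatively routine once the correct recursion is in hand, which is why the classical references cited in the lemma are invoked for this combinatorial identification.
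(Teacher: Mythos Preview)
The paper does not prove this lemma at all: it is stated with citations to \cite{blanesMagnusExpansionIts2009a,Moan2008,Arnal_2018} and then immediately applied, with no argument given. So there is no ``paper's own proof'' to compare against.

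Your outline is the standard derivation found in those very references. The chain of steps---derivative-of-exponential identity $\Rightarrow$ Magnus ODE with Bernoulli weights $\Rightarrow$ Picard iteration in a formal parameter $\Rightarrow$ combinatorial reorganisation by descent pattern---is exactly the route taken in the Blanes et al.\ review and in Arnal et al.; the convergence argument via the generating-function recursion is Moan--Niesen, which indeed gives radius $\pi>1$, more than enough for the hypothesis $\int_0^t\norm{\mathcal A}\,d\tau\leq 1$. Your observation that the anti-Hermiticity claim is immediate from the reality of the coefficients and closure of anti-Hermitian operators under commutation is also correct. The one place you rightly flag as delicate is the Klarsfeld--Oteo-type identification of the descent-weighted closed form $(-1)^{d_b}/[j^2\binom{j-1}{d_b}]$ from the Bernoulli recursion; that is a genuine combinatorial computation, and it is precisely why the paper defers to the literature rather than reproducing it.
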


We now use \cref{lem:magnus} to rewrite \cref{eq:experrorallterms} with $\mathcal A(\tau) = -i[H + F(\tau)] = -i [H - iv_0 \tau_1 + \mathcal{F}(\tau_1)]$:
\begin{align} 
	S_\dt = \exp\left(\sum_{j = 1}^\infty \Omega_j(\dt)\right), 
\end{align}
where the first-order Magnus term is
\begin{align} 
	\Omega_1(\dt) &=  -i\int_0^\dt d\tau_1 \left[H - iv_0\tau_1+ \mathcal{F}(\tau_1) \right]
	= -i\dt \left[H - \frac{i}{2}v_0 \dt + \frac{1}{\dt}\int_0^\dt d\tau_1 \mathcal{F}(\tau_1)\right].
\end{align}
To bound the higher-order terms in the Magnus expansion, we first note that
\begin{align} 
	\normcomm{\mathcal A(\tau_1),\mathcal A(\tau_2)}
	&= \normcomm{
	H + F(\tau_1),
	H + F(\tau_2)
	}\\
	&\leq 
	2\bigg(\norm{H}\norm{F(\tau_2)}
	+\norm{F(\tau_1)}\norm H
	+\norm{F(\tau_1)}\norm{F(\tau_2)}\bigg)\\
	&
	\leq 2\bigg(
	\norm{H}\left(\alpha\tau_1 + \beta\tau_1^2\right)
	+\norm{H}\left(\alpha\tau_2 + \beta\tau_2^2\right)
	+\left(\alpha\tau_1 + \beta\tau_1^2\right)\left(\alpha\tau_2 + \beta\tau_2^2\right)
	\bigg)\\
	&\leq 2\left(
	2\norm{H}+\alpha\dt + \beta\dt^2\right)\left(\alpha\dt + \beta\dt^2\right)\\
	&\leq 4\left(
	\norm{H}+\alpha\dt + \beta\dt^2\right)\left(\alpha\dt + \beta\dt^2\right)
\end{align}
for all $\tau_1,\tau_2\leq \dt$.
Similarly, for higher-order nested commutators:
\begin{align} 
	\normcomm{\mathcal A({\tau_1}),\dots\comm{\mathcal A(\tau_{j-1}),\mathcal A(\tau_j)}\dots }
	&\leq 2^{j-2} \norm{\mathcal A({\tau_1})}\dots \norm{\mathcal A({\tau_{j-2}})} \norm{\comm{\mathcal A(\tau_{j-1}),\mathcal A(\tau_j)}}\\
	&\leq 2^j \left(
	\norm{H}+\alpha\dt + \beta\dt^2\right)^{j-1} \left(\alpha\dt + \beta\dt^2\right).
\end{align}
Using \cref{lem:magnus} and noting that there are $j!$ permutations for each $j$, we can crudely bound
\begin{align} 
	\norm{\Omega_j(\dt)} &\leq  \sum_{\sigma} \int_0^\dt d\tau_1 \int_0^{d\tau_{j-1}} d\tau_j \normcomm{\mathcal A({\tau_1}),\dots\comm{\mathcal A(\tau_{j-1}),\mathcal A(\tau_j)}\dots }\\
	&\leq j! \frac{\dt^j}{j!}2^j \left(
	\norm{H}+\alpha\dt + \beta\dt^2\right)^{j-1} \left(\alpha\dt + \beta\dt^2\right)\\
	&\leq (2\dt)^j \left(
	\norm{H}+\alpha\dt + \beta\dt^2\right)^{j-1} \left(\alpha\dt + \beta\dt^2\right)
\end{align}
for all $j\geq 2$.
Define
\begin{align} 
	\mathcal V(\dt) \equiv \frac{1}{\dt} \int_0^\dt d\tau_1 \mathcal{F}(\tau_1) +\frac{1}{\dt} \sum_{j=2}^\infty \Omega_j(\dt), 
\end{align}
we could write
\begin{align} 
	S_\dt = \exp\bigg[-i\dt \big(H  \underbrace{- \frac{i}{2}v_0 \dt + \mathcal V(\dt)}_{\equiv V}\big)\bigg]. 
\end{align}
It follows from the bounds on $\Omega_j$ above that
\begin{align} 
	\norm{\mathcal V(\dt)}
	&\leq \dt^2 \beta +  (\alpha+\beta \dt) 
	\sum_{j=2}^{\infty}(2\dt)^j\left(
	\norm{H}+\alpha\dt + \beta\dt^2
	\right)^{j-1}\\
	&\leq \dt^2 \beta + 4\dt^2 (\alpha+\beta \dt) \left(
	\norm{H}+\alpha\dt + \beta\dt^2
	\right)
	\sum_{j=0}^{\infty}(2\dt)^j\left(
	\norm{H}+\alpha\dt + \beta\dt^2
	\right)^{j}\\
	&\leq \dt^2 \beta + 8\dt^2 (\alpha+\beta \dt) \left(
	\norm{H}+\alpha\dt + \beta\dt^2
	\right),
\end{align}
where we have assumed $\dt\big(
	\norm{H}+\alpha\dt + \beta\dt^2
	\big)\leq 1/4$ so that the sum over $j$ in the second line converges.
	We note that this assumption also ensures that the Magnus expansion converges.
The bound states that $\norm{\mathcal V(\dt)}$ scales with $\dt$ as $\O{\dt^2}$.

Assuming $\beta \dt \leq \alpha$, $2\alpha \dt \leq \norm H$, and $8\dt \norm H\leq 1$, we get
\begin{align} 
	\norm{\mathcal V(\dt)} \leq \dt^2 \left(\beta + 32\alpha\norm H\right). 
\end{align}
This bound completes the proof of \cref{lem:Htilde}. Note that the constant prefactor of our bound may be further tightened by using a stronger version of \cref{lem:magnus}. Such an improvement may be especially useful for near-term implementations of quantum simulation, but a detailed discussion falls out of the scope of the current paper and will be left as a subject for future investigation.
\end{proof}
\section{Proof of \cref{lem:pf1finalbound}}\label{sec:thm1proof}

In this section, we provide more details on the proof of \cref{lem:pf1finalbound} for completeness.
Using the triangle inequality
\begin{align} 
	\epsilon & =  \norm{\prod_{k=1}^re^{-i (H+C_k V C_k) \dt} - e^{-iHt}}\nonumber\\
	&\leq   \norm{e^{-i\overbar H_\eff t} - e^{-iHt}}+ \norm{\prod_{k=1}^re^{-i (H+C_k V C_k) \dt} - e^{-i\overbar H_\eff t}}\\
	&\leq \norm{\overbar{V}} t+ \frac{2\xi\sqrt{m}(\norm{H}+\norm{V})\norm{V}t^2\log r}{r} \\
	&\leq \frac{t^2}{2r} \norm{\overbar {v_0}}+ \chi \frac{t^3}{r^2} + \frac{2\xi\sqrt{m}(\norm{H}+\frac{1}{2}\alpha \dt + \chi \dt^2)(\frac{1}{2}\alpha \dt + \chi \dt^2)t^2\log r}{r}.
\end{align}
Since $\chi = \beta + 32\alpha\norm H$, $\chi \dt = \beta \dt + 32\alpha \norm H\dt \leq 5\alpha$ (assuming $\beta \dt \leq \alpha$ and $8\norm H\dt \leq 1$).
Therefore, we could upper bound
\begin{align} 
	(\norm{H}+\frac{1}{2}\alpha \dt + \chi \dt^2)(\frac{1}{2}\alpha \dt + \chi \dt^2)
	< 6(\norm{H}+6\alpha \dt)\alpha \dt 
	\leq 24\norm{H}\alpha \frac tr, 
\end{align}
where we have also used the assumption that $2\alpha\dt\leq \norm H$.
Therefore, we have
\begin{align} 
	\epsilon \leq  \frac{t^2}{2r} \norm{\overbar {v_0}}+ \chi \frac{t^3}{r^2} + \underbrace{48\xi\sqrt{m}\alpha\norm H}_{\equiv \kappa}\frac{t^3\log r}{r^2}.
\end{align}
This completes the proof of \cref{lem:pf1finalbound}.

\end{widetext}
\end{document}